\newtheorem{theorem}{Theorem}[section]
\newtheorem{corollary}[theorem]{Corollary}
\newtheorem{lemma}[theorem]{Lemma}
\theoremstyle{definition}
\newtheorem{definition}[theorem]{Definition}
\theoremstyle{definition}
\newtheorem{assumption}[theorem]{Assumption}
\newtheorem{remark}[theorem]{Remark}
\newtheorem{example}[theorem]{Example}
\DeclareMathOperator*{\argmin}{arg\,min}
\newcommand\Tstrut{\rule{0pt}{2.6ex}}         % = `top' strut
\newcommand\Bstrut{\rule[-1.3ex]{0pt}{2.6ex}}   % = `bottom' strut
\title{An SMP-Based Algorithm for Solving the Constrained Utility Maximization Problem via Deep Learning}
\author{
 Kristof Wiedermann \vspace{0.2cm}\\
  Research Unit of Financial and Actuarial Mathematics\\
  Vienna University of Technology (TU Wien)\\
  Vienna, Austria \vspace{0.2cm}\\
  \texttt{kristof.wiedermann@tuwien.ac.at} \\
  %% examples of more authors
  %% \AND
  %% Coauthor \\
  %% Affiliation \\
  %% Address \\
  %% \texttt{email} \\
  %% \And
  %% Coauthor \\
  %% Affiliation \\
  %% Address \\
  %% \texttt{email} \\
  %% \And
  %% Coauthor \\
  %% Affiliation \\
  %% Address \\
  %% \texttt{email} \\
}
\begin{document}
\maketitle
\vspace{0.2cm}
\begin{abstract}
We consider the utility maximization problem under convex constraints with regard to theoretical results which allow the formulation of algorithmic solvers which make use of deep learning techniques.
In particular for the case of random coefficients, we prove a stochastic maximum principle (SMP), which also holds for utility functions $U$ with $\mathrm{id}_{\mathbb{R}^{+}} \cdot U'$ being not necessarily nonincreasing, like the power utility functions, thereby generalizing the SMP proved by Li and Zheng (2018). We use this SMP together with the strong duality property for defining a new algorithm, which we call deep primal SMP algorithm. Numerical examples illustrate the effectiveness of the proposed algorithm -- in particular for higher-dimensional problems and problems with random coefficients, which are either path dependent or satisfy their own SDEs. Moreover, our numerical experiments for constrained problems show that the novel deep primal SMP algorithm overcomes the deep SMP algorithm's (see Davey and Zheng (2021)) weakness of erroneously producing the value of the corresponding unconstrained problem. Furthermore, in contrast to the deep controlled 2BSDE algorithm from Davey and Zheng (2021), this algorithm is also applicable to problems with path dependent coefficients. As the deep primal SMP algorithm even yields the most accurate results in many of our studied problems, we can highly recommend its usage. Moreover, we propose a learning procedure based on epochs which improved the results of our algorithm even further. Implementing a semi-recurrent network architecture for the control process turned out to be also a valuable advancement.
\end{abstract}
\vspace{0.2cm}

% keywords can be removed
\keywords{Portfolio optimization under constraints, utility maximization problem, dual problem, deep learning, machine learning, stochastic maximum principle.}

\section{Introduction}
\label{section:introduction}
In mathematical finance, as well as in practical applications, it is of greatest interest to utilize the given freedom in a model, e.g. the (potentially constrained) choice of an investment strategy, in order to optimize the expected result. In the present paper we consider the utility maximization problem, where the quality of the terminal result, i.e. the portfolio value at terminal time, is measured by means of a strictly concave utility function $U$. Clearly, solving this infinite-dimensional problem is extremely difficult as one intends to maximize a functional over a space of progressively measurable processes. This is aggravated by the fact that, in practice, one usually has certain limitations in choosing the strategy. For example, selling stocks short is either limited or prohibited. We take this into consideration by requiring that the strategies have to take values in a closed, convex set $K$. However, the convexity of $K$ and the strict concavity of $U$ allow us to formulate a closely related problem by means of convex duality methods, the so-called dual problem (see \cite{labbe,zheng2}). This problem is potentially easier to solve. For example, the dual control is forced to be the zero process, if the primal problem is unconstrained. In the context of the formulation of our novel algorithm, the dual problem will mainly be used for deriving a high quality upper estimate for the true value of the primal problem. \vspace{0.3cm}\newline
There are two popular theoretical concepts which can be used for solving the aforementioned control problems: The dynamic programming approach (DP) in a Markovian setting and the stochastic maximum principle~(SMP), which is also applicable to problems with random coefficients. The first approach is centered around a nonlinear PDE, the so-called Hamilton-Jacobi-Bellman (HJB) equation, which is derived by varying the initial condition of the control problem, for which the Markovian nature of the problem is essential. In some cases, the value function can then be obtained by solving the HJB equation. The second concept is based on a necessary condition for an optimal control. It states that if a control is optimal, then it necessarily has to maximize a Hamiltonian-related function almost everywhere in the control argument, where the other arguments are given by the solution to an FBSDE system. Since we intend that our algorithm is also capable of solving non-Markovian problems, we choose the second path. The main idea, which then leads to our algorithmic solver, is to model all processes which only appear in the integrands of the FBSDE system associated with the stochastic maximum principle by means of neural networks for each point in a time discretization. This concept was first studied in \cite{jentzen2bsde} and \cite{jentzenbsde}, where the authors aim at solving FBSDE systems via deep learning. In \cite{zheng1}, this method was extended to controlled FBSDE systems which appear in the study of stochastic control problems. Hence, also the control process has to be modeled by means of neural networks. Since a neural network can be described by means of a finite-dimensional parameter vector, the algorithm, therefore, is required to solve a problem which is only finite-dimensional. The choice of neural networks is insofar appealing, as they have desirable approximation properties. Moreover, this puts the problem into the context of deep learning.\vspace{0.3cm}\newline
The paper is structured as follows. In Section \ref{section:UMaxandDual}, we introduce the constrained utility maximization problem and motivate the derivation of its dual problem. In Section \ref{section:chapter4}, we prove an SMP for the primal problem and cite the corresponding theorem for the dual problem. These results are then used for defining the novel deep primal SMP algorithm. Moreover, we provide numerical examples which contrast the performance of our novel algorithm with both algorithms defined in \cite{zheng1} for several specific utility maximization problems (see Section \ref{section:NumericalExperiments}). This includes high-dimensional problems and problems with random coefficients which are either path dependent or satisfy their own SDEs. Furthermore, Section \ref{section:refining} is devoted to a discussion of two potential refinements of the deep primal SMP algorithm. Finally, we review our results in Section \ref{section:conclusion}.\vspace{0.3cm}\newline
\textbf{Note.} This paper aims at summarizing and refining the main findings of the author's master's thesis, which can be accessed via the link provided in \cite{KW2021deep}. Hence, Sections 2-4 correspond to a revised and condensed version of the respective passages from \cite{KW2021deep}.

\section{The Utility Maximization Problem and Its Dual Problem}
\label{section:UMaxandDual}

The aim of the first section is to present the abstract market model we are going to use in the following. In contrast to the Markovian setting in Chapter 3 of \cite{pham}, we choose a rather general formulation here in order to allow the price processes to be non-Markovian. Moreover, we present the utility maximization problem and the associated dual problem while working with a definition of a utility function $U$ which is slightly stricter than the classical one. This restriction implies that the Legendre-Fenchel transform of $U$ satisfies several important properties (see Subsection \ref{subsection:dualproblem} below).

\subsection{The Underlying Market Model}
\label{subsection:marketmodel}

Let us fix a finite time horizon $T \in \mathbb{R}^{+}$. We consider a standard $m$-dimensional, $m \in \mathbb{N}^{+}$, Brownian motion~$B$ on a filtered probability space $(\Omega,\mathcal{F},\mathbb{F}\hspace{-0.1cm}=\hspace{-0.1cm}(\mathcal{F}_{t})_{t \in [0,T]},\mathbb{P})$, where we choose the filtration $\mathbb{F}$ as the natural filtration generated by $B$, completed with all subsets of null sets of $(\Omega,\mathcal{F},\mathbb{P})$. By construction, we obtain a complete filtration. Hence, as Brownian motion has independent future increments and continuous paths, we conclude the right-continuity of our filtration from a version of Blumenthal's zero-one law, which states that, under these assumptions, suitable sets from $\mathcal{F}_{t}$ and $\mathcal{F}_{t}^{+}$ only differ by a set of measure zero. Therefore, the filtration fulfills the so-called usual conditions. Since we have a Brownian filtration, we are allowed to apply the martingale representation theorem to local $\mathbb{F}$-martingales, which is of greatest importance for the derivation of the SMP in Theorem \ref{theorem:PrimalnonMarkov}.\vspace{0.3cm}\newline
Analogously to \cite{zheng1} and \cite{zheng2}, we consider a market consisting of $m$ stocks and one risk-free bond. Let $r:\Omega \times [0,T] \rightarrow \mathbb{R}$, $\mu:\Omega \times [0,T] \rightarrow \mathbb{R}^{m}$ and $\sigma:\Omega \times [0,T] \rightarrow \mathbb{R}^{m \times m}$ be $\mathbb{F}$-progressively measurable processes. For notational convenience, we are going to omit the argument $\omega$ in the following. Moreover, we have to formulate conditions which ensure the solvability of (B)SDEs that we are going to encounter in the following (e.g. \eqref{eq:stockmodel}, \eqref{eq:portfoliosde}, \eqref{eq:dualsde} and the FBSDE systems arising from both SMPs). On the one hand, we can assume that $r$, $\mu$ and $\sigma$ are uniformly bounded processes and $\sigma$ satisfies the strong non-degeneracy condition:
\begin{displaymath}
\exists k \in \mathbb{R}^{+},\hspace{0.05cm} \forall (y,t) \in \mathbb{R}^{m}\times [0,T]: \hspace{0.3cm} y^{\intercal}\hspace{0.05cm} \sigma(t)\hspace{0.05cm}\sigma^{\intercal}(t)\hspace{0.05cm} y \geq k \hspace{0.05cm}|y|^{2}, 
\end{displaymath}
where we denote the Euclidean norm by $|\cdot|$. According to Section $5.8$ of \cite{shreveBM}, this ensures that $\sigma(t)$ and its transpose are invertible for all $t\in [0,T]$ with the inverse matrices also being uniformly bounded. Alternatively, requiring the continuity of $r$, $\mu$ and $\sigma$ and the existence of $\sigma^{-1}$ is also sufficient for this purpose. Hence, the results of this paper are also applicable to the Vasicek model and Heston's stochastic volatility model, provided that the Feller condition is satisfied. \vspace{0.3cm}\newline
We can now define the dynamics of the risk-free bond $S_{0}$, i.e. the bank account, and the stocks $S_{i}$, $i\in~\hspace{-0.1cm}\{1,\dots,m\}$, via  
\begin{equation} \label{eq:stockmodel}
    \begin{aligned}
    dS_{0}(t) =& \hspace{0.1cm} S_{0}(t)r(t)\hspace{0.05cm}dt, \hspace{0.3cm} t\in [0,T],\hspace{0.3cm} S_{0}(0)=1, \hspace{0.3cm}\mbox{and} \\
    dS_{i}(t) =& \hspace{0.1cm} S_{i}(t) \big(\mu_{i}(t)\hspace{0.05cm}dt + \sigma_{i\cdot}(t)\hspace{0.05cm}dB(t) \big),\hspace{0.3cm} t\in [0,T], \hspace{0.3cm} S_{i}(0)>0.
    \end{aligned}
\end{equation}
Due to our assumptions on $r$, $\mu$ and $\sigma$, the SDEs in \eqref{eq:stockmodel} admit unique strong solutions, namely the corresponding stochastic exponentials.

\subsection{The Utility Maximization Problem in Portfolio Optimization}
\label{subsection:utilitymax}

Suppose that a market model as discussed above is given. At first, we aim at constructing a portfolio for a small investor with initial capital $x_{0}\in\mathbb{R}^{+}$. The notion of a ``small'' investor is insofar important as we can assume as a consequence that trades which are executed by our investor do not affect the stock prices. \vspace{0.3cm}\newline
Consider a progressively measurable process $\pi:\Omega \times [0,T] \rightarrow \mathbb{R}^{m}$. For notational convenience, we call the set of these processes $\mathcal{A}_{prog}$. We denote by $\pi_{i}(t)$, $i\in\{1,\dots,m\}$, the portion of wealth invested into the $i$th stock at time $t$. Since we focus on the utility maximization problem under constraints, we restrict ourselves to processes whose images lie in a closed, convex subset $K\subseteq\mathbb{R}^{m}$. The set $K$ is a priori given and we suppose that it contains the zero vector, which is equivalent to the admissibility of only holding the risk-free bond. As we shall see later, when checking the solvability of the SDE \eqref{eq:portfoliosde}, it is sufficient to require that $\pi$ is square-integrable with respect to the product measure $\mathbb{P}\otimes\lambda|_{[0,T]}$. Therefore, we choose to define the set of all admissible portfolio strategies like in \cite{zheng2} as
\begin{equation}\label{eq:admissiblestrategies}
    \mathcal{A} := \bigg\{ \pi\in\mathcal{A}_{prog} \hspace{0.2cm} \bigg| \hspace{0.2cm} \pi(t)\in K \hspace{0.2cm}\mbox{a.s. for a.e.} \hspace{0.2cm} t\in[0,T], \hspace{0.2cm}\mathbb{E}\bigg[ \int_{0}^{T} |\pi(t)|^2\hspace{0.05cm}dt \bigg]<\infty \bigg\}. 
\end{equation}
Clearly, requiring only almost sure finiteness for the integral in \eqref{eq:admissiblestrategies} is also sufficient with regards to the solvability of the SDEs below. However, considering only processes $\pi\in H^{2}(0,T;\mathbb{R}^{m})$ is quite common in control theory since in the proof of the dynamic programming principle for Markovian problems one wants to equip the space of admissible control processes with a metric. By requiring the portfolio to be self-financing, we can now define the dynamics of the associated wealth process $X^{\pi}$ for any given portfolio process $\pi\in\mathcal{A}$ as
\begin{displaymath}
    dX^{\pi}(t)= \sum_{i=1}^{m}\frac{X^{\pi}(t)\pi_{i}(t)}{S_{i}(t)}\hspace{0.05cm} dS_{i}(t) + \frac{X^{\pi}(t)\big(1-\sum_{j=1}^{m}\pi_{j}(t)\big)}{S_{0}(t)}\hspace{0.05cm} dS_{0}(t), \hspace{0.3cm} t\in [0,T].
\end{displaymath}
By plugging in according to \eqref{eq:stockmodel} and simplifying we obtain for every $t\in [0,T]$:
\begin{equation}\label{eq:portfoliosde}
  dX^{\pi}(t)=X^{\pi}(t)\big[ \big(r(t)+\pi^{\intercal}(t)\sigma(t)\theta(t) \big)\hspace{0.05cm}dt + \pi^{\intercal}(t)\sigma(t) \hspace{0.05cm}dB(t)\big], 
\end{equation}
with initial condition $X^{\pi}(0)=x_{0}$. Here, we defined $\theta(t)$ as the market price of risk
\begin{displaymath}
\theta(t):=\sigma^{-1}(t)\big(\mu(t)- (r(t),\dots,r(t))^{\intercal} \big).
\end{displaymath}
Of course, this process is also progressively measurable and either uniformly bounded or continuous thanks to the respective conditions on $r$, $\mu$ and $\sigma$. Due to the choice of $\mathcal{A}$ and the aforementioned properties, we obtain that there exists a unique strong solution $X^{\pi}$ to this SDE, namely the corresponding stochastic exponential. Note that the structure of \eqref{eq:portfoliosde} implies that the wealth process remains positive after starting in $x_{0}\in\mathbb{R}^{+}$. Hence, negative wealth levels, i.e. the investor's ruin, are a priori excluded, which simplifies our considerations below. \vspace{0.3cm}\newline
In the following, we give a definition of a utility function which requires stronger differentiability properties than the classical definition.
\begin{definition}\label{definition:utilityfunction}
    Let $U\in C^{2}(\mathbb{R}^{+})$ be a real-valued function which is strictly increasing and strictly concave. We then call $U$ a utility function. Furthermore, if the identities
    \begin{displaymath}
        \lim_{x\searrow 0} U'(x) = \infty \hspace{0.3cm}\mbox{and}\hspace{0.3cm} \lim_{x\nearrow \infty} U'(x) = 0
    \end{displaymath}
    hold, we say that $U$ satisfies the Inada conditions.
\end{definition}
\noindent Due to the strict concavity of $U$, we obtain from the inverse function theorem that $I:=(U')^{-1}$ is continuously differentiable, as well. This is an important observation with regard to Lemma \ref{lemma:transformierteallg} below. Furthermore, the Inada conditions guarantee that the domain of $I$ is $\mathbb{R}^{+}$. \vspace{0.3cm}\newline
The following additional assumption (cf. \cite{shreveMM}) proves to be helpful for deriving an SMP for the dual problem, which will be introduced in Subsection \ref{subsection:dualproblem}.
\begin{assumption} \label{assumption:assumpU}
  Let $U$ be a utility function according to Definition \ref{definition:utilityfunction} which satisfies the Inada conditions. We require that there exist constants $\beta\in (0,1)$ and $\gamma\in (1,\infty)$ such that 
  \begin{equation}\label{eq:assumpU}
\forall x \in \mathbb{R}^{+}: \hspace{0.3cm} \beta U'(x)\ge U'(\gamma x),
  \end{equation}
  and $\mathrm{id}_{\mathbb{R}^{+}} \cdot U'$ is nondecreasing, where we denote the identity function of $\mathbb{R}^{+}$ by $\mathrm{id}_{\mathbb{R}^{+}}$.
\end{assumption}
\noindent An application of the product rule shows that the Arrow-Pratt index of relative risk aversion of $U$ is, under the premise of the previous statement, bounded above by $1$. 
\begin{example}\label{example:utility}
    Consider $U_{1} :=\log$ and $U_{2,p} :=p^{-1} \cdot (\mathrm{id}_{\mathbb{R}^{+}})^{p}$ as two prominent examples of utility functions, namely the log utility function and the power utility function with parameter $p\in (0,1)$. Clearly, these functions satisfy Definition \ref{definition:utilityfunction}. Moreover, it is obvious that in both cases the Inada conditions are satisfied. Furthermore, we record the fact that $\mathbbm{1}_{\mathbb{R}^{+}}$ and $(\mathrm{id}_{\mathbb{R}^{+}})^{p}$, $p\in (0,1)$, are nondecreasing functions. Due to the simple form of $U_{1}'$, we can even choose for any $\beta\in (0,1)$ a constant $\gamma$ with the desired properties, namely $\beta^{-1}$. In the case of power utility, we obtain by $\beta^{q}$, where $q:=(p-1)^{-1} \in (-\infty,-1)$, likewise a constant~$\gamma$ for any $\beta \in (0,1)$, i.e. an even stronger property than required in \eqref{eq:assumpU}. Hence, the functions $U_{1}$ and $U_{2,p}$, $p\in (0,1)$, satisfy Assumption \ref{assumption:assumpU}. 
\end{example}
\noindent We are now in position to formulate the constrained utility maximization problem.
\begin{definition} \label{definition:Umaxproblem}
    Let $U$ be a utility function and $X^{\pi}$ the solution to the SDE \eqref{eq:portfoliosde} for $\pi \in \mathcal{A}$, where $\mathcal{A}$ is defined in \eqref{eq:admissiblestrategies}. We define the gain function, which maps every strategy $\pi \in \mathcal{A}$ to the expected utility of the portfolio value at time $T$, by
    \begin{displaymath}
    J(\pi):= \mathbb{E} [U(X^{\pi}(T))].
    \end{displaymath}
    Maximizing the expected utility corresponds to finding
    \begin{equation}\label{eq:Umaxproblem}
        V:= \sup_{\pi\in\mathcal{A}} J(\pi).
    \end{equation}
    A control $\pi^{*}\in\mathcal{A}$ is called optimal, if it attains the supremum in \eqref{eq:Umaxproblem}, i.e. $V=J(\pi^{*})$.
\end{definition}
\noindent Quite naturally, we are interested in the case where $V$ is a real number. Initially, one has to ensure that the expectation in the definition of $J(\pi)$ is well-defined. For example, we are going to require that even $U(X^{\pi}(T)) \in L^{1}$ holds for all admissible strategies in the setting of the SMP for the utility maximization problem (cf. Assumption \ref{assumption:primalintegrability} below).

\subsection{The Legendre-Fenchel Transform Leading to the Dual Problem}
\label{subsection:dualproblem}

The following considerations are similar to \cite{shrevemartdual91} and \cite{shreveMM} as we are interested in properties of the Legendre-Fenchel transform under the special circumstances of Definition \ref{definition:utilityfunction}. We refer to \cite{rockafellar} for a general theory under milder assumptions. 
\begin{definition}\label{definition:legendrefenchel}
    Let $U$ be a utility function as defined in Definition \ref{definition:utilityfunction}. Then the Legendre-Fenchel transform, $\widetilde{U}:\mathbb{R}^{+}\rightarrow\mathbb{R}$, is defined by 
    \begin{equation}\label{eq:legendrefenchel}
        \widetilde{U}(y):= \sup_{x\in\mathbb{R}^{+}} \big\{ U(x)-xy\big\}, \hspace{0.3cm} y\in\mathbb{R}^{+}.
    \end{equation}
\end{definition}
\noindent Hence, $\widetilde{U}$ maps every $y\in\mathbb{R}^{+}$ to the maximum signed distance between $U$ and a linear function starting in zero whose first derivative is equal to $y$. In the literature, the Legendre-Fenchel transform is usually defined for convex functions $f$ first via $\widetilde{f}(y):=\sup_{x\in D} \{ xy-f(x)\}$, where $D$ denotes the domain of $f$. However, applying this definition to the strictly convex, strictly increasing function $-U(-x)$, $x\in\mathbb{R}^{-}$, leads precisely to \eqref{eq:legendrefenchel}. The next lemma shows some general properties of $\widetilde{U}$.
\begin{lemma}\label{lemma:transformierteallg}
    Let $U$ be a utility function satisfying the Inada conditions according to Definition \ref{definition:utilityfunction} and $\widetilde{U}$ the corresponding Legendre-Fenchel transform. Then the following properties hold:
    \begin{enumerate}[(i)]
    \item $\widetilde{U}(y)=U(I(y))-yI(y)$, \hspace{0.3cm} $y\in\mathbb{R}^{+}$,
    \item $\widetilde{U} \in C^{2}(\mathbb{R}^{+})$, strictly decreasing, strictly convex,
    \item $U(x)=\widetilde{U}(U'(x))+xU'(x)=\inf_{y\in\mathbb{R}^{+}} \big\{ \widetilde{U}(y)+xy\big\}$,\hspace{0.3cm} $x\in\mathbb{R}^{+}$,
    \item $\widetilde{U}'(y)=-I(y)$, \hspace{0.3cm} $y\in\mathbb{R}^{+}$,
    \item $\widetilde{U}(0):= \lim_{y\searrow 0}\widetilde{U}(y)=\lim_{x\nearrow \infty} U(x) =: U(\infty)$ \hspace{0.2cm} and
    \item $\widetilde{U}(\infty):= \lim_{y\nearrow\infty}\widetilde{U}(y)=\lim_{x\searrow 0} U(x) =: U(0)$.
    \end{enumerate}
\end{lemma}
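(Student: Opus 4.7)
The plan is to prove the six properties in roughly the order they are stated, since the later items build on the earlier ones. The strategy throughout is to exploit that, for each fixed $y\in\mathbb{R}^{+}$, the map $x\mapsto U(x)-xy$ is strictly concave with derivative $U'(x)-y$, so the first-order condition $U'(x)=y$ determines a unique interior maximizer, which by the Inada conditions is $x=I(y)\in\mathbb{R}^{+}$.

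First I would establish (i) by carrying out exactly this one-variable optimization: strict concavity of $U$ rules out boundary maxima, and the Inada conditions guarantee the existence of an interior critical point $I(y)$, yielding $\widetilde U(y)=U(I(y))-yI(y)$. Next I would prove (iv) by differentiating this explicit formula; the crucial cancellation is $U'(I(y))\,I'(y)-yI'(y)=0$, which uses only the defining identity $U'\circ I=\mathrm{id}_{\mathbb{R}^{+}}$, not any regularity of $I$ beyond differentiability (which comes from the inverse function theorem applied to $U'$, valid since $U''<0$ and $U\in C^2$).

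For (ii), I would assemble the regularity and monotonicity statements from (iv): $\widetilde U'=-I$ with $I\in C^1(\mathbb{R}^{+})$ by the inverse function theorem, giving $\widetilde U\in C^2(\mathbb{R}^{+})$; strict monotonicity follows from $I>0$, and strict convexity from $\widetilde U''(y)=-I'(y)=-1/U''(I(y))>0$. For (iii), the first equality is immediate by substituting $y=U'(x)$ (so $I(y)=x$) into (i), and the second follows by repeating the first-order condition analysis for $y\mapsto \widetilde U(y)+xy$, whose derivative $-I(y)+x$ vanishes precisely at $y=U'(x)$ with positive second derivative.

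The main obstacle, and what I would treat most carefully, is the pair of limit statements (v) and (vi); in particular (vi) in the case $U(0)>-\infty$ requires a lower bound on $\widetilde U(y)$ as $y\nearrow\infty$ that does not blow up. For (v) I would combine the sup representation $\widetilde U(y)\ge U(x)-xy$ (fix $x$, let $y\searrow 0$, then take sup over $x$) with the upper bound $\widetilde U(y)=U(I(y))-yI(y)\le U(I(y))$ using $I(y)\nearrow\infty$ from the Inada condition. For (vi), the upper bound $\widetilde U(y)\le U(I(y))\to U(0)$ again uses $I(y)\searrow 0$; for the matching lower bound I would choose the specific test point $x_y:=1/y^{2}$ in the sup representation, giving $\widetilde U(y)\ge U(1/y^{2})-1/y$, so that $U(1/y^{2})\to U(0)$ while the penalty $1/y$ vanishes. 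This argument is uniform and covers both $U(0)\in\mathbb{R}$ and $U(0)=-\infty$, completing the proof.
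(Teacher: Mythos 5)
Your proposal is correct. Note that the paper does not actually prove Lemma \ref{lemma:transformierteallg}; it defers entirely to the cited references (Karatzas--Lehoczky--Shreve--Xu and Karatzas--Shreve), and your argument is a faithful, self-contained rendition of the standard proof given there: first-order optimization for (i) and (iii), differentiation of the explicit formula for (iv) and (ii), and a two-sided squeeze for the limits (v)--(vi), where your choice of test point $x_y = 1/y^{2}$ cleanly handles both the finite and infinite cases of $U(0)$. The only caveat worth flagging is that your computation $\widetilde U''(y) = -1/U''(I(y)) > 0$ (and the invocation of the inverse function theorem for $I$) tacitly requires $U'' < 0$ everywhere rather than mere strict concavity of a $C^{2}$ function; this is the same implicit strengthening the paper itself makes when it asserts $I \in C^{1}(\mathbb{R}^{+})$ after Definition \ref{definition:utilityfunction}, so it is consistent with the paper's conventions, but if one wanted to avoid it, strict monotonicity and strict convexity of $\widetilde U$ can instead be read off directly from $\widetilde U' = -I$ with $I$ positive and strictly decreasing.
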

\begin{proof}
     We refer to \cite[Section 4]{shrevemartdual91} and \cite[Section 3.4]{shreveMM} for proofs of the above statements.
\end{proof}
\noindent In the following lemma, we summarize the implications of Assumption \ref{assumption:assumpU} for $\widetilde{U}$ and $I=-\widetilde{U}'$, which is important for Theorem \ref{theorem:DualnonMarkov}.
\begin{lemma}\label{lemma:transformierteassumption}
    In addition to the assumptions in Lemma \ref{lemma:transformierteallg}, we require that $U$ satisfies Assumption \ref{assumption:assumpU}. Then we additionally obtain
    \begin{enumerate}[(i)]
        \item $x\mapsto xI(x)$ is nonincreasing on $\mathbb{R}^{+}$,
        \item $x\mapsto \widetilde{U}(\exp(x))$ is convex on $\mathbb{R}$ and
        \item $\exists \beta\in (0,1), \gamma\in (1,\infty):$ \hspace{0.1cm} \big($\forall x \in\mathbb{R}^{+}:$\hspace{0.2cm} $I(\beta x)\le \gamma I(x)$\big).
    \end{enumerate}
\end{lemma}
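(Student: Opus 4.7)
My plan is to prove the three items in order, noting that (ii) will follow directly from (i), and (iii) is essentially Assumption \ref{assumption:assumpU} transported through $I$.

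\textbf{For (i),} I would use the substitution $x = I(y)$, i.e.\ $y = U'(x)$. Since $U$ is strictly concave, $U'$ is strictly decreasing, so $I = (U')^{-1}$ is a strictly decreasing bijection $\mathbb{R}^{+}\to\mathbb{R}^{+}$ by the Inada conditions. Hence for $0 < y_{1} < y_{2}$, the corresponding preimages satisfy $x_{1} := I(y_{1}) > I(y_{2}) =: x_{2}$. Rewriting $y_{i} I(y_{i}) = x_{i} U'(x_{i})$ and invoking the monotonicity part of Assumption \ref{assumption:assumpU} (that $\mathrm{id}_{\mathbb{R}^{+}} \cdot U'$ is nondecreasing), I get $x_{1} U'(x_{1}) \geq x_{2} U'(x_{2})$, which translates to $y_{1} I(y_{1}) \geq y_{2} I(y_{2})$.

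\textbf{For (ii),} set $f(x) := \widetilde{U}(\exp(x))$ for $x \in \mathbb{R}$. By Lemma \ref{lemma:transformierteallg}(ii) and (iv), $\widetilde{U}\in C^{2}(\mathbb{R}^{+})$ with $\widetilde{U}'(y) = -I(y)$, so the chain rule gives $f'(x) = -I(\exp(x))\exp(x)$. Since $y \mapsto e^{x}$ is a strictly increasing bijection of $\mathbb{R}$ onto $\mathbb{R}^{+}$, part (i) shows that $\exp(x) \cdot I(\exp(x))$ is nonincreasing in $x$, hence $f'$ is nondecreasing, which gives the convexity of $f$.

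\textbf{For (iii),} I would simply apply $I$ to the key inequality in Assumption \ref{assumption:assumpU}. Let $\beta \in (0,1)$ and $\gamma \in (1,\infty)$ be as in that assumption, so that $\beta U'(x) \geq U'(\gamma x)$ for every $x \in \mathbb{R}^{+}$. Fix an arbitrary $y \in \mathbb{R}^{+}$ and set $x := I(y) \in \mathbb{R}^{+}$, so $U'(x) = y$. Plugging in yields $\beta y \geq U'(\gamma I(y))$. Because $I$ is strictly decreasing, applying $I$ to both sides reverses the inequality and gives $I(\beta y) \leq \gamma I(y)$, as required.

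The argument is essentially a routine exploitation of the decreasing bijection $I$ together with the two conditions imposed in Assumption \ref{assumption:assumpU}; I do not anticipate any genuine obstacle. The only minor subtlety is making sure the relevant quantities remain in $\mathbb{R}^{+}$ so that the substitutions $x = I(y)$ and $y = U'(x)$ are legitimate — this is guaranteed by the Inada conditions, which force $I$ to be defined on all of $\mathbb{R}^{+}$ and to take values in $\mathbb{R}^{+}$.
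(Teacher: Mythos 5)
Your proof is correct: all three items follow exactly as you argue, by transporting the two conditions of Assumption \ref{assumption:assumpU} through the strictly decreasing bijection $I=(U')^{-1}$ of $\mathbb{R}^{+}$ (whose domain and range are pinned down by the Inada conditions), with (ii) obtained from (i) via the monotonicity of $f'(x)=-\exp(x)\,I(\exp(x))$. The paper itself only cites \cite{shrevemartdual91} and \cite{shreveMM} for these facts, and your argument is essentially the standard one given there, so there is nothing to add.
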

\begin{proof}
    We refer to \cite[Section 4]{shrevemartdual91} and \cite[Section 3.4]{shreveMM} for proofs of the above statements.
\end{proof}
\noindent By multiplying in Lemma \ref{lemma:transformierteassumption} \textit{(i)} and \textit{(iii)} with $-1$ and Lemma \ref{lemma:transformierteallg} \textit{(iv)}, we obtain the reverse results for $\widetilde{U}'$. Finally, we briefly consider the utility functions discussed in Example \ref{example:utility} in the light of Lemma \ref{lemma:transformierteassumption}.
\begin{example}\label{example:utilitydual}
    Let $U_{1}$ and $U_{2,p}$, $p\in (0,1)$, be given as in Example \ref{example:utility}. One can easily see that
    \begin{displaymath}
        I_{1}(x)=\frac{1}{x} \hspace{0.3cm}\mbox{and}\hspace{0.3cm} I_{2,p}(x)=x^{\frac{1}{p-1}}, \hspace{0.3cm} x\in\mathbb{R}^{+},
    \end{displaymath}
    holds. Hence, by Lemma \ref{lemma:transformierteallg} \textit{(i)} we can write $\widetilde{U}_{1}$ and $\widetilde{U}_{2,p}$ explicitly as
    \begin{displaymath}
        \widetilde{U}_{1}(y)=-\log(y)-1 \hspace{0.3cm}\mbox{and}\hspace{0.3cm} \widetilde{U}_{2,p}(y)=\frac{1-p}{p} y^{\frac{p}{p-1}}, \hspace{0.3cm} y\in\mathbb{R}^{+}.
    \end{displaymath}
    We know from Example \ref{example:utility} that these utility functions satisfy Assumption \ref{assumption:assumpU}. Therefore, Lemma \ref{lemma:transformierteassumption} is applicable. Explicit calculation shows that the same constants $\beta$, $\gamma$ as in Example \ref{example:utility} work for the statement of Lemma \ref{lemma:transformierteassumption} \textit{(iii)}, as indicated by its proof.
\end{example}

Solving the constrained utility maximization problem explicitly can become very difficult in many instances. Therefore, formulating an accompanying problem, that can potentially be solved with less computational effort, would be highly favorable. We aim at obtaining an upper bound for the value of the original problem while using the Legendre-Fenchel transform $\widetilde{U}$ in the formulation of this new problem. Similarly to \cite{schachermayer}, we are interested in positive semimartingales $Y$ such that $X^{\pi} Y$ is a supermartingale for every $\pi \in \mathcal{A}$. By the definition of $\widetilde{U}(y)$, it is for every $x \in \mathbb{R}^{+}$ an upper bound for $U(x)-xy$. The supermartingale property then enables us to derive an upper bound which is independent of $\pi$. In the following, we are going to summarize the approach presented in \cite{zheng2}. \vspace{0.3cm}\newline
In accordance with the aforementioned idea, we choose the following ansatz to describe the dynamics of the desired one-dimensional process $Y$:
\begin{equation}\label{eq:ydualansatz}
dY(t)=Y(t) \big[ a(t)\hspace{0.05cm} dt + b^{\intercal}(t) \hspace{0.05cm} dB(t) \big], \hspace{0.3cm}  t\in [0,T],
\end{equation}
with initial condition $Y(0)=y$, where $y\in \mathbb{R}^{+}$. Here, the progressively measurable processes $a:\Omega~\hspace{-0.095cm}\times~\hspace{-0.095cm}[0,T] \rightarrow \mathbb{R}$ and $b:\Omega \times [0,T] \rightarrow \mathbb{R}^{m}$ have to be chosen such that $Y$ is uniquely given by the corresponding stochastic exponential and $X^{\pi} Y$ is a supermartingale for every $\pi \in \mathcal{A}$. The latter can be achieved by applying the integration by parts formula for continuous semimartingales to $X^{\pi} Y$ and ensuring that the resulting drift term is nonpositive. This condition leads to an upper bound for $a$ containing the support function $\delta_{K}$, which turns out to be the optimal choice for $a$ with regard to our intention of finding a minimal upper bound.
\begin{remark}\label{remark:supportfct}
We remember that the support function of $-K$, where $K$ is a closed, convex set, is defined via
\begin{equation}\label{eq:supportfunction}
    \delta_{K}(z)= \sup_{\pi\in K} \{ -\pi^{\intercal}z\}, \hspace{0.3cm} z\in \mathbb{R}^{m}.
\end{equation}
We observe that $\delta_{K}$ is positive homogeneous and subadditive since the respective properties also hold for the supremum. Therefore, it is a convex function, which is an important observation with regard to Theorem~\ref{theorem:DualnonMarkov} below. Furthermore, $\delta_{K}$ is nonnegative because we assumed $0\in K$. Hence, it is finite on the set $\widetilde{K}:=\{z\in\mathbb{R}^{m} \hspace{0.1cm}|\hspace{0.1cm} \delta_{K}(z)<\infty\}$.
\end{remark}
\noindent The processes arising from the ansatz given in \eqref{eq:ydualansatz} which satisfy the desired properties and whose drift is optimal with regard to finding a small upper bound are, therefore, precisely given by
\begin{equation}\label{eq:dualsde}
    dY^{(y,v)}(t)= -Y^{(y,v)}(t) \big[ \big(r(t)+\delta_{K}(v(t)) \big) \hspace{0.05cm} dt + \big( \theta(t)+\sigma^{-1}(t)v(t)\big)^{\intercal}\hspace{0.05cm} dB(t)\big], \hspace{0.3cm}  t\in [0,T],
\end{equation}
with initial condition $Y^{(y,v)}(0)=y$ (see \cite{zheng2} for details). Note that we hereby obtain a family of SDEs which can be described by means of the inital condition $y$ and a progressively measurable process $v$, as indicated by the notation. The process $v$, which will be called dual control process from now on, has to be chosen such that the above SDE admits a unique strong solution. This is clearly satisfied, if we introduce the set of all admissible pairs of initial values and dual control processes, similarly to \eqref{eq:admissiblestrategies}, via
\begin{equation}\label{eq:dualadmissible}
    \mathcal{D}:= \bigg\{ (y,v)\in \mathbb{R}^{+}\times\mathcal{A}_{prog} \hspace{0.2cm} \bigg| \hspace{0.2cm} \mathbb{E}\bigg[ \int_{0}^{T} \big[ |v(t)|^2+\delta_{K}(v(t))\big] \hspace{0.05cm}dt \bigg]<\infty \bigg\}.
\end{equation}
By the definition of $\widetilde{U}$ and the supermartingale property of $X^{\pi} Y^{(y,v)}$, we obtain
\begin{equation}\label{eq:dualisupperbound1}
    \mathbb{E}\big[ U(X^{\pi}(T))\big] \le \mathbb{E}\big[\widetilde{U}(Y^{(y,v)}(T)) \big] + \mathbb{E}\big[ X^{\pi}(T) Y^{(y,v)}(T) \big] \le \mathbb{E}\big[ \widetilde{U}(Y^{(y,v)}(T))\big] + x_{0}y.
\end{equation}
Since this is true for any $\pi\in\mathcal{A}$ and $(y,v)\in \mathcal{D}$, we can take the supremum on the left-hand side and the infimum on the right-hand side. This leads to
\begin{equation}\label{eq:dualisupperbound2}
    V= \sup_{\pi\in\mathcal{A}}\mathbb{E}\big[ U(X^{\pi}(T))\big] \le \inf_{(y,v)\in\mathcal{D}} \big\{\mathbb{E}\big[\widetilde{U}(Y^{(y,v)}(T))\big] + x_{0}y \big\},
\end{equation}
where $V$ is exactly the same as in \eqref{eq:Umaxproblem}. The right-hand side corresponds to the value of the dual problem:
\begin{definition}\label{definition:dualproblem}
    Let a constrained utility maximization problem according to Definition \ref{definition:Umaxproblem} be given. The value of the associated dual problem is defined by
    \begin{equation}\label{eq:dualproblem}
        \widetilde{V} := \inf_{(y,v)\in\mathcal{D}} \big\{\mathbb{E}\big[\widetilde{U}(Y^{(y,v)}(T))\big] + x_{0}y \big\},
    \end{equation}
    where $\mathcal{D}$ is given by \eqref{eq:dualadmissible} and $Y^{(y,v)}$ denotes the unique solution to the SDE \eqref{eq:dualsde} for a fixed tuple $(y,v)\in \mathcal{D}$. A pair $(y^{*},v^{*})\in \mathcal{D}$ is called optimal, if it attains the infimum on the right-hand side of \eqref{eq:dualproblem}.
\end{definition}
\noindent Note that we also have to optimize with respect to the initial condition here, whereas $x_{0}$ is a priori given in the original problem.

\section{A Stochastic Maximum Principle for Our General Non-Markovian Setting Paving the Way for the Novel Deep Primal SMP Algorithm}
\label{section:chapter4}
We consider the utility maximization problem and its dual problem, as introduced in Section \ref{section:UMaxandDual}, in its full generality here. Hence, the processes $r$, $\mu$ and $\sigma$ are allowed to be even path dependent. This implies that the dynamic programming approach, for which a controlled Markovian structure of the problem is essential, is not necessarily applicable. However, considering whether we can prove stochastic maximum principles for both problems might be worthwhile as there are several such results in the literature which also work for non-Markovian problems. For example, \cite{SMPkaratzas} finds an SMP for problems with random coefficients, where the drift and the diffusion coefficient of the state process $X^{\pi}$ are affine in $\pi$ and $X^{\pi}$. Unfortunately, this does not hold for our problem, as \eqref{eq:portfoliosde} shows. However, \cite{zheng2} shows that it is still possible to prove stochastic maximum principles for the utility maximization problem and its dual problem, respectively. Unfortunately, it turns out that, in contrast to the statement of \cite[Theorem 3.5]{zheng2}, the proof concept therein is not applicable to certain utility functions including the power utility functions (cf. Remark \ref{remark:primalZhengassumptionswrong}). We will show in Subsection \ref{subsection:nonmarkovprimalSMP} that proving an SMP for these utility functions is also possible, if one introduces alternative assumptions which differ from Assumption \ref{assumption:assumpU}.\vspace{0.3cm}\newline
As the generalized Hamiltonian is an essential ingredient with regards to the formulation of SMPs, we start this section by writing it down explicitly for both problems. The generalized Hamiltonian associated with the primal problem is for every $(t,x,\pi,y,z)\in [0,T]\times\mathbb{R}^{+}\times K \times\mathbb{R}\times \mathbb{R}^{m}$ given by
\begin{equation}\label{eq:generalizedHamutility}
    \mathcal{H}_{1}(t,x,\pi,y,z):= x\hspace{0.05cm}(r(t)+\pi^{\intercal}\sigma(t)\theta(t))\hspace{0.05cm} y + x\hspace{0.05cm}\pi^{\intercal}\sigma(t)\hspace{0.05cm} z.
\end{equation}
In the case of the dual problem, it is for every $(t,x,v,y,z)\in [0,T]\times\mathbb{R}^{+}\times \widetilde{K} \times\mathbb{R}\times \mathbb{R}^{m}$ given by
\begin{equation}\label{eq:generalizedHamdual}
    \mathcal{H}_{2}(t,x,v,y,z):= -x\hspace{0.05cm}(r(t)+\delta_{K}(v))\hspace{0.05cm} y  -x\hspace{0.05cm} \big(\theta(t)+ \sigma^{-1}(t)\hspace{0.05cm}v\big)^{\intercal}\hspace{0.05cm} z.
\end{equation}
Hence, it follows that the corresponding first-order adjoint equations are given by
\begin{equation}\label{eq:primaladjointeq}
    dp_{1}(t)= -\big[\big(r(t) + \pi^{\intercal}(t)\hspace{0.05cm}\sigma(t)\hspace{0.05cm}\theta(t) \big)\hspace{0.05cm} p_{1}(t) + \pi^{\intercal}(t)\hspace{0.05cm} \sigma(t)\hspace{0.05cm}q_{1}(t) \big]\hspace{0.05cm}dt + q_{1}^{\intercal}(t)\hspace{0.05cm}dB(t),\hspace{0.3cm} t\in [0,T],
\end{equation}
with associated terminal condition $p_{1}(T)= -U'\big(X^{\pi}(T) \big)$ for the utility maximization problem and
\begin{equation}\label{eq:dualadjointeq}
    dp_{2}(t) = \big[\big(r(t)+\delta_{K}(v(t))\big)\hspace{0.05cm} p_{2}(t)  + \big(\theta(t)+ \sigma^{-1}(t)\hspace{0.05cm}v(t)\big)^{\intercal}\hspace{0.05cm} q_{2}(t) \hspace{0.05cm}\big] dt + q_{2}^{\intercal}(t)\hspace{0.05cm} dB(t),\hspace{0.3cm} t\in [0,T],
\end{equation}
with terminal condition $p_{2}(T)= -\widetilde{U}'\big(Y^{(y,v)}(T) \big)$ for the accompanying dual problem. Note that we are going to require from a potential solution $(Y,Z)$ to a BSDE only that the integral processes occurring in the corresponding stochastic integral equation are well-defined, which is very common in this field (see also \cite{jentzen2bsde,jentzenbsde,horst,trivellato}, for example).\vspace{0.3cm}\newline
Finally, we want to motivate how a simplified version of the classical SMP for Markovian problems (see \cite[Section 3.3]{yongzhou}) can serve as the formal motivation for both SMPs to be discussed below. Note that this result also includes a second-order adjoint equation and the maximization condition concerns the concavity-adjusted generalized Hamiltonian. However, it is argued that the purpose of the adjustment of $\mathcal{H}$ in the maximum condition is ensuring that the adjusted function is concave in the control argument. Moreover, it is pointed out that if $\mathcal{H}$ is already concave in the control argument, then the second-order adjoint equation is superfluous. Clearly, this is the case for the utility maximization problem and its dual problem as \eqref{eq:generalizedHamutility}, \eqref{eq:generalizedHamdual} and the concavity of $-\delta_{K}$ show. The maximization condition will, therefore, correspond to maximizing $\mathcal{H}$ with respect to the control space. Hence, it seems worthwhile to study the feasibility of proving SMPs which are conceptually similar to the one in \cite[Section 3.3]{yongzhou} with $P\equiv 0$ while also allowing random coefficients. Furthermore, we will see that also the reverse implication of each SMP holds. We refer to Remarks \ref{remark:connectionclassSmpprimal} and \ref{remark:connectionclassSmpdual} for a discussion, why Theorems \ref{theorem:PrimalnonMarkov} and \ref{theorem:DualnonMarkov} correspond indeed to stochastic maximum principles.

\subsection{A Stochastic Maximum Principle for the Utility Maximization Problem}
\label{subsection:nonmarkovprimalSMP}
The aim of this section is deriving a result for characterizing the optimality of a control $\pi^{*}$ for the primal problem which corresponds conceptually to a stochastic maximum principle. As argued in the previous subsection and in Remark \ref{remark:primalZhengassumptionswrong} below, this essentially means adjusting and generalizing \cite[Theorem 3.5]{zheng2} such that also utility functions $U$ with $\mathrm{id}_{\mathbb{R}^{+}} \cdot U'$ being strictly increasing (like the power utility functions) are a priori not excluded from the SMP. This result will also allow us to formulate an algorithm made-to-measure for solving the primal problem directly (see Subsection \ref{subsection:primalDeepSMPAlgorithm} below). At first, we formulate technical assumptions which we are going to need in the following.
\begin{assumption}\label{assumption:primalintegrability}
    Suppose that we have $U\big( X^{\pi}(T)\big)\in L^{1}$ and $X^{\pi}(T)\hspace{0.05cm}U'\big( X^{\pi}(T)\big)\in L^{2}$ for every admissible strategy $\pi\in\mathcal{A}$, where $X^{\pi}$ denotes the unique solution to \eqref{eq:portfoliosde}.
\end{assumption}
\noindent Note that the second condition is trivially satisfied by $U=\log$. This assumption guarantees that there exists a solution to \eqref{eq:primaladjointeq}  which satisfies the prerequisites to be specified below. Hence, it is guaranteed that Theorem~\ref{theorem:PrimalnonMarkov} does not correspond to a statement on the empty set. We cite the following lemma (cf. \cite[Lemma 3.4]{zheng2}) including its proof as the explicitly constructed solution to the primal adjoint BSDE \eqref{eq:primaladjointeq} will play an important role in the proof of Theorem \ref{theorem:PrimalnonMarkov}.
\begin{lemma}\label{lemma:existencesolutionstoBSDEprimal}
    Consider $\pi\in\mathcal{A}$ and suppose that Assumption \ref{assumption:primalintegrability} is in place. Then there exists a pair $(p_{1},q_{1})$ solving \eqref{eq:primaladjointeq} such that $p_{1}\hspace{0.03cm}X^{\pi}$ is a martingale.
\end{lemma}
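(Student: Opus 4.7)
The plan is to construct the solution explicitly, exploiting the martingale requirement itself as the defining characterization of $p_1$. Since we want $p_1\hspace{0.03cm}X^{\pi}$ to be a martingale with terminal value $-X^{\pi}(T)\hspace{0.05cm}U'\bigl(X^{\pi}(T)\bigr)$, I would begin by defining
\begin{equation*}
M(t) := \mathbb{E}\bigl[-X^{\pi}(T)\hspace{0.05cm}U'\bigl(X^{\pi}(T)\bigr)\hspace{0.05cm} \big|\hspace{0.05cm} \mathcal{F}_{t}\bigr], \qquad t\in[0,T].
\end{equation*}
By Assumption \ref{assumption:primalintegrability}, the random variable $X^{\pi}(T)\hspace{0.05cm}U'\bigl(X^{\pi}(T)\bigr)$ lies in $L^{2}$, so $M$ is a square-integrable martingale. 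Since $\mathbb{F}$ is the Brownian filtration, the martingale representation theorem applies and yields a progressively measurable process $\widetilde{Z}$ with $\mathbb{E}\bigl[\int_{0}^{T}|\widetilde{Z}(s)|^{2}\hspace{0.05cm}ds\bigr]<\infty$ such that
\begin{equation*}
M(t) = M(0) + \int_{0}^{t} \widetilde{Z}^{\intercal}(s)\hspace{0.05cm}dB(s), \qquad t\in[0,T].
\end{equation*}

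Next, I would set $p_{1}(t) := M(t)/X^{\pi}(t)$, which is well defined because $X^{\pi}$ is strictly positive, and compute $dp_{1}$ via Itô's formula applied to the quotient. Using the dynamics \eqref{eq:portfoliosde} of $X^{\pi}$ together with $d\langle M, X^{\pi}\rangle(t) = X^{\pi}(t)\hspace{0.05cm}\widetilde{Z}^{\intercal}(t)\hspace{0.05cm}\sigma^{\intercal}(t)\hspace{0.05cm}\pi(t)\hspace{0.05cm}dt$, the diffusion part of $dp_{1}(t)$ collects to $\bigl(\widetilde{Z}^{\intercal}(t)/X^{\pi}(t) - p_{1}(t)\hspace{0.05cm}\pi^{\intercal}(t)\hspace{0.05cm}\sigma(t)\bigr)\hspace{0.05cm}dB(t)$, which motivates the definition
\begin{equation*}
q_{1}(t) := \widetilde{Z}(t)/X^{\pi}(t) - p_{1}(t)\hspace{0.05cm}\sigma^{\intercal}(t)\hspace{0.05cm}\pi(t).
\end{equation*}
Substituting this back into the drift and simplifying, the cross terms that involve $\pi^{\intercal}\sigma\sigma^{\intercal}\pi$ combine with the covariation correction to produce exactly $-\bigl[(r(t)+\pi^{\intercal}(t)\sigma(t)\theta(t))\hspace{0.05cm}p_{1}(t) + \pi^{\intercal}(t)\sigma(t)\hspace{0.05cm}q_{1}(t)\bigr]$, so \eqref{eq:primaladjointeq} is satisfied. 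The terminal condition is automatic: $p_{1}(T)\hspace{0.05cm}X^{\pi}(T) = M(T) = -X^{\pi}(T)\hspace{0.05cm}U'(X^{\pi}(T))$, hence $p_{1}(T)= -U'(X^{\pi}(T))$. Finally, $p_{1}\hspace{0.03cm}X^{\pi} \equiv M$ is a martingale by construction.

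The main obstacle is verifying well-definedness of the various stochastic integrals in the sense used in the paper. We only need that $\int_{0}^{t} q_{1}^{\intercal}(s)\hspace{0.05cm}dB(s)$ makes sense, i.e. that $\int_{0}^{T}|q_{1}(s)|^{2}\hspace{0.05cm}ds<\infty$ almost surely. This follows from the almost sure finiteness of $\int_{0}^{T}|\widetilde{Z}(s)|^{2}\hspace{0.05cm}ds$, the pathwise continuity and positivity of $X^{\pi}$ (which guarantees that $1/X^{\pi}$ is uniformly bounded on each $\omega$-path over the compact interval $[0,T]$), the boundedness (or at least path continuity) of $\sigma$ imposed in Subsection \ref{subsection:marketmodel}, and the admissibility condition $\pi\in\mathcal{A}$. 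Once these pathwise bounds are collected, the decomposition of $q_{1}$ into two square-integrable summands closes the argument, and the Itô calculation above confirms that $(p_{1}, q_{1})$ genuinely solves \eqref{eq:primaladjointeq}.
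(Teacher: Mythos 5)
Your construction is correct and follows essentially the same route as the paper's proof: conditional expectation of the terminal value, martingale representation, $p_{1}:=M/X^{\pi}$, Itô's formula, and the same choice of $q_{1}$. The only difference is that you spell out the well-definedness of the stochastic integrals in more detail, which the paper leaves implicit.
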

\begin{proof}
    We obtain by means of the martingale representation theorem that there exists a continuous version $V$ of the square-integrable martingale
    \begin{displaymath}
        M:=\big(\mathbb{E}\big[ -X^{\pi}(T)\hspace{0.05cm}U'\big( X^{\pi}(T)\big) \big|\mathcal{F}_{t}\big]\big)_{t\in[0,T]},
    \end{displaymath}
     which can be written as $M_{0}+W\bullet B$, where the process $W\in H^{2}(0,T;\mathbb{R}^{m})$ is unique. Clearly, $p_{1}:=V/X^{\pi}$ is well-defined and satisfies the terminal condition associated with \eqref{eq:primaladjointeq}. Hence, an application of It\^{o}'s formula to $p_{1}$ shows
    \begin{equation}\label{eq:existencesolutionstoBSDEprimalproof1}
        \begin{aligned}
            dp_{1}(t)=& \bigg[ -p_{1}(t) \big(r(t) + \pi^{\intercal}(t)\hspace{0.05cm}\sigma(t)\hspace{0.05cm}\theta(t) - \big| \pi^{\intercal}(t)\hspace{0.05cm}\sigma(t)\big|^{2} \big)  
            - \frac{\pi^{\intercal}(t)\hspace{0.05cm}\sigma(t)\hspace{0.05cm}W(t)}{X^{\pi}(t)}\bigg] \hspace{0.05cm}dt \\
            &+ \bigg[ \frac{W^{\intercal}(t)}{X^{\pi}(t)} - p_{1}(t)\hspace{0.05cm} \pi^{\intercal}(t)\hspace{0.05cm} \sigma(t) \bigg]\hspace{0.05cm}dB(t),
        \end{aligned}
    \end{equation}
    $t\in[0,T]$. Therefore, defining a process $q_{1}$ for $t\in [0,T]$ by
    \begin{equation*}
        q_{1}(t):=\frac{W(t)}{X^{\pi}(t)} - p_{1}(t)\hspace{0.05cm}  \sigma^{\intercal}(t)\hspace{0.05cm}\pi(t)
    \end{equation*}
    reduces \eqref{eq:existencesolutionstoBSDEprimalproof1} to \eqref{eq:primaladjointeq}. Hence, $(p_{1},q_{1})$ solves the primal adjoint equation and $p_{1}\hspace{0.03cm}X^{\pi}=V$ is indeed a martingale, which concludes the proof.
\end{proof}
\noindent Furthermore, we are going to need an easy, albeit important, result on the concavity/con\-vexity of the composition of concave and convex functions, if certain monotonicity properties hold. 
\begin{lemma}\label{lemma:compositionconvexconcave}
    Let $A,B\subseteq \mathbb{R}$ be convex sets and consider a convex, nonincreasing function $f:B\rightarrow \mathbb{R}$ and a concave function $g:A\rightarrow \mathbb{R}$, such that $g(A)\subseteq B$ holds. Then $f\circ g$ is a convex function on $A$. Moreover, if $f$ is nondecreasing and concave and $g$ is again concave, then $f\circ g$ is concave.
\end{lemma}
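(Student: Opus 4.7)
The plan is a straightforward two-inequality chain argument based directly on the definitions of convexity, concavity, and monotonicity. I would fix arbitrary $x_{1},x_{2}\in A$ and $\lambda\in [0,1]$ and note first that $\lambda x_{1}+(1-\lambda) x_{2}\in A$ by convexity of $A$, so both sides of the desired inequality make sense. The compatibility condition $g(A)\subseteq B$ ensures that all arguments of $f$ that appear below lie in its domain, which will be the only place I need this assumption.

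For the first assertion, concavity of $g$ yields
\begin{equation*}
g(\lambda x_{1}+(1-\lambda) x_{2})\ge \lambda g(x_{1})+(1-\lambda) g(x_{2}),
\end{equation*}
and here I use that $B$ is convex so that the right-hand side also belongs to $B$. Applying the nonincreasing function $f$ reverses the inequality, and then convexity of $f$ on the right-hand side gives
\begin{equation*}
f\bigl(g(\lambda x_{1}+(1-\lambda) x_{2})\bigr)\le f\bigl(\lambda g(x_{1})+(1-\lambda) g(x_{2})\bigr)\le \lambda f(g(x_{1}))+(1-\lambda) f(g(x_{2})),
\end{equation*}
which is precisely the definition of convexity of $f\circ g$ on $A$.

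For the second assertion, I would repeat the same chain but with the monotonicity of $f$ reversed: starting from the concavity inequality for $g$, the nondecreasing property of $f$ preserves the direction, and then concavity of $f$ provides the second inequality in the same direction, producing $(f\circ g)(\lambda x_{1}+(1-\lambda) x_{2})\ge \lambda (f\circ g)(x_{1})+(1-\lambda)(f\circ g)(x_{2})$. There is no real obstacle here; the only subtlety worth mentioning explicitly in the proof is the role of $g(A)\subseteq B$ together with convexity of $B$, which guarantees that the intermediate point $\lambda g(x_{1})+(1-\lambda) g(x_{2})$ to which $f$ is applied actually lies in the domain of $f$. Everything else is a direct transcription of the defining inequalities.
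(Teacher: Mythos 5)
Your proof is correct and follows essentially the same two-inequality chain as the paper: concavity of $g$ combined with the monotonicity of $f$ gives the first inequality, and convexity (resp. concavity) of $f$ gives the second, with the directions reversing in the second assertion. The only addition is your explicit remark on why $\lambda g(x_{1})+(1-\lambda)g(x_{2})$ lies in the domain of $f$, which the paper leaves implicit.
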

\begin{proof}
    Fix arbitrary numbers $a_{1}, a_{2}\in A$ and $\lambda\in [0,1]$. By using the concavity of $g$ and the monotonicity of~$f$ in $(1)$ and the convexity of $f$ in $(2)$ we obtain
    \begin{displaymath}
        f\big(g\big(\lambda a_{1} + (1-\lambda)a_{2}\big)\big) \stackrel{(1)}{\le}f\big( \lambda g ( a_{1}) + (1-\lambda) g(a_{2})\big) \stackrel{(2)}{\le} \lambda f(g(a_{1})) + (1-\lambda) f(g(a_{2})).
    \end{displaymath}
    The second claim follows as the relations $(1)$ and $(2)$ are precisely reversed in this case.
\end{proof}
\noindent Now we are in position to formulate the main result of this section, namely the SMP for the primal problem. 
\begin{theorem}\label{theorem:PrimalnonMarkov}
    Let a utility maximization problem in the setting of Definition \ref{definition:Umaxproblem} be given and suppose that Assumption \ref{assumption:primalintegrability} is satisfied. Consider an admissible control $\pi^{*}\in\mathcal{A}$. Let $X^{\pi^{*}}$, $p_{1}^{*}$ and $q_{1}^{*}$ denote processes which satisfy $X^{\pi^{*}}(0)=x_{0}$, $p_{1}^{*}(T)= -U'\big( X^{\pi^{*}}(T)\big)$ and for $t\in [0,T]$:
    \begin{equation}\label{eq:SDEsystemprimalnonMarkov}
        \begin{aligned}
        &dX^{\pi^{*}}(t)=X^{\pi^{*}}(t)\big[ \big(r(t)+\pi^{* \hspace{0.045cm}\intercal}(t)\hspace{0.05cm}\sigma(t)\hspace{0.05cm}\theta(t) \big)\hspace{0.05cm}dt + \pi^{* \hspace{0.045cm}\intercal}(t)\hspace{0.05cm}\sigma(t) \hspace{0.05cm}dB(t)\big], \\
        &dp_{1}^{*}(t)= -\big[\big(r(t) + \pi^{* \hspace{0.045cm}\intercal}(t)\hspace{0.05cm}\sigma(t)\hspace{0.05cm}\theta(t) \big)\hspace{0.05cm} p_{1}^{*}(t) + \pi^{* \hspace{0.045cm}\intercal}(t)\hspace{0.05cm} \sigma(t)\hspace{0.05cm}q_{1}^{*}(t) \big]\hspace{0.05cm}dt + q_{1}^{* \hspace{0.045cm}\intercal}(t)\hspace{0.05cm}dB(t),
        \end{aligned}
    \end{equation}
    such that $p_{1}^{*}\hspace{0.03cm}X^{\pi^{*}}$ is even a martingale. Moreover, suppose that at least one of the following statements is true:
    \begin{enumerate}[(i)]
        \item $\mathrm{id}_{\mathbb{R}^{+}} \cdot U'$ is a nonincreasing function on $\mathbb{R}^{+}$.
        \item Define $\Theta := \big\{(\overline{\pi}-\pi^{*})\mathbbm{1}_{C}\hspace{0.05cm} \big | \hspace{0.05cm} \overline{\pi}\in K, C \in \Sigma_{p}^{\mathbb{F},[0,T]}\big\}$, where $\Sigma_{p}^{\mathbb{F},[0,T]}$ denotes the progressive $\sigma$-algebra accompanying our filtered probability space and $\overline{\pi}$ also stands for the constant control process mapping to $\overline{\pi}$, for notational convenience. For every $\theta\in\Theta$, we have the uniform integrability of the family 
        \begin{displaymath}
            \big(\Delta_{\varepsilon}^{\theta}\big)_{\varepsilon\in (0,1)}:=\bigg( \frac{U\big(X^{\pi^{*}+\varepsilon \theta}(T)\big) - U\big( X^{\pi^{*}}(T)\big) }{\varepsilon}\bigg)_{\varepsilon\in (0,1)}.
        \end{displaymath}
        \item For every $\theta\in\Theta$, there exists a random variable $\xi_{\theta}$ with $(\xi_{\theta})^{-}\in L^{1}$ such that $\Delta_{\varepsilon}^{\theta}$ is (a.s.) bounded below by $\xi_{\theta}$ for every $\varepsilon\in (0,1)$.
    \end{enumerate}
    Then $\pi^{*}$ is optimal if and only if it holds almost surely for almost every $t\in [0,T]$:
    \begin{equation}\label{eq:PrimalnonMarkovcondition1}
        \pi^{*\hspace{0.045cm}\intercal}(t)\hspace{0.05cm} \big[-\sigma(t)\hspace{0.05cm}\big(p_{1}^{*}(t)\hspace{0.05cm}\theta(t)+q_{1}^{*}(t)\big) \big] = \sup_{\pi\in K} \Big\{\pi^{\intercal}\hspace{0.05cm} \big[-\sigma(t)\hspace{0.05cm}\big(p_{1}^{*}(t)\hspace{0.05cm}\theta(t)+q_{1}^{*}(t)\big) \big]\Big\}.
    \end{equation}
\end{theorem}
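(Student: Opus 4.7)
For the direction ``\eqref{eq:PrimalnonMarkovcondition1} implies optimality'', I would fix an arbitrary $\pi\in\mathcal{A}$ and combine concavity of $U$ with the terminal condition of the primal adjoint BSDE to write
\begin{equation*}
U(X^{\pi}(T))-U(X^{\pi^{*}}(T))\le -p_{1}^{*}(T)\bigl(X^{\pi}(T)-X^{\pi^{*}}(T)\bigr).
\end{equation*}
Applying the integration-by-parts formula to $p_{1}^{*}X^{\pi}$, using \eqref{eq:portfoliosde} and the adjoint equation \eqref{eq:primaladjointeq} evaluated at $\pi^{*}$, all $r$- and $\pi^{*}$-driven terms cancel and the drift collapses to $-X^{\pi}(t)(\pi(t)-\pi^{*}(t))^{\intercal}[-\sigma(t)(p_{1}^{*}(t)\theta(t)+q_{1}^{*}(t))]$, which the maximization condition and positivity of $X^{\pi}$ render nonnegative. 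Hence $p_{1}^{*}X^{\pi}$ is a local submartingale, and a standard localization combined with the hypothesized martingale property of $p_{1}^{*}X^{\pi^{*}}$ will give $\mathbb{E}[p_{1}^{*}(T)X^{\pi}(T)]\ge p_{1}^{*}(0)x_{0}=\mathbb{E}[p_{1}^{*}(T)X^{\pi^{*}}(T)]$; taking expectations in the concavity estimate then yields $J(\pi)\le J(\pi^{*})$.

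\textbf{Necessity --- variational setup.} For the converse I would write $\eta=(\bar{\pi}-\pi^{*})\mathbbm{1}_{C}\in\Theta$ (using $\eta$ instead of $\theta$ to avoid clashing with the market price of risk) and note that convexity of $K$ makes $\pi^{\varepsilon}:=\pi^{*}+\varepsilon\eta$ admissible for every $\varepsilon\in[0,1]$, so optimality of $\pi^{*}$ forces $\mathbb{E}[\Delta_{\varepsilon}^{\eta}]\le 0$. Standard stability for linear SDEs gives the pathwise derivative $\widetilde{Y}:=\lim_{\varepsilon\searrow 0}(X^{\pi^{\varepsilon}}-X^{\pi^{*}})/\varepsilon$, solving
\begin{equation*}
d\widetilde{Y}(t)=\widetilde{Y}(t)\bigl[(r(t)+\pi^{*\intercal}(t)\sigma(t)\theta(t))dt+\pi^{*\intercal}(t)\sigma(t)dB(t)\bigr]+X^{\pi^{*}}(t)\bigl[\eta^{\intercal}(t)\sigma(t)\theta(t)dt+\eta^{\intercal}(t)\sigma(t)dB(t)\bigr]
\end{equation*}
with $\widetilde{Y}(0)=0$, and a mean-value argument for $U$ identifies the pointwise a.s.\ limit $\Delta_{\varepsilon}^{\eta}\to -p_{1}^{*}(T)\widetilde{Y}(T)$.

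\textbf{The main obstacle: limit passage.} The decisive step is interchanging the $\varepsilon\searrow 0$ limit with expectation, and the three alternative hypotheses are precisely the three routes by which the author is able to do this. Under (ii) I would simply invoke Vitali's convergence theorem; under (iii) Fatou's lemma yields only $\liminf_{\varepsilon}\mathbb{E}[\Delta_{\varepsilon}^{\eta}]\ge -\mathbb{E}[p_{1}^{*}(T)\widetilde{Y}(T)]$, which is still enough since the left-hand side is $\le 0$; under (i) I would reproduce the Li--Zheng route, combining monotonicity of $x\mapsto xU'(x)$ with the explicit exponential representation of $X^{\pi^{\varepsilon}}/X^{\pi^{*}}$ to build a uniform-integrability estimate by hand. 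This is the technically hard point of the theorem: for power utilities $xU'(x)$ is strictly increasing, so (i) is unavailable, and supplying (ii)--(iii) as replacements is precisely the author's contribution over \cite{zheng2}.

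With $-\mathbb{E}[p_{1}^{*}(T)\widetilde{Y}(T)]\le 0$ in hand, I would apply It\^{o}'s product rule to $p_{1}^{*}\widetilde{Y}$; after localization to discard the stochastic integral in expectation, a drift computation mirroring the sufficiency calculation converts the bound into
\begin{equation*}
\mathbb{E}\Bigl[\int_{0}^{T}\mathbbm{1}_{C}(s)X^{\pi^{*}}(s)(\bar{\pi}-\pi^{*}(s))^{\intercal}\bigl[-\sigma(s)(p_{1}^{*}(s)\theta(s)+q_{1}^{*}(s))\bigr]\,ds\Bigr]\le 0.
\end{equation*}
Since $C$ ranges over the progressive $\sigma$-algebra, the integrand must be $\le 0$ $\mathbb{P}\otimes\lambda$-a.e.; using $X^{\pi^{*}}>0$ and letting $\bar{\pi}$ range over a countable dense subset of $K$, continuity in $\bar{\pi}$ extends the inequality to all $\bar{\pi}\in K$ and yields \eqref{eq:PrimalnonMarkovcondition1}.
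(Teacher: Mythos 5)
Your overall architecture coincides with the paper's: the same convex perturbation, the same three-way case split for interchanging limit and expectation (concavity and monotone convergence under (i), Vitali under (ii), Fatou under (iii)), the same null-set argument over a countable dense subset of $K$, and the same submartingale-plus-concavity route for sufficiency. Two steps, however, would fail as literally stated. First, in the sufficiency part, ``local submartingale plus standard localization'' does not by itself yield $\mathbb{E}[p_{1}^{*}(T)X^{\pi}(T)]\ge p_{1}^{*}(0)x_{0}$: a local submartingale can have strictly decreasing expectation (any strict local martingale, e.g.\ the inverse three-dimensional Bessel process, is in particular a local submartingale). What rescues the argument is the strict negativity of $p_{1}^{*}X^{\pi}$, which the paper first establishes by observing that the assumed martingale property and the terminal condition force $p_{1}^{*}X^{\pi^{*}}$ to coincide with the conditional-expectation martingale of $-X^{\pi^{*}}(T)U'(X^{\pi^{*}}(T))<0$ from Lemma \ref{lemma:existencesolutionstoBSDEprimal}; it then writes $p_{1}^{*}X^{\pi}=p_{1}^{*}(0)\hspace{0.05cm}x_{0}\hspace{0.05cm}\mathcal{E}(M)\exp(A)$ with $\exp(A)$ nonincreasing and bounded by $1$ and $\mathcal{E}(M)$ a nonnegative supermartingale, which gives the true submartingale property. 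You need to supply this sign information before the localization step can be closed.

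Second, in the necessity part you defer localization to the It\^{o} step (``after localization to discard the stochastic integral in expectation''). This does not close: localizing yields $\mathbb{E}\big[(p_{1}^{*}\widetilde{Y})(\tau_{n})\big]=\mathbb{E}\big[\int_{0}^{\tau_{n}}\mathrm{drift}\;ds\big]$, whereas your sign information $\mathbb{E}[p_{1}^{*}(T)\widetilde{Y}(T)]\ge 0$ lives at the terminal time, and connecting the two requires uniform integrability of $\big((p_{1}^{*}\widetilde{Y})(\tau_{n})\big)_{n}$, which is not available; moreover, without any stopping the integrability of $p_{1}^{*}(T)\widetilde{Y}(T)$ itself is unclear. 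The paper instead builds the stopping times into the perturbation, replacing $\varepsilon(\pi-\pi^{*})$ by $\mathbbm{1}_{[0,\tau_{n}^{\pi}]}\hspace{0.05cm}\varepsilon\hspace{0.05cm}(\pi-\pi^{*})$ (still admissible, and still of the form required by Conditions \textit{(ii)} and \textit{(iii)} since $\mathbbm{1}_{[0,\tau_{n}^{\pi}]}\mathbbm{1}_{C}$ is again a progressively measurable indicator), so that the first-order term is $-p_{1}^{*}(T)X^{\pi^{*}}(T)H_{\pi}^{\tau_{n}^{\pi}}(T)$ with $H_{\pi}^{\tau_{n}^{\pi}}$ bounded by $n$, and the stochastic integrals in $d\big(p_{1}^{*}X^{\pi^{*}}H_{\pi}^{\tau_{n}^{\pi}}\big)$ are genuine martingales on all of $[0,T]$ by the very definition of $\tau_{n}^{\pi}$ and because the integrand multiplying $H_{\pi}^{\tau_{n}^{\pi}}$ is the $H^{2}$-process $W^{*}$ from Lemma \ref{lemma:existencesolutionstoBSDEprimal}. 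Everything else in your sketch matches the paper's proof.
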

\begin{proof}
    At first, we recall that Lemma \ref{lemma:existencesolutionstoBSDEprimal} guarantees the existence of processes $X^{\pi^{*}}$, $p_{1}^{*}$ and $q_{1}^{*}$ which meet the requirements. For reasons of clarity, we will subdivide the proof into five steps. At first, we show that \eqref{eq:PrimalnonMarkovcondition1} is necessarily satisfied by an optimal control. \vspace{0.2cm}\newline
    \textit{Step 1. Necessary Condition: \eqref{eq:PrimalnonMarkovcondition1} is satisfied: $\Phi_{\pi}(\varepsilon):=U\big(X^{\pi^{*}+ \varepsilon(\pi-\pi^{*})}(T) \big)$, $\varepsilon\in [0,1]$, is right differen-\\ 
    \noindent\hspace*{1.1cm} tiable at $\varepsilon=0$ and the corresponding derivative is for every $\pi\in\mathcal{A}$ a.s. given by the random variable\\
    \noindent\hspace*{1.1cm} $U'\big(X^{\pi^{*}}(T)\big) \hspace{0.05cm} X^{\pi^{*}}(T)\hspace{0.05cm} H_{\pi}(T)$.}\vspace{0.1cm}\newline
    Let $\pi^{*}\in\mathcal{A}$ be an optimal control and consider an arbitrary, but fixed control $\pi\in\mathcal{A}$. Since $K$ is a convex set, it follows that the convex combination $\pi^{*}+ \varepsilon(\pi-\pi^{*})$ is for every $\varepsilon\in [0,1]$ again an admissible control. Hence, $\Phi_{\pi}$ is well-defined for almost every $\omega\in\Omega$. Moreover, due to the structure of \eqref{eq:portfoliosde}, $X^{\pi^{*}+ \varepsilon(\pi-\pi^{*})}(T)$ is explicitly given by
    \begin{equation}\label{eq:primalnonmarkovproof1}
        \begin{aligned}
         X^{\pi^{*}+ \varepsilon(\pi-\pi^{*})}(T)=x_{0}\hspace{0.03cm} \exp\bigg( &\int_{0}^{T}  \big(r(t)+ \big(\pi^{*}(t)+ \varepsilon(\pi(t)-\pi^{*}(t)) \big)^{\intercal}\hspace{0.05cm} \sigma(t)\hspace{0.05cm}\theta(t)\big) \hspace{0.05cm} dt \\
            &- \frac{1}{2}\int_{0}^{T} \big| \big( \pi^{*}(t)+ \varepsilon(\pi(t)-\pi^{*}(t))\big)^{\intercal}\hspace{0.05cm}\sigma(t) \big|^{2} \hspace{0.05cm} dt \\
            &+ \int_{0}^{T}\big( \pi^{*}(t)+ \varepsilon(\pi(t)-\pi^{*}(t))\big)^{\intercal}\hspace{0.05cm}\sigma(t)\hspace{0.05cm} dB(t) \bigg).   
        \end{aligned}
    \end{equation}
    Fix $\delta\in\mathbb{R}^{-}$. We consider a function $g_{\pi}$ on $(\delta,1]$ which is defined by $\log\big( X^{\pi^{*}+ \varepsilon(\pi-\pi^{*})}(T)\big)=:g_{\pi}(\varepsilon)$ for $\varepsilon\in [0,1]$ and for the remaining points of the domain by the logarithm of the right-hand side of \eqref{eq:primalnonmarkovproof1}, which is also meaningful for negative $\varepsilon$. As $g_{\pi}$ is a polynomial in $\varepsilon$, it follows immediately that it is differentiable with respect to $\varepsilon$. Its derivative at $\varepsilon=0$ is given by $H_{\pi}(T)$, where the process $H_{\pi}$ is defined as
    \begin{equation}\label{eq:primalnonmarkovproof2}
        \begin{aligned}
            H_{\pi}:=&\int_{0}^{\cdot} \big[\big( \pi(t)-\pi^{*}(t)\big)^{\intercal}\hspace{0.05cm}\sigma(t)\hspace{0.05cm}\theta(t) - \big( \pi(t)-\pi^{*}(t)\big)^{\intercal}\hspace{0.05cm}\sigma(t)\hspace{0.05cm}\sigma^{\intercal}(t)\hspace{0.05cm}\pi^{*}(t)\big] \hspace{0.05cm}dt \\
            &+ \int_{0}^{\cdot} \big( \pi(t)-\pi^{*}(t)\big)^{\intercal}\hspace{0.05cm}\sigma(t) \hspace{0.05cm}dB(t).
        \end{aligned}
    \end{equation}
    Therefore, we obtain from the chain rule and $\Phi_{\pi}=(U\circ \exp \circ \hspace{0.05cm}g_{\pi})\big|_{[0,1]}$:
    \begin{equation}\label{eq:primalnonmarkovproof3}
        \lim_{\varepsilon\searrow 0}  \Delta_{\varepsilon}\Phi_{\pi}= U'\big(X^{\pi^{*}}(T)\big) \hspace{0.05cm} X^{\pi^{*}}(T)\hspace{0.05cm} H_{\pi}(T), \hspace{0.3cm}\mbox{a.s.},
    \end{equation}
    where $\Delta_{\varepsilon}\Phi_{\pi}$ denotes the difference quotient of $\Phi_{\pi}$ over $[0,\varepsilon]$. It is important to place emphasis on the fact that, in contrast to the situation for the dual problem, where $\log\big( Y^{(y^{*},v^{*}+ \varepsilon(v-v^{*}))}(T) \big)$ contains $\delta_{K}(v^{*}+\varepsilon(v-v^{*}))$ in the integrand, $g_{\pi}$ is already differentiable with respect to $\varepsilon$. Hence, there is no necessity for searching for a converging estimate as in the proof of Theorem \ref{theorem:DualnonMarkov}. \vspace{0.2cm}\newline
    \textit{Step 2. Necessary Condition: \eqref{eq:PrimalnonMarkovcondition1} is satisfied: Localization by appropriate stopping times.}\vspace{0.1cm}\newline
    The aim of this step is finding a sequence of stopping times converging almost surely to $T$ which guarantees on the one hand that $H_{\pi}$ is bounded and on the other hand that a stochastic integral process appearing in \textit{Step~4} is even a true martingale up to each of these stopping times. For every $\pi\in\mathcal{A}$, we define a sequence of stopping times $(\tau_{n}^{\pi})_{n\in\mathbb{N}}$ via
    \begin{equation}\label{eq:primalnonmarkovproof4}
    \begin{aligned}
        \tau_{n}^{\pi}:=& \inf\bigg\{ t\ge 0 \hspace{0.1cm}:\hspace{0.1cm} \bigg| \int_{0}^{t} p_{1}^{*}(s)\hspace{0.05cm}X^{\pi^{*}}(s) \hspace{0.05cm} \big( \pi(s)-\pi^{*}(s)\big)^{\intercal}\hspace{0.05cm}\sigma(s)\hspace{0.05cm} dB(s) \bigg|\ge n  \bigg\}\\
        &\wedge\hspace{0.1cm} \inf\big\{ t\ge 0 \hspace{0.1cm}:\hspace{0.1cm}  |H_{\pi}(t)|\ge n\big\} \hspace{0.1cm}\wedge\hspace{0.1cm} T,
    \end{aligned}
    \end{equation}
    for $n\in\mathbb{N}$. As the processes within the norm $|\cdot |$ are continuous semimartingales under both sets of assumptions on $r$, $\mu$ and $\sigma$ proposed in Subsection \ref{subsection:marketmodel} (i.e. uniform boundedness and strong non-degeneracy as well as continuity and existence of $\sigma^{-1}$), we can conclude that $\tau_{n}^{\pi}\nearrow T$ holds a.s. as $n\rightarrow\infty$.\newline
    Fix $n\in\mathbb{N}$. Clearly, $\pi^{*}+\mathbbm{1}_{[0,\tau_{n}^{\pi}]}\hspace{0.05cm}\varepsilon\hspace{0.05cm} (\pi-\pi^{*})$ defines an admissible control as well, since it corresponds pointwise to convex combinations of elements of $K$. Hence, the arguments presented in \textit{Step 1} are also applicable, if we replace $\pi^{*}+ \varepsilon(\pi-\pi^{*})$ with the stopped control process. Note that the derivative at $\varepsilon=0$ of the logarithm of the adjusted right-hand side of \eqref{eq:primalnonmarkovproof1} is exactly given by $H_{\pi}^{\tau_{n}^{\pi}}(T)$ since the occurring integrands are, in comparison with the original result, precisely multiplied by $\mathbbm{1}_{[0,\tau_{n}^{\pi}]}$. Moreover, we define the function $\Phi_{\pi}^{n}(\varepsilon) := U\big(X^{\pi^{*}+\mathbbm{1}_{[0,\tau_{n}^{\pi}]}\hspace{0.05cm}\varepsilon\hspace{0.05cm} (\pi-\pi^{*})}(T)\big)$, $\varepsilon\in [0,1]$, and
    \begin{equation}\label{eq:primalnonmarkovproof5}
        \Delta_{\varepsilon}\Phi_{\pi}^{n}:= \frac{U\big(X^{\pi^{*}+\mathbbm{1}_{[0,\tau_{n}^{\pi}]}\hspace{0.05cm}\varepsilon\hspace{0.05cm} (\pi-\pi^{*})}(T)\big)- U\big(X^{\pi^{*}}(T)\big)}{\varepsilon},
    \end{equation}
    for each $\varepsilon\in (0,1]$. Hence, it follows analogously to our previous considerations for any $n\in\mathbb{N}$:
    \begin{equation}\label{eq:primalnonmarkovproof6}
        \lim_{\varepsilon\searrow 0}  \Delta_{\varepsilon}\Phi_{\pi}^{n}= U'\big(X^{\pi^{*}}(T)\big) \hspace{0.05cm} X^{\pi^{*}}(T)\hspace{0.05cm} H_{\pi}^{\tau_{n}^{\pi}}(T), \hspace{0.3cm}\mbox{a.s.}
    \end{equation}
    %\vspace{0.2cm}\newline
    \textit{Step 3. Necessary Condition: \eqref{eq:PrimalnonMarkovcondition1} is satisfied: Proving $\mathbb{E}\big[ U'\big(X^{\pi^{*}}(T)\big) \hspace{0.05cm} X^{\pi^{*}}(T)\hspace{0.05cm} H_{\pi}^{\tau_{n}^{\pi}}(T) \big]\le 0$ under the\\
    \noindent\hspace*{1.1cm} premise that at least one of the Conditions (i), (ii) and (iii) holds.}\vspace{0.1cm}\newline
    \textit{(i)}: At first, we notice that $g_{\pi}^{n}(\varepsilon):=\log \big( X^{\pi^{*}+\mathbbm{1}_{[0,\tau_{n}^{\pi}]}\hspace{0.05cm}\varepsilon\hspace{0.05cm} (\pi-\pi^{*})}(T)\big)$, $\varepsilon\in [0,1]$, is a concave function for almost every $\omega\in\Omega$ (cf. an adjusted version of \eqref{eq:primalnonmarkovproof1}). Furthermore, it is an immediate consequence of Condition~\textit{(i)} that $\exp\hspace{0.05cm}\cdot\hspace{0.05cm} (U'\circ \exp)$ is nonincreasing as well. Additionally, we have that $U\circ \exp$ is a nondecreasing function on $\mathbb{R}$. Hence, it follows that $U\circ \exp$ is a concave and nondecreasing function. Lemma \ref{lemma:compositionconvexconcave}, therefore, guarantees that $\Phi_{\pi}^{n}=U\circ \exp \circ \hspace{0.05cm}g_{\pi}^{n}$ is a concave function. This observation shows that $(\Delta_{\varepsilon}\Phi_{\pi}^{n})_{\varepsilon\in (0,1]}$ is nonincreasing. Hence, this family is almost surely bounded from below by the integrable (cf. Assumption~\ref{assumption:primalintegrability}) random variable $U\big(X^{\pi^{*}+\mathbbm{1}_{[0,\tau_{n}^{\pi}]}\hspace{0.05cm} (\pi-\pi^{*})}(T)\big)- U\big(X^{\pi^{*}}(T)\big)$. Therefore, we obtain from the optimality of $\pi^{*}$, the monotone convergence theorem and \eqref{eq:primalnonmarkovproof6}:
    \begin{equation}\label{eq:primalnonmarkovproof31}
        0\ge \lim_{\varepsilon\searrow 0}  \mathbb{E}\big[\Delta_{\varepsilon}\Phi_{\pi}^{n}\big] =\mathbb{E}\big[ U'\big(X^{\pi^{*}}(T)\big) \hspace{0.05cm} X^{\pi^{*}}(T)\hspace{0.05cm} H_{\pi}^{\tau_{n}^{\pi}}(T) \big],
    \end{equation}
    for every $n\in\mathbb{N}$. Clearly, this implies
    \begin{equation}\label{eq:primalnonmarkovproof3last}
        \mathbb{E}\big[ U'\big(X^{\pi^{*}}(T)\big) \hspace{0.05cm} X^{\pi^{*}}(T)\hspace{0.05cm} H_{\pi}^{\tau_{n}^{\pi}}(T) \big] \le 0,
    \end{equation}
    where the expectation on the left-hand side is finite due to \eqref{eq:primalnonmarkovproof4} and Assumption \ref{assumption:primalintegrability}.\newline
    \textit{(ii)}: As we shall see in \textit{Step 4} below, it is sufficient to consider controls of the form $\pi = \overline{\pi}\hspace{0.05cm}\mathbbm{1}_{C} + \pi^{*}\hspace{0.05cm}\mathbbm{1}_{C^{c}}$ with a constant control $\overline{\pi}$ and $C\in \Sigma_{p}^{\mathbb{F},[0,T]}$. Hence, we restrict ourselves to these controls in the following. For a control $\pi$ with the above structure, we obtain $\mathbbm{1}_{[0,\tau_{n}^{\pi}]}\hspace{0.05cm}(\pi-\pi^{*})=\mathbbm{1}_{[0,\tau_{n}^{\pi}]}\hspace{0.05cm}\mathbbm{1}_{C}\hspace{0.05cm}(\overline{\pi}-\pi^{*})$. We notice that $\mathbbm{1}_{[0,\tau_{n}^{\pi}]}$ is progressively measurable because its paths are left-continuous and $\{\omega\in\Omega \hspace{0.1cm}|\hspace{0.1cm} \mathbbm{1}_{[0,\tau_{n}^{\pi}(\omega)]}(t)=0\}=\{\tau_{n}^{\pi}<t\}\in\mathcal{F}_{t}$ holds for every $t\in [0,T]$ as $\tau_{n}^{\pi}$ is in particular a weak stopping time. Hence, it follows from Condition \textit{(ii)} that $(\Delta_{\varepsilon}\Phi_{\pi}^{n})_{\varepsilon\in (0,1)}$ is uniformly integrable for every $n\in\mathbb{N}$. As $\mathbb{P}$ is in particular a finite measure, the convergence in \eqref{eq:primalnonmarkovproof6} also holds in probability. Combining this with the optimality of $\pi^{*}$ and Vitali's convergence theorem proves the validity of \eqref{eq:primalnonmarkovproof31} and \eqref{eq:primalnonmarkovproof3last} also in this case. \newline
    \textit{(iii)}: We recall from the previous paragraph that $\mathbbm{1}_{[0,\tau_{n}^{\pi}]}\hspace{0.05cm}(\pi-\pi^{*})\in\Theta$ holds for controls of the form $\pi = \overline{\pi}\hspace{0.05cm}\mathbbm{1}_{C} + \pi^{*}\hspace{0.05cm}\mathbbm{1}_{C^{c}}$ as specified above. Hence, Condition \textit{(iii)} guarantees the existence of a random variable~$\xi$ with $\xi^{-}\in L^{1}$ which is (a.s.) a lower bound for the elements of $(\Delta_{\varepsilon}\Phi_{\pi}^{n})_{\varepsilon\in (0,1)}$. Therefore, we obtain from the optimality of $\pi^{*}$, Fatou's lemma and \eqref{eq:primalnonmarkovproof6}:
    \begin{equation}\label{eq:primalnonmarkovproof32}
        0\ge \liminf_{\varepsilon\searrow 0}  \mathbb{E}\big[\Delta_{\varepsilon}\Phi_{\pi}^{n}\big] \ge   \mathbb{E}\Big[\liminf_{\varepsilon\searrow 0}\Delta_{\varepsilon}\Phi_{\pi}^{n}\Big] =\mathbb{E}\big[ U'\big(X^{\pi^{*}}(T)\big) \hspace{0.05cm} X^{\pi^{*}}(T)\hspace{0.05cm} H_{\pi}^{\tau_{n}^{\pi}}(T) \big],
    \end{equation}
    which implies again \eqref{eq:primalnonmarkovproof3last}.
    \vspace{0.2cm}\newline
    \textit{Step 4. Necessary Condition: \eqref{eq:PrimalnonMarkovcondition1} is   satisfied: Applying It\^{o}'s formula to the left-hand side of \eqref{eq:primalnonmarkovproof3last} and\\ 
    \noindent\hspace*{1.1cm} concluding the proof by considering certain strategies $\pi\in\mathcal{A}$.} \vspace{0.1cm}\newline
    At first, we recall that the continuous process $p_{1}^{*}\hspace{0.05cm} X^{\pi^{*}}$ is even a martingale by assumption. Therefore, as $p_{1}^{*}$ satisfies the terminal condition associated with \eqref{eq:primaladjointeq}, we can conclude that $p_{1}^{*}\hspace{0.05cm} X^{\pi^{*}}$ is a modification of the continuous process $V^{*}$ discussed in the proof of Lemma \ref{lemma:existencesolutionstoBSDEprimal}. The continuity of both processes implies by means of a standard result that they are even indistinguishable. On the one hand, this observation facilitates the application of It\^{o}'s formula below and on the other hand it implies that $p_{1}^{*}\hspace{0.05cm} X^{\pi^{*}}$ (and, therefore, also $p_{1}^{*}$) is a strictly negative process. The latter will play a decisive role below, especially in \textit{Step 5}, where this property is also applicable since we did not use the optimality of $\pi^{*}$ for its proof.
    We obtain for every $n\in\mathbb{N}$ by means of It\^{o}'s formula
    \begin{equation*}
        \begin{aligned}
        d\big(p_{1}^{*}\hspace{0.05cm}X^{\pi^{*}}\hspace{0.05cm}H_{\pi}^{\tau_{n}^{\pi}}\big)(t)=\hspace{0.1cm} &p_{1}^{*}(t)\hspace{0.05cm}X^{\pi^{*}}(t)\hspace{0.05cm}\mathbbm{1}_{[0,\tau_{n}^{\pi}]}(t)\hspace{0.05cm}\big( \pi(t)-\pi^{*}(t)\big)^{\intercal}\hspace{0.03cm}\sigma(t)\hspace{0.03cm}\big(\theta(t) - \sigma^{\intercal}(t)\hspace{0.05cm} \pi^{*}(t)\big)\hspace{0.05cm}dt\\
        &+\mathbbm{1}_{[0,\tau_{n}^{\pi}]}(t)\hspace{0.05cm}\big( \pi(t)-\pi^{*}(t)\big)^{\intercal}\hspace{0.03cm}\sigma(t)\hspace{0.03cm}\big(  p_{1}^{*}(t)\hspace{0.05cm}X^{\pi^{*}}(t)\hspace{0.05cm} \sigma^{\intercal}(t) \hspace{0.05cm} \pi^{*}(t) + X^{\pi^{*}}(t)\hspace{0.05cm}
        q_{1}^{*}(t) \big)\hspace{0.05cm}dt \\
        &+  p_{1}^{*}(t)\hspace{0.05cm}X^{\pi^{*}}(t)\hspace{0.05cm}\mathbbm{1}_{[0,\tau_{n}^{\pi}]}(t)\hspace{0.05cm}\big( \pi(t)-\pi^{*}(t)\big)^{\intercal}\hspace{0.03cm}\sigma(t)\hspace{0.05cm}dB(t)\\
        &+ \big(  p_{1}^{*}(t)\hspace{0.05cm} X^{\pi^{*}}(t)\hspace{0.05cm} \pi^{*\hspace{0.045cm}\intercal}(t)\hspace{0.05cm}\sigma(t) + X^{\pi^{*}}(t)\hspace{0.05cm}q_{1}^{*\hspace{0.045cm}\intercal}(t) \big)\hspace{0.05cm} H_{\pi}^{\tau_{n}^{\pi}}(t) \hspace{0.05cm}dB(t),
        \end{aligned}
    \end{equation*}
    $t\in [0,T]$. We observe that the integrand of the finite variation part can be simplified to
    \begin{equation}\label{eq:primalnonmarkovproof41}
        X^{\pi^{*}}(t)\hspace{0.05cm}\mathbbm{1}_{[0,\tau_{n}^{\pi}]}(t)\hspace{0.05cm}\big( \pi(t)-\pi^{*}(t)\big)^{\intercal}\hspace{0.03cm}\sigma(t)\hspace{0.03cm}\big(p_{1}^{*}(t)\hspace{0.05cm}\theta(t) + q_{1}^{*}(t)\big), \hspace{0.3cm} t\in [0,T].
    \end{equation}
    Note that the process which is multiplied by $H_{\pi}^{\tau_{n}^{\pi}}$ in the dynamics above corresponds exactly to $W^{*\hspace{0.045cm}\intercal}$ from the proof of Lemma \ref{lemma:existencesolutionstoBSDEprimal}, which lies in $H^{2}(0,T;\mathbb{R}^{m})$. Hence, it follows from \eqref{eq:primalnonmarkovproof4} that the local martingale part defines in fact a true martingale. Combining this with $H_{\pi}^{\tau_{n}^{\pi}}(0)=0$ and \eqref{eq:primalnonmarkovproof3last} shows for every $\pi\in\mathcal{A}$ for which the conclusion of \textit{Step 3} holds:
    \begin{equation}\label{eq:primalnonmarkovproof42}
        0 \ge \mathbb{E}\bigg[\int_{0}^{\tau_{n}^{\pi}} - X^{\pi^{*}}(t)\hspace{0.03cm}\big( \pi(t)-\pi^{*}(t)\big)^{\intercal}\hspace{0.03cm}\sigma(t)\hspace{0.03cm}\big(p_{1}^{*}(t)\hspace{0.05cm}\theta(t) + q_{1}^{*}(t)\big)\hspace{0.05cm}dt\bigg],
    \end{equation}
    for every $n\in\mathbb{N}$. We conclude this segment of the proof by arguing that \eqref{eq:primalnonmarkovproof42} necessarily implies \eqref{eq:PrimalnonMarkovcondition1}. For this purpose, we apply the same argument as given in \cite{zheng2}. Fix $\overline{\pi}\in K$ and consider the set
    \begin{equation}\label{eq:primalnonmarkovproof43}
        N^{\overline{\pi}}:=\big\{(\omega,t)\in\Omega\times [0,T]: -\big(\overline{\pi}- \pi^{*}(t) \big)^{\intercal}\hspace{0.03cm}\sigma(t)\hspace{0.03cm}\big(p_{1}^{*}(t)\hspace{0.05cm}\theta(t)+q_{1}^{*}(t) \big) >0\big\}.
    \end{equation}
    Moreover, we define the control process $\pi^{\overline{\pi}}:=\overline{\pi}\hspace{0.05cm}\mathbbm{1}_{N^{\overline{\pi}}} + \pi^{*}\hspace{0.05cm}\mathbbm{1}_{(N^{\overline{\pi}})^{c}}$. The admissibility of $\pi^{\overline{\pi}}$ results from ${\pi^{*}}\in\mathcal{A}$ and $N^{\overline{\pi}}$ being measurable with respect to the progressive $\sigma$-algebra. It is an immediate consequence of the special structure of $\pi^{\overline{\pi}}$ that \textit{Step 3} (and, therefore, also \eqref{eq:primalnonmarkovproof42}) is applicable to $\pi^{\overline{\pi}}$ under each of the three Conditions \textit{(i)}, \textit{(ii)} and \textit{(iii)}, respectively. As $X^{\pi^{*}}$ is a strictly positive process, we necessarily obtain from \eqref{eq:primalnonmarkovproof42} applied to $\pi^{\overline{\pi}}$ that, for every $n\in\mathbb{N}$, the set
    \begin{equation}\label{eq:primalnonmarkovproof44}
        N^{\overline{\pi}} \cap \{(\omega,t)\in\Omega\times [0,T]:\mathbbm{1}_{[0,\tau_{n}^{\pi^{\overline{\pi}}}]}(t)=1\}
    \end{equation}
    has to be a $\mathbb{P}\otimes\lambda|_{[0,T]}$-null set. Hence, as $\mathbbm{1}_{[0,\tau_{n}^{\pi^{\overline{\pi}}}]}\nearrow \mathbbm{1}_{[0,T]}$ holds almost surely for $n\rightarrow\infty$ and $\mathbb{P}\otimes\lambda|_{[0,T]}$ is continuous from below, we can conclude that also $N^{\overline{\pi}}$ is a $\mathbb{P}\otimes\lambda|_{[0,T]}$-null set. Moreover, we obtain from the subadditivity property that even
    \begin{displaymath}
        N:=\bigcup_{\overline{\pi}\in K\cap \mathbb{Q}^{m}} N^{\overline{\pi}}
    \end{displaymath}
    is a $\mathbb{P}\otimes\lambda|_{[0,T]}$-null set. It follows from $K\cap \mathbb{Q}^{m}$ being dense in $K$ and the continuity of the Euclidean inner product in every component that we have for $\mathbb{P}\otimes\lambda|_{[0,T]}$-almost every $(\omega,t)\in\Omega\times [0,T]$ and every $\overline{\pi}\in K$:
    \begin{equation}\label{eq:primalnonmarkovproof45}
        -\big(\overline{\pi}- \pi^{*}(t) \big)^{\intercal}\hspace{0.05cm}\sigma(t)\hspace{0.05cm}\big(p_{1}^{*}(t)\hspace{0.05cm}\theta(t)+q_{1}^{*}(t) \big) \le 0,
    \end{equation}
    which is equivalent to \eqref{eq:PrimalnonMarkovcondition1}.
    \vspace{0.2cm}\newline
    \textit{Step 5. Sufficient Condition: $\pi^{*}$ is optimal: Proving that $p_{1}^{*}\hspace{0.05cm}X^{\pi}$ is a submartingale for every $\pi\in\mathcal{A}$ and\\ 
    \noindent\hspace*{1.1cm} applying the concavity of $U$.}\vspace{0.1cm}\newline
    Finally, we show that \eqref{eq:PrimalnonMarkovcondition1} is also sufficient for the optimality of $\pi^{*}$. Let $\pi^{*}\in\mathcal{A}$ be given and assume that \eqref{eq:PrimalnonMarkovcondition1} is satisfied. We obtain for every $\pi\in\mathcal{A}$ by means of the integration by parts formula:
    \begin{equation}\label{eq:primalnonmarkovproof51}
        \begin{aligned}
            p_{1}^{*}\hspace{0.05cm}X^{\pi} =\hspace{0.2cm} &p_{1}^{*}(0)\hspace{0.05cm}x_{0} - \int_{0}^{\cdot} X^{\pi}(t)\hspace{0.05cm}\big[\big(r(t) + \pi^{* \hspace{0.045cm}\intercal}(t)\hspace{0.05cm}\sigma(t)\hspace{0.05cm}\theta(t) \big)\hspace{0.05cm} p_{1}^{*}(t) + \pi^{* \hspace{0.045cm}\intercal}(t)\hspace{0.05cm} \sigma(t)\hspace{0.05cm}q_{1}^{*}(t) \big]\hspace{0.05cm} dt\\
            &+\int_{0}^{\cdot}X^{\pi}(t)\hspace{0.05cm}\big[ \big(r(t)+\pi^{\intercal}(t)\hspace{0.05cm}\sigma(t)\hspace{0.05cm} \theta(t) \big)\hspace{0.05cm}p_{1}^{*}(t) + \pi^{\intercal}(t)\hspace{0.05cm}\sigma(t)\hspace{0.05cm}q_{1}^{*}(t)\big] dt\\
            &+ \int_{0}^{\cdot} X^{\pi}(t)\hspace{0.05cm}\big(p_{1}^{*}(t)\hspace{0.05cm}\pi^{\intercal}(t)\hspace{0.05cm}\sigma(t) + q_{1}^{* \hspace{0.045cm}\intercal}(t)\big)\hspace{0.05cm} dB(t).
        \end{aligned}
    \end{equation}
    As the process $p_{1}^{*}\hspace{0.05cm}X^{\pi}$ is strictly negative (see \textit{Step 4} for the strict negativity of $p_{1}^{*}$), it follows that it can be written as $p_{1}^{*}(0)\hspace{0.05cm}x_{0}\hspace{0.05cm}\mathcal{E}(Z)$, where the continuous semimartingale $Z$ is given by
    \begin{equation}\label{eq:primalnonmarkovproof52}
       \begin{aligned}
            Z = & \int_{0}^{\cdot} \big(\pi^{\intercal}(t) - \pi^{* \hspace{0.045cm}\intercal}(t)\big)\hspace{0.03cm}\sigma(t)\hspace{0.03cm}\big( \theta(t) + p_{1}^{*\hspace{0.045cm}-1}(t)\hspace{0.05cm}
            q_{1}^{*}(t) \big)\hspace{0.05cm} dt\\
            &+ \int_{0}^{\cdot} \big[\pi^{\intercal}(t)\hspace{0.05cm}\sigma(t) + p_{1}^{*\hspace{0.045cm}-1}(t)\hspace{0.05cm}q_{1}^{* \hspace{0.045cm}\intercal}(t)\big]\hspace{0.05cm} dB(t) =:A+M.
        \end{aligned} 
    \end{equation}
    Therefore, we obtain the explicit representation $p_{1}^{*}(0)\hspace{0.05cm}x_{0}\hspace{0.05cm}\mathcal{E}(M)\exp(A)$. Clearly, $\mathcal{E}(M)$ is a supermartingale as it is a nonnegative local martingale with integrable starting value. Hence, it is an immediate consequence of $p_{1}^{*}(0)<0$ that $p_{1}^{*}(0)\hspace{0.05cm}x_{0}\hspace{0.05cm}\mathcal{E}(M)$ is a submartingale. Moreover, it follows from $p_{1}^{*}$ being strictly negative and \eqref{eq:PrimalnonMarkovcondition1} that the integrand of $A$ is nonpositive, i.e. the process $\exp(A)$ is decreasing. Obviously, $p_{1}^{*}\hspace{0.05cm}X^{\pi}$ is adapted and $L^{1}$. The second property essentially results from the above representation and $\exp(A)$ being bounded by $1$. Fix $s,t\in [0,T]$ with $s\le t$. The submartingale property follows from the monotonicity of $\exp(A)$ and $p_{1}^{*}(0)\hspace{0.05cm}x_{0}\hspace{0.05cm}\mathcal{E}(M)$ being a negative submartingale:
    \begin{equation}\label{eq:primalnonmarkovproof53}
        \mathbb{E}\big[p_{1}^{*}(t)\hspace{0.05cm}X^{\pi}(t)|\mathcal{F}_{s}\big]\ge \exp(A(s))\hspace{0.05cm}\mathbb{E}\big[ p_{1}^{*}(0)\hspace{0.05cm}x_{0}\hspace{0.05cm}\mathcal{E}(M)_{t} |\mathcal{F}_{s}\big]\ge p_{1}^{*}(s)\hspace{0.05cm}X^{\pi}(s).
    \end{equation}
    Finally, as $p_{1}^{*}\hspace{0.05cm}X^{\pi^{*}}$ is even a martingale, we obtain from the concavity of $U$:
    \begin{equation}\label{eq:primalnonmarkovproof54}
        \begin{aligned}
        0=p_{1}^{*}(0)\hspace{0.05cm}x_{0}-p_{1}^{*}(0)\hspace{0.05cm}x_{0} &\ge \mathbb{E}\big[ p_{1}^{*}(T)\hspace{0.05cm}X^{\pi^{*}}(T)\big] - \mathbb{E}\big[ p_{1}^{*}(T)\hspace{0.05cm}X^{\pi}(T)\big]\\ 
        &= \mathbb{E}\big[ U'\big( X^{\pi^{*}}(T)\big)\hspace{0.05cm}\big(X^{\pi}(T) - X^{\pi^{*}}(T) \big)\big]\\
        &\ge \mathbb{E}\big[ U\big(X^{\pi}(T) \big) - U\big(X^{\pi^{*}}(T) \big) \big].
        \end{aligned}
    \end{equation}
    Since $\pi\in\mathcal{A}$ was arbitrary, we can conclude that $\pi^{*}$ is indeed optimal.
\end{proof}
\begin{remark}\label{remark:primalZhengassumptionswrong}
    Note that our set of assumptions differs fundamentally from \cite{zheng2}. There, the proof of \eqref{eq:PrimalnonMarkovcondition1} being a necessary optimality condition is carried out under Assumption \ref{assumption:assumpU}. However, in contrast to the corresponding part in the proof of Theorem \ref{theorem:DualnonMarkov}, this is not sufficient for concluding that the family of difference quotients $(\Delta_{\varepsilon}\Phi_{\pi}^{n})_{\varepsilon\in (0,1]}$ is monotone. The key difference is that $\widetilde{U}\circ \exp$ is decreasing, whereas $U\circ\exp$ is increasing. Note that both functions are convex under Assumption \ref{assumption:assumpU} (cf. Lemma \ref{lemma:transformierteassumption}). Furthermore, $g_{\pi}^{n}$ as defined in \textit{Step 3} is always concave. In the case of $U\circ\exp$ it is, therefore, in general not true that $\Phi_{\pi}^{n}$ is either concave or convex as the inequalities $(1)$ and $(2)$ from the proof of Lemma \ref{lemma:compositionconvexconcave} cannot be combined by the transitive property anymore. However, as the second statement of Lemma \ref{lemma:compositionconvexconcave} shows, it is possible to resolve this issue, if $U\circ\exp$ is concave. Hence, in contrast to Assumption \ref{assumption:assumpU} (an important assumption for Theorem \ref{theorem:DualnonMarkov}), we have to require that $\mathrm{id}_{\mathbb{R}^{+}} \cdot U'$ is a nonincreasing function. Note that this assumption is also formulated in \cite{trivellato}. Since the only utility functions which satisfy the aforementioned condition and Assumption \ref{assumption:assumpU} at the same time are given by $c\cdot\log + d$, where $c\in\mathbb{R}^{+}$ and $d\in \mathbb{R}$ are fixed, we formulate by Conditions \textit{(ii)} and \textit{(iii)} alternatives which justify interchanging limit and expectation operator in \textit{Step 3}. Hence, other utility functions $U$, in particular those where $\mathrm{id}_{\mathbb{R}^{+}} \cdot U'$ is strictly increasing (like the power utility functions), are a priori not excluded from Theorem \ref{theorem:PrimalnonMarkov}.
\end{remark}
\begin{remark}\label{remark:changeserrorsinoriginal}
    Besides that, we implemented the following changes in comparison with \cite{zheng2}: We optimized the sequence of stopping times $(\tau_{n}^{\pi})_{n\in\mathbb{N}}$ insofar as our definition in \eqref{eq:primalnonmarkovproof4} makes it straightforward that the stochastic integrals in \textit{Step 4} are, in fact, true martingales. In \textit{Step 5}, we placed great emphasis on showing that $p_{1}^{*}\hspace{0.05cm}X^{\pi}$ is a submartingale for every $\pi\in\mathcal{A}$. Hence, this reminds us of a common concept in optimization theory: If a process from a family of supermartingales (which is either obtained by using the general adjoint equation for maximization problems specified to the primal problem instead of \eqref{eq:primaladjointeq} in the formulation of Theorem \ref{theorem:PrimalnonMarkov}, see Remark \ref{remark:connectionclassSmpprimal}, or by simply multiplying each element of the family by $-1$), which is indexed by the set of the admissible strategies, is even a true martingale, then the corresponding control is optimal. Except for \textit{Step 3}, the proof concept is very similar to the original one given in \cite{zheng2}.
\end{remark}
\begin{remark}\label{remark:normalconeequivalence}
    Note that \eqref{eq:PrimalnonMarkovcondition1} is equivalent to $-\sigma(t)\hspace{0.05cm}\big(p_{1}^{*}(t)\hspace{0.05cm}\theta(t)+q_{1}^{*}(t)\big)$ being an element of the normal cone of $K$ at $\pi^{*}(t)$ almost surely for almost every $t\in [0,T]$, i.e.
    \begin{equation}\label{eq:normalconeequivalence}
        \forall\pi\in K: \hspace{0.3cm} \big(\pi-\pi^{*}(t)\big)^{\intercal} \hspace{0.05cm} \big[-\sigma(t)\hspace{0.05cm}\big(p_{1}^{*}(t)\hspace{0.05cm}\theta(t)+q_{1}^{*}(t)\big) \big] \le 0.
    \end{equation}
\end{remark}
\begin{remark}\label{remark:connectionclassSmpprimal}
    It is an immediate consequence of \eqref{eq:generalizedHamutility} and the strict positivity of $X^{\pi^{*}}$ that \eqref{eq:PrimalnonMarkovcondition1} is equivalent to
    \begin{equation}\label{eq:Hammaxforprimal}
        \mathcal{H}_{1}\big(t,X^{\pi^{*}}(t),\pi^{*}(t),p_{1}^{*}(t),q_{1}^{*}(t) \big) = \inf_{\pi\in K} \mathcal{H}_{1}\big( t,X^{\pi^{*}}(t),\pi,p_{1}^{*}(t),q_{1}^{*}(t)\big),
    \end{equation}
    almost surely for almost every $t\in [0,T]$, i.e. $\pi^{*}$ minimizes $\mathcal{H}_{1}$. Hence, as the primal problem is a maximization problem and the result serving as the formal starting point (see \cite[Section 3.3]{yongzhou}) is devoted to minimization problems, it is hereby justified to view Theorem \ref{theorem:PrimalnonMarkov} as a stochastic maximum principle. Note that \eqref{eq:primaladjointeq} actually corresponds to the adjoint equation for the equivalent minimization problem with loss function $-U$. Since $\mathcal{H}_{1}$ is linear in $y$ and $z$, it follows that $(-p_{1}^{*},-q_{1}^{*})$ solves the classical adjoint equation for maximization problems \cite[(6.25)]{pham}, which, for the primal problem, is given by \eqref{eq:primaladjointeq} combined with the terminal condition $p_{1}^{*}(T)=U'(X^{\pi^{*}}(T))$. Hence, \eqref{eq:PrimalnonMarkovcondition1} would become a maximization condition for $\mathcal{H}_{1}$, if we replaced \eqref{eq:primaladjointeq} with its maximization setting analogue in the formulation of Theorem \ref{theorem:PrimalnonMarkov}.
\end{remark}
\begin{remark}\label{remark:forsufficientweakerrequirements}
    We observe that neither of the Conditions \textit{(i)}, \textit{(ii)} and \textit{(iii)} is necessary for performing the arguments in \textit{Step 5} above. These were just required in order to ensure that we can interchange limit and expectation operator in \textit{Step 3}. This is an important observation with regard to Theorem \ref{theorem:fromdualtoprimal} below. Clearly, the same applies to the requirement that $\mathrm{id}_{\mathbb{R}^{+}} \cdot U'$ is nondecreasing in the light of the sufficiency of the three optimality conditions given in Theorem \ref{theorem:DualnonMarkov}.
\end{remark}

\subsection{A Stochastic Maximum Principle for the Dual Problem}
\label{subsection:theoreticalbackgrounddualnonmarkovian}
In this subsection, we cite the stochastic maximum principle for the dual problem found in \cite{zheng2} as combining it with the results from the previous subsection allows our novel algorithm to efficiently calculate a natural upper bound for the value of the primal problem. Moreover, it is the key to the deep SMP algorithm from \cite{zheng1}. We are going to need the following technical assumption:
\begin{assumption}\label{assumption:dualintegrability}
    Suppose that $\widetilde{U}\big( Y^{(y,v)}(T)\big)\in L^{2}$ holds for any admissible pair $(y,v)\in\mathcal{D}$.
\end{assumption}
\begin{remark}\label{remark:existenceoptlabbe}
    The motivation for formulating Assumption \ref{assumption:dualintegrability} as above is a result from \cite{labbe} which guarantees that there exists an optimal pair $(y^{*},v^{*})$ for the dual problem, if, in addition, Assumption \ref{assumption:assumpU}, $U(0)>-\infty$, $U(\infty)=\infty$ and $V\in\mathbb{R}$ (cf. \eqref{eq:Umaxproblem}) hold. Alternatively, we could directly require the integrability condition which is needed for the second part of Lemma \ref{lemma:existencesolutionstoBSDEdual}, as it is done for Lemma \ref{lemma:existencesolutionstoBSDEprimal} in Subsection \ref{subsection:nonmarkovprimalSMP}.
\end{remark}
\noindent The first auxiliary result ensures that, in the setting of Theorem \ref{theorem:DualnonMarkov}, there always exists a pair solving \eqref{eq:dualadjointeq} which satisfies the prerequisites.
\begin{lemma}\label{lemma:existencesolutionstoBSDEdual}
Consider a fixed pair $(y,v)\in\mathcal{D}$ and suppose that Assumptions \ref{assumption:assumpU} and \ref{assumption:dualintegrability} and either, using the notation of Lemma \ref{lemma:transformierteallg}, $U(0)>-\infty$ or $U=\log$ hold. Then $Y^{(y,v)}(T) \hspace{0.05cm}\widetilde{U}'\big(Y^{(y,v)}(T) \big)\in L^{2}$ and there exists a solution $(p_{2},q_{2})$ to \eqref{eq:dualadjointeq} such that $p_{2}\hspace{0.03cm}Y^{(y,v)}$ is a martingale. 
\end{lemma}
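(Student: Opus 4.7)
The plan is to follow the construction from the proof of Lemma~\ref{lemma:existencesolutionstoBSDEprimal}, but with a preliminary step dedicated to establishing the $L^{2}$-integrability of $Y^{(y,v)}(T)\hspace{0.05cm}\widetilde{U}'\big(Y^{(y,v)}(T)\big)$, which in the primal case was simply postulated as part of Assumption~\ref{assumption:primalintegrability} but must now be derived from the listed hypotheses.

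First, I would prove that $Y^{(y,v)}(T)\hspace{0.05cm}I\big(Y^{(y,v)}(T)\big)\in L^{2}$, which by Lemma~\ref{lemma:transformierteallg}~\textit{(iv)} is equivalent to the stated claim. For $U=\log$ this is trivial since $y\hspace{0.05cm}I(y)=1$. In the remaining case $U(0)>-\infty$, my starting point would be the identity
\begin{displaymath}
y\hspace{0.05cm}I(y) \hspace{0.05cm}=\hspace{0.05cm} U(I(y)) - \widetilde{U}(y),\hspace{0.3cm} y\in\mathbb{R}^{+},
\end{displaymath}
which follows from Lemma~\ref{lemma:transformierteallg}~\textit{(i)} together with the defining relation $U'(I(y))=y$. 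The term $\widetilde{U}\big(Y^{(y,v)}(T)\big)$ is controlled by Assumption~\ref{assumption:dualintegrability}, so it suffices to dominate $U(I(y))$ appropriately. To this end I would iterate Lemma~\ref{lemma:transformierteassumption}~\textit{(iii)} to obtain $I(\beta^{n}y)\le\gamma^{n}\hspace{0.05cm}I(y)$ for every $n\in\mathbb{N}$, which yields polynomial growth of $I$ as $y\searrow 0$ and polynomial decay as $y\nearrow\infty$. Combined with the concavity of $U$ and the finiteness of $U(0)$, this should produce an estimate of the form $|U(I(y))|\le C_{1}+C_{2}\hspace{0.05cm}|\widetilde{U}(y)|$ on $\mathbb{R}^{+}$ for suitable constants $C_{1},C_{2}\ge 0$. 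The desired $L^{2}$-integrability then follows by squaring, taking expectation, and invoking Assumption~\ref{assumption:dualintegrability}.

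With the integrability in hand, the construction of $(p_{2},q_{2})$ mirrors the argument from the proof of Lemma~\ref{lemma:existencesolutionstoBSDEprimal}. The martingale representation theorem produces a continuous version $V$ of the square-integrable martingale
\begin{displaymath}
M:=\big(\mathbb{E}\big[-Y^{(y,v)}(T)\hspace{0.05cm}\widetilde{U}'\big(Y^{(y,v)}(T)\big)\big|\mathcal{F}_{t}\big]\big)_{t\in[0,T]},
\end{displaymath}
together with a unique $W\in H^{2}(0,T;\mathbb{R}^{m})$ such that $V=V_{0}+W\bullet B$. Since $Y^{(y,v)}$ is strictly positive, the candidate $p_{2}:=V/Y^{(y,v)}$ is well-defined and satisfies the terminal condition prescribed in~\eqref{eq:dualadjointeq}. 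An application of It\^{o}'s formula to $p_{2}=V\cdot\big(Y^{(y,v)}\big)^{-1}$ using the dynamics~\eqref{eq:dualsde} produces a drift-plus-diffusion decomposition whose matching against the adjoint equation~\eqref{eq:dualadjointeq} yields the explicit form
\begin{displaymath}
q_{2} \hspace{0.05cm}=\hspace{0.05cm} \big(V\hspace{0.05cm}(\theta+\sigma^{-1}v)+W\big)\hspace{0.05cm}/\hspace{0.05cm}Y^{(y,v)}.
\end{displaymath}
By construction, $p_{2}\hspace{0.03cm}Y^{(y,v)}=V$ is then a martingale.

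The main obstacle is plainly the $L^{2}$-estimate for $Y^{(y,v)}(T)\hspace{0.05cm}I\big(Y^{(y,v)}(T)\big)$ in the case $U(0)>-\infty$. While Lemma~\ref{lemma:transformierteassumption}~\textit{(i)} already tells us that $y\hspace{0.05cm}I(y)$ is nonincreasing, one still has to quantitatively control its potential divergence as $y\searrow 0$: both $U(I(y))$ and $\widetilde{U}(y)$ may tend to infinity when $U(\infty)=\infty$, and the delicate point is to verify that they do so at comparable rates, so that their difference $y\hspace{0.05cm}I(y)$ is dominated by $\widetilde{U}(y)$ up to a multiplicative and additive constant. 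Once this quantitative comparison is established via the iterated elasticity bound from Assumption~\ref{assumption:assumpU}, the remainder of the argument reduces to the FBSDE manipulations already carried out in Lemma~\ref{lemma:existencesolutionstoBSDEprimal}.
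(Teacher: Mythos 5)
Your overall architecture matches what the paper does: the paper itself only defers to Li--Zheng for the integrability claim and states that the pair $(p_{2},q_{2})$ is built exactly as in Lemma \ref{lemma:existencesolutionstoBSDEprimal} via the martingale representation theorem and It\^{o}'s formula. Your second half is correct and complete in this sense: I verified that applying It\^{o}'s formula to $V/Y^{(y,v)}$ with the dynamics \eqref{eq:dualsde} reproduces \eqref{eq:dualadjointeq} precisely with $q_{2}=\big(V(\theta+\sigma^{-1}v)+W\big)/Y^{(y,v)}$, and $p_{2}\hspace{0.03cm}Y^{(y,v)}=V$ is then a martingale by construction.

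The gap is in the step you yourself flag as the obstacle, and the route you sketch for it does not close. Bounding $U$ linearly from above (which is all that concavity plus $U(0)>-\infty$ gives you) yields $U(I(y))\le A+B\hspace{0.02cm}I(y)$, and since $I(y)$ blows up much faster than $y\hspace{0.02cm}I(y)$ as $y\searrow 0$, the inequality $y\hspace{0.02cm}I(y)\le A+B\hspace{0.02cm}I(y)-\widetilde{U}(y)$ is vacuous exactly where you need it; likewise, the iterated bound $I(\beta^{n}y)\le\gamma^{n}I(y)$ gives geometric growth of $I$ along $\beta^{n}$ while the same ingredients only force $\widetilde{U}(\beta^{n})$ to grow linearly in $n$, so no comparison of the form $|U(I(y))|\le C_{1}+C_{2}|\widetilde{U}(y)|$ follows from what you list. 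The estimate you want is obtained directly, without passing through $U(I(y))$ at all: since $\widetilde{U}'=-I$ and $I$ is decreasing, a single application of Lemma \ref{lemma:transformierteassumption} \textit{(iii)} in the form $I(y)\le\gamma\hspace{0.02cm}I(y/\beta)$ gives, for every $y\in\mathbb{R}^{+}$,
\begin{equation*}
\widetilde{U}(y)-\widetilde{U}(y/\beta)=\int_{y}^{y/\beta}I(s)\hspace{0.05cm}ds\ \ge\ \Big(\tfrac{y}{\beta}-y\Big)I(y/\beta)\ \ge\ \frac{1-\beta}{\beta\gamma}\hspace{0.05cm}y\hspace{0.02cm}I(y),
\end{equation*}
and since $\widetilde{U}$ decreases to $\widetilde{U}(\infty)=U(0)$ (Lemma \ref{lemma:transformierteallg} \textit{(vi)}), this yields the global bound $0\le y\hspace{0.02cm}I(y)\le\frac{\beta\gamma}{1-\beta}\big(\widetilde{U}(y)-U(0)\big)$. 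This is precisely where the hypothesis $U(0)>-\infty$ enters; squaring and invoking Assumption \ref{assumption:dualintegrability} then gives $Y^{(y,v)}(T)\hspace{0.02cm}I\big(Y^{(y,v)}(T)\big)\in L^{2}$, with no case distinction between small and large $y$ needed. The case $U=\log$ is trivial, as you note.
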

\begin{proof}
    We refer to the proof of \cite[Lemma 3.7]{zheng2} for details. The proof of the first part essentially relies on the results gathered in Lemmata \ref{lemma:transformierteallg} and \ref{lemma:transformierteassumption}. As in Lemma \ref{lemma:existencesolutionstoBSDEprimal}, solutions to \eqref{eq:dualadjointeq} with the desired properties can be explicitly constructed using the martingale representation theorem and It\^{o}'s formula.
\end{proof}
\noindent Now that we have this preliminary result at our disposal, we are in position to cite the stochastic maximum principle for the dual problem which was found in \cite{zheng2}.
\begin{theorem}\label{theorem:DualnonMarkov}
    Let a dual problem be given such that Assumptions \ref{assumption:assumpU} and \ref{assumption:dualintegrability} and either $U(0)>-\infty$ or $U=\log$ hold and consider an admissible pair $(y^{*},v^{*})\in\mathcal{D}$. Moreover, let $Y^{(y^{*},v^{*})}$, $p_{2}^{*}$ and $q_{2}^{*}$ denote processes which solve the following FBSDE system and assume that $p_{2}^{*}\hspace{0.05cm} Y^{(y^{*},v^{*})}$ is even a martingale: For~$t\in [0,T]$: 
    \begin{equation}\label{eq:SDEsystemdualnonMarkov}
        \begin{aligned}
        &dY^{(y^{*},v^{*})}(t) = -Y^{(y^{*},v^{*})}(t) \big[ \big(r(t)+\delta_{K}(v^{*}(t)) \big) \hspace{0.05cm} dt + \big( \theta(t)+\sigma^{-1}(t)v^{*}(t)\big)^{\intercal}\hspace{0.05cm} dB(t)\big], \hspace{0.3cm} \\
        &dp_{2}^{*}(t) = \big[\big(r(t)+\delta_{K}(v^{*}(t))\big)\hspace{0.05cm} p_{2}^{*}(t)  + \big(\theta(t)+ \sigma^{-1}(t)\hspace{0.05cm}v^{*}(t)\big)^{\intercal}\hspace{0.05cm} q_{2}^{*}(t) \hspace{0.05cm}\big]\hspace{0.05cm} dt + q_{2}^{* \hspace{0.045cm}\intercal}(t)\hspace{0.05cm} dB(t),
        \end{aligned}
    \end{equation}
    with initial condition $Y^{(y^{*},v^{*})}(0)=y^{*}$ and terminal condition $p_{2}^{*}(T)= -\widetilde{U}'\big(Y^{(y^{*},v^{*})}(T) \big)$, respectively. Then $(y^{*},v^{*})$ is optimal in the sense of Definition \ref{definition:dualproblem} if and only if the following properties are satisfied almost surely for almost every $t\in [0,T]$:
    \begin{eqnarray}
    &&p_{2}^{*}(0)=x_{0},\label{eq:DualnonMarkovcondition1}\\
    &&p_{2}^{*\hspace{0.045cm} -1}(t) \big[\sigma^{-1}\big]^{\intercal}(t)\hspace{0.05cm} q_{2}^{*}(t) \in K \hspace{0.3cm}\mbox{and}\label{eq:DualnonMarkovcondition2}\\
    &&p_{2}^{*}(t)\hspace{0.05cm}\delta_{K}(v^{*}(t)) + \big(\sigma^{-1}(t)\hspace{0.05cm}v^{*}(t)\big)^{\intercal}\hspace{0.05cm} q_{2}^{*}(t)=0. \label{eq:DualnonMarkovcondition3}
    \end{eqnarray}
\end{theorem}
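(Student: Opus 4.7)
I would follow the same overall strategy as for Theorem \ref{theorem:PrimalnonMarkov}: derive the SMP for the dual problem by taking convex variations of the control, use an Itô-formula argument to convert an integrated first-order condition into the pointwise conditions \eqref{eq:DualnonMarkovcondition1}--\eqref{eq:DualnonMarkovcondition3}, and prove sufficiency through the (super)martingale property of $p_2^{*}\, Y^{(y,v)}$ combined with the convexity of $\widetilde U$. Two features distinguish the argument from the primal case: the control now has two independent pieces, the scalar initial value $y$ and the process $v$, so both must be perturbed; and the drift of $Y^{(y,v)}$ contains the support function $\delta_{K}(v)$, which is convex and positively homogeneous but not smooth.

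\emph{Necessity, part 1 (initial value).} Fix $v^{*}$ and set $y^{\varepsilon} := y^{*} + \varepsilon h$ for $h \in \mathbb{R}$ and $\varepsilon$ small enough to keep $y^{\varepsilon} > 0$. Because \eqref{eq:dualsde} scales linearly in the initial condition, $Y^{(y^{\varepsilon}, v^{*})} = (y^{\varepsilon}/y^{*})\, Y^{(y^{*}, v^{*})}$. Differentiating the dual cost at $\varepsilon = 0$ (justified by the $L^{2}$-integrability of $Y^{(y^{*},v^{*})}(T)\,\widetilde U'(Y^{(y^{*}, v^{*})}(T))$ given by Lemma \ref{lemma:existencesolutionstoBSDEdual}), using $\widetilde U'(y) = -I(y)$ together with $p_{2}^{*}(T) = I(Y^{(y^{*},v^{*})}(T))$, and invoking the martingale identity $\mathbb{E}[p_{2}^{*}(T)\, Y^{(y^{*},v^{*})}(T)] = p_{2}^{*}(0)\, y^{*}$, the first-order condition reduces to $-p_{2}^{*}(0) + x_{0} = 0$, i.e.\ \eqref{eq:DualnonMarkovcondition1}.

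\emph{Necessity, part 2 (control process).} Fix $y^{*}$ and set $v^{\varepsilon} := v^{*} + \varepsilon(v - v^{*})$ for $(y^{*}, v) \in \mathcal D$ and $\varepsilon \in [0,1]$. The explicit form of $\log Y^{(y^{*}, v^{\varepsilon})}(T)$ contains the term $-\int_{0}^{T} \delta_{K}(v^{\varepsilon}(t))\,dt$, which is only concave (not smooth) in $\varepsilon$. However, the convexity of $\delta_{K}$, of $\widetilde U \circ \exp$ (Lemma \ref{lemma:transformierteassumption} (ii)), and the fact that $\widetilde U$ is strictly decreasing combine, via Lemma \ref{lemma:compositionconvexconcave}, to make $\Psi_{v}(\varepsilon) := \widetilde U(Y^{(y^{*}, v^{\varepsilon})}(T))$ convex in $\varepsilon$. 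Hence the difference quotient
\begin{displaymath}
\Delta_{\varepsilon} \Psi_{v} := \varepsilon^{-1}\big[\Psi_{v}(\varepsilon) - \Psi_{v}(0)\big]
\end{displaymath}
is nondecreasing in $\varepsilon$, bounded above by the $L^{1}$ random variable $\Delta_{1} \Psi_{v}$ (Assumption \ref{assumption:dualintegrability}), and nonnegative in expectation by optimality of $(y^{*}, v^{*})$. Monotone convergence yields $\mathbb{E}[\Psi_{v}'(0+)] \ge 0$. Exploiting the convexity bound $\delta_{K}(v^{\varepsilon}) \le \delta_{K}(v^{*}) + \varepsilon(\delta_{K}(v) - \delta_{K}(v^{*}))$ to produce a linear-in-$\varepsilon$ lower estimate of $\log Y^{(y^{*}, v^{\varepsilon})}(T)$, localising by stopping times analogous to \eqref{eq:primalnonmarkovproof4}, and applying Itô's formula to $p_{2}^{*}\, Y^{(y^{*}, v^{*})}\,\widetilde H_{v}^{\tau_{n}}$ (with $\widetilde H_{v}$ the natural analogue of $H_{\pi}$) translates the above inequality into
\begin{displaymath}
\mathbb{E}\int_{0}^{T} Y^{(y^{*},v^{*})}(t)\Big[p_{2}^{*}(t)\big(\delta_{K}(v(t)) - \delta_{K}(v^{*}(t))\big) + \big(\sigma^{-1}(t)(v(t)-v^{*}(t))\big)^{\intercal} q_{2}^{*}(t)\Big]dt \ge 0
\end{displaymath}
for every admissible $v$. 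The main obstacle is the extraction of the pointwise conditions: plugging in $v = \lambda v^{*}$ for $\lambda > 0$ and using positive homogeneity of $\delta_{K}$ forces the integrand $p_{2}^{*}\delta_{K}(v^{*}) + (\sigma^{-1}v^{*})^{\intercal} q_{2}^{*}$ to vanish, giving \eqref{eq:DualnonMarkovcondition3}; substituting this identity back reduces the inequality to $\mathbb{E}\int_{0}^{T} Y^{(y^{*},v^{*})}(p_{2}^{*}\delta_{K}(v) + (\sigma^{-1}v)^{\intercal} q_{2}^{*})dt \ge 0$ for every admissible $v$, from which a measurable-selection argument as at the end of Step 4 of the proof of Theorem \ref{theorem:PrimalnonMarkov}, together with the bipolar theorem (since $K$ is closed, convex and contains $0$), yields \eqref{eq:DualnonMarkovcondition2}.

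\emph{Sufficiency.} Conversely, assume \eqref{eq:DualnonMarkovcondition1}--\eqref{eq:DualnonMarkovcondition3}. The integration by parts formula shows that the drift of $p_{2}^{*}\, Y^{(y,v)}$ equals $p_{2}^{*}(t)\, Y^{(y,v)}(t)\big[\delta_{K}(v^{*}(t)) - \delta_{K}(v(t)) + p_{2}^{*\,-1}(t)(\sigma^{-1}(t)(v^{*}(t) - v(t)))^{\intercal} q_{2}^{*}(t)\big]$, which, thanks to \eqref{eq:DualnonMarkovcondition3} and the pointwise lower bound $\delta_{K}(v) \ge -v^{\intercal}(\sigma^{-1})^{\intercal} q_{2}^{*}/p_{2}^{*}$ supplied by \eqref{eq:DualnonMarkovcondition2}, is nonpositive for every admissible $v$. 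Hence $p_{2}^{*}\, Y^{(y,v)}$ is a nonnegative supermartingale for every admissible $(y,v)$ and a martingale at $(y^{*}, v^{*})$ by hypothesis, so $\mathbb{E}[p_{2}^{*}(T)\, Y^{(y,v)}(T)] \le p_{2}^{*}(0)\, y = x_{0}\, y$ via \eqref{eq:DualnonMarkovcondition1}, with equality at $(y^{*}, v^{*})$. Combining this with the convexity inequality $\widetilde U(Y^{(y,v)}(T)) \ge \widetilde U(Y^{(y^{*},v^{*})}(T)) - p_{2}^{*}(T)(Y^{(y,v)}(T) - Y^{(y^{*},v^{*})}(T))$ yields
\begin{displaymath}
\mathbb{E}\big[\widetilde U(Y^{(y,v)}(T))\big] + x_{0}\, y \ge \mathbb{E}\big[\widetilde U(Y^{(y^{*},v^{*})}(T))\big] + x_{0}\, y^{*},
\end{displaymath}
establishing the optimality of $(y^{*}, v^{*})$.
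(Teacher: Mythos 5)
Your plan reconstructs, essentially faithfully, the argument of the proof that the paper itself does not reproduce but only cites (Li--Zheng \cite{zheng2}, Theorem 3.9, and the author's thesis): perturbation of $y$ with the martingale identity for condition \eqref{eq:DualnonMarkovcondition1}; convex variation of $v$, with convexity of $\Psi_{v}$ obtained from Lemma \ref{lemma:compositionconvexconcave} applied to the convex, decreasing function $\widetilde U\circ\exp$ and the concave-in-$\varepsilon$ exponent (exactly the structural point the paper emphasizes in Remark \ref{remark:primalZhengassumptionswrong}), the linear-in-$\varepsilon$ estimate for the nonsmooth $\delta_{K}$ term (the ``converging estimate'' alluded to in Step 1 of the proof of Theorem \ref{theorem:PrimalnonMarkov}), localization, and It\^o's formula; and the supermartingale-plus-convexity argument for sufficiency. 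The sign computations in your sufficiency part are correct, and your use of \eqref{eq:DualnonMarkovcondition2} to bound $\delta_{K}(v)$ from below via the definition of the support function is exactly right.

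One step is imprecise as written: plugging $v=\lambda v^{*}$ into the integrated variational inequality and using positive homogeneity only yields
\begin{displaymath}
(\lambda-1)\,\mathbb{E}\!\int_{0}^{T} Y^{(y^{*},v^{*})}(t)\big[p_{2}^{*}(t)\,\delta_{K}(v^{*}(t))+\big(\sigma^{-1}(t)v^{*}(t)\big)^{\intercal}q_{2}^{*}(t)\big]\,dt\ \ge\ 0
\end{displaymath}
for all $\lambda>0$, hence the vanishing of the \emph{expected time integral}, not of the integrand pointwise; the integrand has no a priori sign until \eqref{eq:DualnonMarkovcondition2} is available. To get \eqref{eq:DualnonMarkovcondition3} a.e.\ you must either localize the perturbation, taking $v=\lambda v^{*}\mathbbm{1}_{C}+v^{*}\mathbbm{1}_{C^{c}}$ with $C$ a progressively measurable set (e.g.\ the set where the integrand is positive, resp.\ negative, with $\lambda<1$, resp.\ $\lambda>1$) — the same indicator device you already invoke for \eqref{eq:DualnonMarkovcondition2} and which appears at the end of Step 4 of the proof of Theorem \ref{theorem:PrimalnonMarkov} — or first establish \eqref{eq:DualnonMarkovcondition2} so that the integrand is nonnegative and zero expectation forces it to vanish a.e. With that repair the argument is complete and coincides with the cited proof's strategy.
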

\begin{proof}
     The original proof of the above statement can be found in \cite[Theorem 3.9]{zheng2}. Moreover, we refer to the author's diploma thesis (see \cite[Theorem 4.6]{KW2021deep}) for a very detailed exposition of the proof removing minor mistakes from the original presentation (see also Remarks 4.7 and 4.8 therein). Even though the original result is formulated under the premise that the processes $r$, $\mu$ and $\sigma$ are uniformly bounded and $\sigma$ satisfies the strong non-degeneracy condition, it can easily be seen that the arguments are also applicable, if the processes $r$, $\mu$ and $\sigma$ are merely continuous and $\sigma^{-1}$ exists.
\end{proof}
\begin{remark}\label{remark:connectionclassSmpdual}
   It follows from \eqref{eq:generalizedHamdual} and the strict positivity of $Y^{(y^{*},v^{*})}$ that the Hamiltonian maximization condition
   \begin{equation}\label{eq:Hammaxfordual1}
        \mathcal{H}_{2}\big(t,Y^{(y^{*},v^{*})}(t),v^{*}(t),p_{2}^{*}(t),q_{2}^{*}(t) \big) = \sup_{v\in \widetilde{K}} \mathcal{H}_{2}\big( t,Y^{(y^{*},v^{*})}(t),v,p_{2}^{*}(t),q_{2}^{*}(t)\big)
    \end{equation}
   is equivalent to
   \begin{equation}\label{eq:Hammaxfordual2}
        p_{2}^{*}(t)\hspace{0.05cm}\delta_{K}\big(v^{*}(t)\big) + \big(\sigma^{-1}(t)\hspace{0.05cm}v^{*}(t)\big)^{\intercal}\hspace{0.05cm} q_{2}^{*}(t)= \inf_{v\in\widetilde{K}}\big\{p_{2}^{*}(t)\hspace{0.05cm}\delta_{K}(v) + \big(\sigma^{-1}(t)\hspace{0.05cm}v\big)^{\intercal}\hspace{0.05cm} q_{2}^{*}(t)\big\}
    \end{equation}
    being satisfied almost surely for almost every $t\in [0,T]$. Clearly, \eqref{eq:DualnonMarkovcondition3} yields that the left-hand side is $0$. Moreover, we obtain from \eqref{eq:DualnonMarkovcondition2}, $p_{2}^{*}$ being strictly positive and the definition of $\delta_{K}$ that the expression within the curly brackets is nonnegative for every $v\in\widetilde{K}$. Hence, \eqref{eq:Hammaxfordual1} holds indeed for an optimal dual control process $v^{*}$. Furthermore, as the proof of Theorem \ref{theorem:DualnonMarkov} shows, \eqref{eq:DualnonMarkovcondition1} results from considering the minimization problem $\inf_{y\in\mathbb{R}^{+}} \big\{x_{0} y + \mathbb{E}\big[ \widetilde{U}\big( Y^{(y,v^{*})}(T)\big)\big] \big\}$ with $v^{*}$ being fixed. Consequently, Theorem \ref{theorem:DualnonMarkov} can be regarded as a stochastic maximum principle for the dual problem which also takes the optimization with respect to $y$ into account.
\end{remark}

\subsection{Constructing a Solution to the Utility Maximization Problem by Means of a Solution to Its Dual Problem and Vice Versa}\label{subsection:connectionprimaldual}
It is natural to ask how good of an upper estimate the value of the dual problem actually is. 
As a first step, the following theorem found in \cite{zheng2} proves by means of Theorems \ref{theorem:PrimalnonMarkov} and \ref{theorem:DualnonMarkov} and the technical assumption \eqref{eq:fromdualtoprimal1} that we can construct a solution to the primal problem, if there exists an optimal pair for the associated dual problem.
\begin{theorem}\label{theorem:fromdualtoprimal}
    Let a dual problem be given such that Assumptions \ref{assumption:assumpU} and \ref{assumption:dualintegrability} and either $U(0)>-\infty$ or $U=\log$ hold. Assume that there exists an optimal pair $(y^{*},v^{*})\in\mathcal{D}$. Let $Y^{(y^{*},v^{*})}$, $p_{2}^{*}$ and $q_{2}^{*}$ denote processes which solve \eqref{eq:SDEsystemdualnonMarkov} and suppose that $p_{2}^{*}\hspace{0.05cm} Y^{(y^{*},v^{*})}$ is a martingale. Moreover, assume
    \begin{equation}\label{eq:fromdualtoprimal1}
        \pi^{*}:=p_{2}^{*\hspace{0.045cm}-1}\hspace{0.03cm} \big[\sigma^{-1}\big]^{\intercal}\hspace{0.03cm}q_{2}^{*} \in H^{2}(0,T;\mathbb{R}^{m}).
    \end{equation}
    Then it follows that $\pi^{*}$ is an optimal control for the corresponding primal problem. Furthermore, a triple of processes solving \eqref{eq:SDEsystemprimalnonMarkov} for $\pi^{*}$ is explicitly given by 
    \begin{equation}\label{eq:fromdualtoprimal2}
       X^{\pi^{*}}:=p_{2}^{*},\hspace{0.3cm} p_{1}^{*}:=-Y^{(y^{*},v^{*})} \hspace{0.3cm} \mbox{and}\hspace{0.3cm} q_{1}^{*}:= Y^{(y^{*},v^{*})}\hspace{0.03cm}(\sigma^{-1}\hspace{0.05cm} v^{*}+\theta). 
    \end{equation}
\end{theorem}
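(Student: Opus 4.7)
The plan is to verify that the candidate triple $(X^{\pi^{*}},p_{1}^{*},q_{1}^{*})$ defined in \eqref{eq:fromdualtoprimal2} satisfies every hypothesis in the sufficient direction of Theorem~\ref{theorem:PrimalnonMarkov}, and that the control $\pi^{*}$ from \eqref{eq:fromdualtoprimal1} fulfils the Hamiltonian maximization condition \eqref{eq:PrimalnonMarkovcondition1}. Since the optimality of $(y^{*},v^{*})$ is assumed, I can first invoke Theorem~\ref{theorem:DualnonMarkov} to extract the three dual optimality conditions: $p_{2}^{*}(0)=x_{0}$, $\pi^{*}(t)=p_{2}^{*\,-1}(t)[\sigma^{-1}]^{\intercal}(t)q_{2}^{*}(t)\in K$ a.s.\ for a.e.~$t$, and the crucial scalar identity
\begin{equation*}
p_{2}^{*}(t)\,\delta_{K}(v^{*}(t))+\big(\sigma^{-1}(t)v^{*}(t)\big)^{\intercal}q_{2}^{*}(t)=0,
\end{equation*}
which, rewritten via $q_{2}^{*}=p_{2}^{*}\,\sigma^{\intercal}\pi^{*}$ and the strict positivity of $p_{2}^{*}$ (inherited from the martingale property of $p_{2}^{*}Y^{(y^{*},v^{*})}$ together with $p_{2}^{*}(T)=I(Y^{(y^{*},v^{*})}(T))>0$), yields $\delta_{K}(v^{*}(t))+\pi^{*\intercal}(t)v^{*}(t)=0$. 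Admissibility of $\pi^{*}$ then follows by combining $\pi^{*}(t)\in K$ with the $H^{2}$ integrability hypothesis~\eqref{eq:fromdualtoprimal1}.

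Next I would verify that $X^{\pi^{*}}:=p_{2}^{*}$ solves the wealth SDE \eqref{eq:portfoliosde} driven by $\pi^{*}$. Substituting $q_{2}^{*}=p_{2}^{*}\sigma^{\intercal}\pi^{*}$ into the dynamics \eqref{eq:dualadjointeq} of $p_{2}^{*}$ produces a diffusion coefficient $p_{2}^{*}\pi^{*\intercal}\sigma$ and a drift
\begin{equation*}
p_{2}^{*}\bigl[r+\pi^{*\intercal}\sigma\theta+\delta_{K}(v^{*})+v^{*\intercal}\pi^{*}\bigr],
\end{equation*}
which collapses to $p_{2}^{*}(r+\pi^{*\intercal}\sigma\theta)$ thanks to the scalar identity derived above; together with $p_{2}^{*}(0)=x_{0}$, this is exactly the primal SDE for $\pi^{*}$. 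An analogous calculation for $p_{1}^{*}=-Y^{(y^{*},v^{*})}$ and $q_{1}^{*}=Y^{(y^{*},v^{*})}(\sigma^{-1}v^{*}+\theta)$ confirms that these solve the adjoint equation \eqref{eq:primaladjointeq}: the diffusion part is immediate from \eqref{eq:SDEsystemdualnonMarkov}, and the drift part reduces, using once more $\delta_{K}(v^{*})=-\pi^{*\intercal}v^{*}$, to the required form $-[(r+\pi^{*\intercal}\sigma\theta)p_{1}^{*}+\pi^{*\intercal}\sigma q_{1}^{*}]$. The terminal condition is verified by reading Lemma~\ref{lemma:transformierteallg}~(iv) the right way around: $p_{2}^{*}(T)=-\widetilde{U}'(Y^{(y^{*},v^{*})}(T))=I(Y^{(y^{*},v^{*})}(T))$ implies $U'(X^{\pi^{*}}(T))=Y^{(y^{*},v^{*})}(T)$, hence $p_{1}^{*}(T)=-U'(X^{\pi^{*}}(T))$. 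The martingale property of $p_{1}^{*}X^{\pi^{*}}=-p_{2}^{*}Y^{(y^{*},v^{*})}$ is inherited directly from the standing assumption on the dual processes.

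It remains to establish the maximization condition \eqref{eq:PrimalnonMarkovcondition1}. Substituting the candidate adjoint processes gives
\begin{equation*}
-\sigma(t)\bigl(p_{1}^{*}(t)\theta(t)+q_{1}^{*}(t)\bigr)=-\sigma(t)\bigl(-Y^{(y^{*},v^{*})}(t)\theta(t)+Y^{(y^{*},v^{*})}(t)(\sigma^{-1}(t)v^{*}(t)+\theta(t))\bigr)=-Y^{(y^{*},v^{*})}(t)\,v^{*}(t).
\end{equation*}
Since $Y^{(y^{*},v^{*})}>0$, condition \eqref{eq:PrimalnonMarkovcondition1} becomes $-\pi^{*\intercal}(t)v^{*}(t)=\sup_{\pi\in K}\{-\pi^{\intercal}v^{*}(t)\}=\delta_{K}(v^{*}(t))$, which is precisely the scalar identity already established. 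Hence \eqref{eq:PrimalnonMarkovcondition1} holds.

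Finally, I would invoke the sufficient direction of Theorem~\ref{theorem:PrimalnonMarkov}: by Remark~\ref{remark:forsufficientweakerrequirements}, none of Conditions~(i)--(iii) is required for the sufficiency part, so the conclusion that $\pi^{*}$ is optimal for the primal problem follows without any additional assumption on $U$. The main obstacle I expect is the bookkeeping in Step~2 and Step~3---namely consistently applying $q_{2}^{*}=p_{2}^{*}\sigma^{\intercal}\pi^{*}$ together with the key scalar identity $\delta_{K}(v^{*})+\pi^{*\intercal}v^{*}=0$ to transform the dual dynamics and adjoint equation into their primal counterparts, and making sure the terminal-condition identification via $\widetilde{U}'=-I$ is correctly aligned with the sign conventions. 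The rest is either given or reduces to a clean citation of Theorems~\ref{theorem:PrimalnonMarkov} and~\ref{theorem:DualnonMarkov}.
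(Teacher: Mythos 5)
Your verification is correct and is precisely the standard argument: extracting $q_{2}^{*}=p_{2}^{*}\sigma^{\intercal}\pi^{*}$ and the scalar identity $\delta_{K}(v^{*})+\pi^{*\intercal}v^{*}=0$ from \eqref{eq:DualnonMarkovcondition2}--\eqref{eq:DualnonMarkovcondition3}, using them to turn the dual FBSDE into \eqref{eq:SDEsystemprimalnonMarkov} with the correct terminal condition via $\widetilde{U}'=-I$, reducing \eqref{eq:PrimalnonMarkovcondition1} to that same identity, and then invoking the sufficiency direction of Theorem \ref{theorem:PrimalnonMarkov} exactly as Remark \ref{remark:forsufficientweakerrequirements} anticipates. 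The paper itself gives no proof but defers to \cite[Theorem 3.10]{zheng2}, and your argument supplies the details behind that citation along the same route, so there is no substantive difference in approach.
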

\begin{proof}
    For a proof of the above statement, we refer to \cite[Theorem 3.10]{zheng2}.
\end{proof}
\noindent Moreover, also the ``converse'' holds (cf. \cite{zheng2}):
\begin{theorem}\label{theorem:fromprimaltodual}
    Assume that there exists an optimal control $\pi^{*}$ for the utility maximization problem with initial wealth $x_{0}$. Moreover, suppose that the requirements for Theorem \ref{theorem:PrimalnonMarkov} (including at least one of the Conditions \textit{(i)}, \textit{(ii)} and \textit{(iii)}) are met and let $X^{\pi^{*}}$, $p_{1}^{*}$ and $q_{1}^{*}$ denote processes solving \eqref{eq:SDEsystemprimalnonMarkov} such that $p_{1}^{*}\hspace{0.03cm}X^{\pi^{*}}$ is even a martingale. If the process $-\sigma\hspace{0.03cm}(p_{1}^{*\hspace{0.045cm}-1}\hspace{0.03cm}q_{1}^{*}+\theta)$ satisfies the technical integrability property \eqref{eq:dualadmissible}, then 
    \begin{equation}\label{eq:fromprimaltodual1}
        (y^{*},v^{*}):=\big(-p_{1}^{*}(0), -\sigma\hspace{0.03cm}(p_{1}^{*\hspace{0.045cm}-1} \hspace{0.03cm}q_{1}^{*}+\theta)\big)
    \end{equation}
    is an optimal pair. A triple of processes solving \eqref{eq:SDEsystemdualnonMarkov} for $(y^{*},v^{*})$ is explicitly given by
    \begin{equation}\label{eq:fromprimaltodual2}
        Y^{(y^{*},v^{*})}:= -p_{1}^{*}, \hspace{0.3cm} p_{2}^{*}:= X^{\pi^{*}}\hspace{0.3cm}\mbox{and}\hspace{0.3cm} q_{2}^{*}:= \sigma^{\intercal}\hspace{0.03cm}\pi^{*}\hspace{0.03cm}X^{\pi^{*}}.
    \end{equation}
\end{theorem}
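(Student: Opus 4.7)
The plan is to take the candidate $(y^{*},v^{*})$ and the triple of processes defined by \eqref{eq:fromprimaltodual1}--\eqref{eq:fromprimaltodual2} and simply verify that they satisfy the hypotheses and the three optimality conditions \eqref{eq:DualnonMarkovcondition1}--\eqref{eq:DualnonMarkovcondition3} of Theorem \ref{theorem:DualnonMarkov}. Since that theorem is an ``if and only if'', the optimality of $(y^{*},v^{*})$ will follow. The argument is essentially dual to the proof of Theorem \ref{theorem:fromdualtoprimal}, with the roles of the forward and the backward components exchanged.

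First I would check admissibility, i.e.\ that $(y^{*},v^{*})\in\mathcal{D}$. Recall from \textit{Step 4} of the proof of Theorem \ref{theorem:PrimalnonMarkov} that $p_{1}^{*}$ is strictly negative almost everywhere, so $y^{*}=-p_{1}^{*}(0)>0$; the required integrability $\mathbb{E}\int_{0}^{T}\bigl[|v^{*}(t)|^{2}+\delta_{K}(v^{*}(t))\bigr]dt<\infty$ is precisely the assumption \eqref{eq:dualadmissible} imposed in the statement. Next, I would identify the key algebraic relation that ties $v^{*}$ to the primal optimizer. Writing $w(t):=-\sigma(t)(p_{1}^{*}(t)\theta(t)+q_{1}^{*}(t))$, the definition of $v^{*}$ gives $v^{*}=p_{1}^{*\,-1}w$, and the maximality condition \eqref{eq:PrimalnonMarkovcondition1} combined with $-p_{1}^{*\,-1}>0$ yields
\begin{equation*}
\delta_{K}(v^{*}(t))=\sup_{\pi\in K}\bigl\{-\pi^{\intercal}v^{*}(t)\bigr\}=-p_{1}^{*\,-1}(t)\,\pi^{*\intercal}(t)\,w(t)=-\pi^{*\intercal}(t)\,v^{*}(t),
\end{equation*}
i.e.\ $\delta_{K}(v^{*})+\pi^{*\intercal}v^{*}=0$ almost surely for almost every $t$. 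I expect this identity to be the real workhorse of the proof.

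Equipped with this, I would verify that $Y^{(y^{*},v^{*})}:=-p_{1}^{*}$, $p_{2}^{*}:=X^{\pi^{*}}$ and $q_{2}^{*}:=\sigma^{\intercal}\pi^{*}X^{\pi^{*}}$ solve the dual FBSDE \eqref{eq:SDEsystemdualnonMarkov}. For the forward equation, I would apply It\^o's formula (or just sign-flip the dynamics of $p_{1}^{*}$ in \eqref{eq:SDEsystemprimalnonMarkov}) and then match the drift using $\delta_{K}(v^{*})=-p_{1}^{*\,-1}\pi^{*\intercal}w$ and the diffusion using the elementary computation $p_{1}^{*}(\theta+\sigma^{-1}v^{*})^{\intercal}=-q_{1}^{*\intercal}$, both of which reduce to the definition of $w$. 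For the backward equation, one checks that the dynamics of $X^{\pi^{*}}$ in \eqref{eq:SDEsystemprimalnonMarkov} coincide with \eqref{eq:dualadjointeq} once $p_{2}^{*}=X^{\pi^{*}}$, $q_{2}^{*}=\sigma^{\intercal}\pi^{*}X^{\pi^{*}}$ are substituted; here the drift match is exactly $\delta_{K}(v^{*})+\pi^{*\intercal}v^{*}=0$ again, while the diffusion match is automatic. Finally, the terminal condition $p_{2}^{*}(T)=-\widetilde{U}'(Y^{(y^{*},v^{*})}(T))$ follows from $Y^{(y^{*},v^{*})}(T)=-p_{1}^{*}(T)=U'(X^{\pi^{*}}(T))$ together with $-\widetilde{U}'=I$ from Lemma \ref{lemma:transformierteallg}\,\textit{(iv)}, giving $-\widetilde{U}'(U'(X^{\pi^{*}}(T)))=X^{\pi^{*}}(T)=p_{2}^{*}(T)$. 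The martingale property of $p_{2}^{*}\,Y^{(y^{*},v^{*})}=-p_{1}^{*}X^{\pi^{*}}$ is inherited directly from the assumption that $p_{1}^{*}X^{\pi^{*}}$ is a martingale.

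It remains to verify the three optimality conditions from Theorem \ref{theorem:DualnonMarkov}. Condition \eqref{eq:DualnonMarkovcondition1} is $p_{2}^{*}(0)=X^{\pi^{*}}(0)=x_{0}$. Condition \eqref{eq:DualnonMarkovcondition2} reads $p_{2}^{*\,-1}[\sigma^{-1}]^{\intercal}q_{2}^{*}=(X^{\pi^{*}})^{-1}[\sigma^{-1}]^{\intercal}\sigma^{\intercal}\pi^{*}X^{\pi^{*}}=\pi^{*}\in K$, which holds by admissibility of $\pi^{*}$. Condition \eqref{eq:DualnonMarkovcondition3} amounts to $X^{\pi^{*}}[\delta_{K}(v^{*})+v^{*\intercal}\pi^{*}]=0$, which is precisely the key identity established above. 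Applying the ``if'' direction of Theorem \ref{theorem:DualnonMarkov} now yields the optimality of $(y^{*},v^{*})$ and completes the proof. The main technical obstacle is therefore isolating and correctly applying the maximality relation \eqref{eq:PrimalnonMarkovcondition1} in the form $\delta_{K}(v^{*})+\pi^{*\intercal}v^{*}=0$; once this is in hand, the remainder is a bookkeeping exercise.
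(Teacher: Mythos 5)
Your proposal is correct and takes the expected route: the paper itself only cites \cite[Theorem 3.11]{zheng2} for this statement, and your verification of admissibility, of the FBSDE \eqref{eq:SDEsystemdualnonMarkov} with terminal condition via Lemma \ref{lemma:transformierteallg}\,\textit{(iv)}, and of the three conditions \eqref{eq:DualnonMarkovcondition1}--\eqref{eq:DualnonMarkovcondition3} (with the identity $\delta_{K}(v^{*})+\pi^{*\intercal}v^{*}=0$ extracted from \eqref{eq:PrimalnonMarkovcondition1} as the workhorse) is precisely the argument that citation stands for. All the algebra checks out, so this is a faithful filling-in of the delegated proof rather than a different approach.
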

\begin{proof}
    For a proof of the above statement, we refer to \cite[Theorem 3.11]{zheng2}. Note, however, that requiring at least one of the Conditions \textit{(i)}, \textit{(ii)} and \textit{(iii)} is essential in order to ensure that \eqref{eq:PrimalnonMarkovcondition1} becomes indeed a necessary optimality condition (cf. Remark \ref{remark:primalZhengassumptionswrong}).
\end{proof}
\noindent By means of the above results, we can quite easily prove that the so-called strong duality property, i.e. $V=\widetilde{V}$, holds. This means that the value of the primal problem agrees with the one of the associated dual problem.
\begin{corollary}\label{corollary:strongduality}
    Consider a utility maximization problem and its dual problem and suppose that the requirements of either Theorem \ref{theorem:fromdualtoprimal} or Theorem \ref{theorem:fromprimaltodual} are met. Then the strong duality property holds.
\end{corollary}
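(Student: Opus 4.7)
The plan is to exploit the two-step inequality chain \eqref{eq:dualisupperbound1} and identify precisely when equality holds in each of its two steps. Recall that weak duality $V \le \widetilde{V}$ is already provided by \eqref{eq:dualisupperbound2}. Hence, it suffices to exhibit a pair consisting of an admissible primal control $\pi^{*}$ and an admissible dual pair $(y^{*},v^{*})$ for which $\mathbb{E}[U(X^{\pi^{*}}(T))] = \mathbb{E}[\widetilde{U}(Y^{(y^{*},v^{*})}(T))] + x_{0}y^{*}$. Note that, in the chain \eqref{eq:dualisupperbound1}, the first inequality becomes an equality if and only if $X^{\pi}(T)=I(Y^{(y,v)}(T))$ almost surely (by the definition of the Legendre-Fenchel transform, since $I$ is the unique maximizer), and the second inequality becomes an equality if and only if $X^{\pi}Y^{(y,v)}$ is a true martingale, so that its expectation at time $T$ equals $x_{0}y$.

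Under the assumptions of Theorem \ref{theorem:fromdualtoprimal}, I would plug in the explicit relations \eqref{eq:fromdualtoprimal2} and set $\pi^{*}:=p_{2}^{*\hspace{0.045cm}-1}[\sigma^{-1}]^{\intercal}q_{2}^{*}$. The terminal condition for the dual adjoint equation, namely $p_{2}^{*}(T)=-\widetilde{U}'(Y^{(y^{*},v^{*})}(T))=I(Y^{(y^{*},v^{*})}(T))$ by Lemma \ref{lemma:transformierteallg}~\textit{(iv)}, together with $X^{\pi^{*}}=p_{2}^{*}$, yields $X^{\pi^{*}}(T)=I(Y^{(y^{*},v^{*})}(T))$, which is precisely the equality condition for the first inequality. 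For the second inequality, one observes that $X^{\pi^{*}}Y^{(y^{*},v^{*})}=p_{2}^{*}Y^{(y^{*},v^{*})}$, which is a martingale by hypothesis, so its terminal expectation equals $p_{2}^{*}(0)\hspace{0.05cm}y^{*}=x_{0}y^{*}$ by \eqref{eq:DualnonMarkovcondition1}.

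Under the assumptions of Theorem \ref{theorem:fromprimaltodual}, I would proceed symmetrically using the relations \eqref{eq:fromprimaltodual2}. From $Y^{(y^{*},v^{*})}=-p_{1}^{*}$ and the primal terminal condition $p_{1}^{*}(T)=-U'(X^{\pi^{*}}(T))$, I obtain $Y^{(y^{*},v^{*})}(T)=U'(X^{\pi^{*}}(T))$, hence again $X^{\pi^{*}}(T)=I(Y^{(y^{*},v^{*})}(T))$. Moreover, $X^{\pi^{*}}Y^{(y^{*},v^{*})}=-X^{\pi^{*}}p_{1}^{*}$ is a martingale since $p_{1}^{*}X^{\pi^{*}}$ is one by hypothesis, so once more its terminal expectation equals $x_{0}y^{*}=-x_{0}p_{1}^{*}(0)$.

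Combining either argument with the optimality of $\pi^{*}$ for the primal problem and of $(y^{*},v^{*})$ for the dual problem (which is the conclusion of the respective theorem) gives
\begin{displaymath}
V = J(\pi^{*}) = \mathbb{E}\big[\widetilde{U}(Y^{(y^{*},v^{*})}(T))\big] + x_{0}y^{*} = \widetilde{V},
\end{displaymath}
as desired. The construction is entirely routine once the two equality conditions in \eqref{eq:dualisupperbound1} are isolated; I expect no serious obstacle, as the heavy lifting (existence, optimality and the FBSDE correspondence) is already packaged into Theorems \ref{theorem:fromdualtoprimal} and \ref{theorem:fromprimaltodual}.
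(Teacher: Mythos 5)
Your proposal is correct and follows essentially the same route as the paper: the paper likewise reduces the claim to $\mathbb{E}[U(p_{2}^{*}(T))] = x_{0}y^{*} + \mathbb{E}[\widetilde{U}(Y^{(y^{*},v^{*})}(T))]$, obtains the pointwise Fenchel equality from the terminal condition $p_{2}^{*}(T)=-\widetilde{U}'(Y^{(y^{*},v^{*})}(T))$ via Lemma \ref{lemma:transformierteallg} \textit{(iii)} and \textit{(iv)}, and then uses the martingale property of $p_{2}^{*}\hspace{0.03cm}Y^{(y^{*},v^{*})}$ together with $p_{2}^{*}(0)=x_{0}$ to evaluate the cross term. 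Your framing in terms of the equality cases of the two inequalities in \eqref{eq:dualisupperbound1}, and your explicit symmetric treatment of the hypotheses of Theorem \ref{theorem:fromprimaltodual}, are only cosmetic differences from the paper's unified computation.
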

\begin{proof}
    Due to our choice of the prerequisites, we obtain that there also exists an optimal control for the respective other problem and the desired FBSDE solution can be constructed by means of either \eqref{eq:fromdualtoprimal2} or \eqref{eq:fromprimaltodual2}. Hence, we are required to prove
    \begin{equation}\label{eq:strongduality1}
        \mathbb{E}\big[U\big(p_{2}^{*}(T) \big) \big] = x_{0}y^{*} + \mathbb{E}\big[ \widetilde{U}\big(Y^{(y^{*},v^{*})}(T) \big)\big].
    \end{equation}
    Combining $p_{2}^{*}(T)=-\widetilde{U}'\big(Y^{(y^{*},v^{*})}(T) \big)$ with Lemma \ref{lemma:transformierteallg} \textit{(iii)} and \textit{(iv)} on the left-hand side of \eqref{eq:strongduality1} shows the equivalence with
    \begin{equation}\label{eq:strongduality2}
        \mathbb{E}\big[\widetilde{U}\big(Y^{(y^{*},v^{*})}(T) \big) + p_{2}^{*}(T)\hspace{0.05cm}Y^{(y^{*},v^{*})}(T)\big] = x_{0}y^{*} + \mathbb{E}\big[ \widetilde{U}\big(Y^{(y^{*},v^{*})}(T) \big)\big].
    \end{equation}
    As $p_{2}^{*}\hspace{0.05cm}Y^{(y^{*},v^{*})}$ is a martingale and $p_{2}^{*}(0)=x_{0}$ according to either \eqref{eq:DualnonMarkovcondition1} or \eqref{eq:fromprimaltodual2}, we obtain
    \begin{equation}\label{eq:strongduality3}
        \mathbb{E}\big[ p_{2}^{*}(T)\hspace{0.05cm}Y^{(y^{*},v^{*})}(T)\big] = x_{0}y^{*},
    \end{equation}
    which concludes the proof as \eqref{eq:strongduality3} is equivalent to \eqref{eq:strongduality2}.
\end{proof}
\noindent Corollary \ref{corollary:strongduality} in connection with \eqref{eq:fromprimaltodual2} suggests a method for obtaining high quality upper estimates for the true value of the utility maximization problem (see \eqref{eq:boundsprimaldeepSMP} below). This is important as, in contrast to the deep controlled 2BSDE algorithm and like the deep SMP algorithm (see \cite{zheng1}), the deep primal SMP algorithm will not provide a natural estimate for the value of the problem by means of the initial values of the processes from \eqref{eq:SDEsystemprimalnonMarkov}. Instead, lower and upper bounds have to be calculated using the Monte Carlo method.

\subsection{Formulation of the Novel Deep Primal SMP Algorithm}
\label{subsection:primalDeepSMPAlgorithm}

It is natural to ask whether Theorem \ref{theorem:PrimalnonMarkov} can be used for formulating an algorithm which solves the primal problem directly. In \cite[Remark 4.2]{zheng1}, the authors dismiss this idea by arguing that the implementation of the maximization condition \eqref{eq:PrimalnonMarkovcondition1} would require a computationally costly, grid-based method. However, as \eqref{eq:PrimalnonMarkovcondition1} corresponds to minimizing the generalized Hamiltonian $\mathcal{H}$ in the control argument (cf. Remark \ref{remark:connectionclassSmpprimal}), we aim at integrating this condition into a novel algorithm in a way similar to how the maximization condition for the classical Hamiltonian of Markovian problems (see \cite[Theorem 3.2]{zheng1}) was utilized in the formulation of the deep controlled 2BSDE algorithm. Our numerical experiments in Section \ref{section:NumericalExperiments} below will show that this is a highly worthwhile approach since it overcomes the deep SMP algorithm's difficulties with constrained problems (see Example \ref{example:highdimlogutconstr} for details). This novel method, which we will call deep primal SMP algorithm in the following, combines essentially the BSDE solver from \cite{jentzenbsde} with the optimality condition for the control process given in Theorem \ref{theorem:PrimalnonMarkov}. \vspace{0.3cm}\newline
Let us fix an equidistant time discretization $(t_{i})_{i\in\{0,\dots,N\}}$ of $[0,T]$ with step size $\Delta t := T/N$, where $N\in\mathbb{N}$. We are going to simulate the system \eqref{eq:SDEsystemprimalnonMarkov} from Theorem \ref{theorem:PrimalnonMarkov} by means of a discrete-time forward scheme motivated by the Euler-Maruyama method which yields two processes $(p_{i})_{i\in\{0,\dots,N\}}$ and $(X_{i})_{i\in\{0,\dots,N\}}$. For every $i\in\{0,\dots,N-1\}$, we substitute $\pi(t_{i})$ and $q_{1}(t_{i})$ with
\begin{equation}\label{eq:networkprimalSMP}
    \pi_{i}:= \mathcal{N}_{\theta_{i,\pi}}(X_{i}) \hspace{0.3cm}\mbox{and}\hspace{0.3cm} q_{i}:= \mathcal{N}_{\theta_{i,q}}(X_{i}),
\end{equation}
respectively, where the neural networks $\mathcal{N}_{\theta_{i,\pi}}: \mathbb{R}\rightarrow\mathbb{R}^{m}$ and $\mathcal{N}_{\theta_{i,q}}: \mathbb{R}\rightarrow\mathbb{R}^{m}$ are parameterized by appropriate vectors $\theta_{i,\pi}$ and $\theta_{i,q}$. Moreover, hard constraints guarantee via the final layer of each $\mathcal{N}_{\theta_{i,\pi}}$ by means of a surjective, a.e. differentiable function mapping to $K$ that every $\pi_{i}$ is $K$-valued. For $i=0$, we can simplify the network architectures according to the following remark. 
\begin{remark}\label{remark:F0measurablemodelling}
    By the definition of our filtration, random variables which are measurable with respect to~$\mathcal{F}_{0}$ have to be almost surely constant. Hence, it is sufficient to model $\pi_{0}$ and $q_{0}$ by means of trivial neural networks which merely consist of a bias vector. This becomes even clearer, if we combine the ansatz \eqref{eq:networkprimalSMP} with $X_{0}$ being deterministic. Hence, a more complex network architecture would not improve the result while still increasing the dimensionality of the problem. 
\end{remark}
While the Markovian network architectures in \eqref{eq:networkprimalSMP} are certainly not optimal for non-Markovian problems, it is a reasonable choice for the base variant of our algorithm. Depending on the problem, more complex architectures might lead to even better results (see Remark \ref{remark:newqSMP} and Subsection \ref{subsection:rekurrentverallg} below). We start our scheme with
\begin{equation}\label{eq:primalSMPinitialvalues}
    X_{0}:=x_{0} \hspace{0.3cm}\mbox{and}\hspace{0.3cm} p_{0}:=p,
\end{equation}
where $p$ is a trainable variable. Let $(\Delta B_{i})_{i\in\{0,\dots,N-1\}}$ be a family of i.i.d. $m$-dimensional, centered normally distributed random vectors with covariance matrix $\Delta t \cdot I_{m}$. Hence, we can inductively define for every $i\in\{0,\dots,N-1\}$:
\begin{equation}\label{eq:primalSMPdiscretesystem}
    \begin{aligned}
      X_{i+1} :=\hspace{0.12cm} &X_{i} +X_{i}\hspace{0.05cm}\big(r(t_{i})+\pi_{i}^{\intercal}\hspace{0.05cm}\sigma(t_{i})\hspace{0.05cm}\theta(t_{i})\big)
      \hspace{0.05cm}\Delta t +X_{i}\hspace{0.05cm} \pi_{i}^{\intercal}\hspace{0.05cm}\sigma(t_{i}) \hspace{0.05cm}\Delta B_{i},\\
      p_{i+1} :=\hspace{0.12cm} &p_{i} -\big[ \big(r(t_{i}) + \pi_{i}^{\intercal}\hspace{0.05cm}\sigma(t_{i})\hspace{0.05cm} \theta(t_{i}) \big)\hspace{0.05cm}p_{i} + \pi_{i}^{\intercal}\hspace{0.05cm}\sigma(t_{i})\hspace{0.05cm}q_{i}\big]\hspace{0.05cm}\Delta t + q_{i}^{\intercal}\hspace{0.05cm}\Delta B_{i},
    \end{aligned}
\end{equation}
which corresponds to a discretized version of \eqref{eq:SDEsystemprimalnonMarkov}. As $(p_{i})_{i\in\{0,\dots,N\}}$ is supposed to fulfill a terminal condition, we aim at finding parameters which minimize
\begin{equation}\label{eq:primalSMPBSDEloss}
    \mathcal{L}_{BSDE}(p,\theta_{0,q},\dots,\theta_{N-1,q}):= \mathbb{E}\big[  \big| p_{N} + U'(X_{N}) \big|^{2} \big].
\end{equation}
Furthermore, also the parameters for the control process have to be optimized. This is achieved by minimizing 
\begin{equation}\label{eq:primalSMPcontrolloss}
    \mathcal{L}_{control}^{i}(\theta_{i,\pi}):= \mathbb{E}\big[ \pi_{i}^{\intercal}\hspace{0.03cm}\sigma(t_{i})\hspace{0.03cm}\big(p_{i}\hspace{0.03cm}\theta(t_{i}) + q_{i} \big) \big],
\end{equation}
for every $i\in\{0,\dots,N-1\}$. In contrast to \eqref{eq:PrimalnonMarkovcondition1}, we use the expectation operator here as the algorithm only considers a finite number of realizations. Moreover, as specified in the next paragraph, it works with the same realization almost surely at most once. Subsection \ref{subsection:epochenverallg} investigates the effects of studying each element in an a priori drawn set of trajectories several times in the course of the entire training procedure. Clearly, this leads on the other hand to fewer trajectories studied overall. \vspace{0.3cm}\newline
Like the deep SMP algorithm, the deep primal SMP algorithm starts with randomly initialized parameters. It then repeats the training step procedure given by Algorithm \ref{alg:deepprimalSMP} until the updates become sufficiently small and the algorithm seems to have converged. In each training step, the updates are performed by means of one step of an SGD algorithm with respect to each loss function given above. Note that the simulations performed in Substep 2 of Algorithm \ref{alg:deepprimalSMP} already use the updated parameters from Substep 1, which should accelerate convergence. For every training step, the algorithm studies $b_{size}$ realizations of $(\Delta B_{i})_{i\in\{0,\dots,N-1\}}$ (i.e. it samples a batch $\{\omega_{j}\}_{j\in\{1,\dots,b_{size}\}}\subseteq\Omega$), which are then discarded. \vspace{0.3cm}\newline
\noindent As in the deep SMP algorithm and in contrast to the deep controlled 2BSDE algorithm (see \cite{zheng1}), there is no parameter which naturally describes the value of the control problem. However, motivated by Corollary \ref{corollary:strongduality} and \eqref{eq:fromprimaltodual2}, we can formulate a lower and an upper bound in a similar manner:
\begin{equation}\label{eq:boundsprimaldeepSMP}
    V_{l}:= \mathbb{E}\big[ U(X_{N})\big] \hspace{0.3cm}\mbox{and}\hspace{0.3cm} V_{u}:= \mathbb{E}\big[\widetilde{U}(-p_{N}) \big] - x_{0} p.
\end{equation}
Corollary \ref{corollary:strongduality} suggests that $V_{u}-V_{l}$ becomes small, if the technical requirements are satisfied. In Section \ref{section:NumericalExperiments}, we calculate $V_{l}$ and $V_{u}$ by drawing a large sample of size $N_{MC} >\!\!> b_{size}$ from $\Omega$, simulating the processes according to the above scheme and calculating the corresponding sample means. However, as this procedure is computationally expensive, the bounds should not be calculated after each training step. In Section \ref{section:NumericalExperiments}, they are computed every 200 training steps.

\begin{algorithm}
    \caption{One training step of the deep primal SMP algorithm}
    \label{alg:deepprimalSMP}
    \begin{algorithmic}[1]
        \State Generate $b_{size}$ realizations of $(\Delta B_{i})_{i\in\{0,\dots,N-1\}}$, i.e. $\{\omega_{j}\}_{j\in\{1,\dots,b_{size}\}}\subseteq\Omega$;
        \State // Substep 1: Minimizing $\mathcal{L}_{BSDE}$
        \State Initialize according to \eqref{eq:primalSMPinitialvalues} for every $j\in\{1,\dots,b_{size}\}$; 
        \For {$i=0,1,\ldots,N-1$}
            \For {$j=1,2,\ldots,b_{size}$}
                \State Calculate $\pi_{i}^{j}$ and $q_{i}^{j}$ by means of \eqref{eq:networkprimalSMP};
                \State Use \eqref{eq:primalSMPdiscretesystem} in order to obtain $X_{i+1}^{j}$ and $p_{i+1}^{j}$;
            \EndFor
        \EndFor
        \State $loss1 \gets \frac{1}{b_{size}} \sum_{j=1}^{b_{size}} \big| p_{N}^{j} + U'(X_{N}^{j}) \big|^{2}$;
        \State Update $p,\theta_{0,q},\dots,\theta_{N-1,q}$ with one step of an SGD algorithm w.r.t. $loss1$;
        \State // Substep 2: Minimizing $\mathcal{L}_{control}^{i}$ for every $i\in\{0,\dots,N-1\}$
        \For {$i=0,1,\ldots,N-1$}
            \For {$j=1,2,\ldots,b_{size}$}
            \If {i==0}
                \State Initialize $X_{0}^{j}$ and $p_{0}^{j}$ according to \eqref{eq:primalSMPinitialvalues};
            \Else
                \State Use \eqref{eq:primalSMPdiscretesystem} in order to obtain $X_{i}^{j}$ and $p_{i}^{j}$;
            \EndIf
            \State Calculate $\pi_{i}^{j}$ and $q_{i}^{j}$ by means of \eqref{eq:networkprimalSMP};
            \EndFor
            \State $loss2_{i}\gets \frac{1}{b_{size}} \sum_{j=1}^{b_{size}} \big(\pi_{i}^{j}\big)^{\intercal}\hspace{0.02cm}\sigma(\omega_{j},t_{i})\hspace{0.02cm}\big(p_{i}^{j}\hspace{0.05cm}\theta(\omega_{j},t_{i}) + q_{i}^{j} \big)$;
            \State Update $\theta_{i,\pi}$ with one step of an SGD algorithm w.r.t. $loss2_{i}$;
        \EndFor
    \end{algorithmic}
\end{algorithm}

\section{Numerical Experiments}
\label{section:NumericalExperiments}
In this section, we aim at providing numerical examples which allow us to compare the novel deep primal SMP algorithm with the deep controlled 2BSDE algorithm and the deep SMP algorithm from the literature~(see~\cite{zheng1}). The former method solves Markovian problems, as well as their dual problems, via a controlled F2BSDE system and a maximization/minimization condition for the classical Hamiltonian, whereas the latter algorithm tackles the dual problem via the optimality conditions given by Theorem \ref{theorem:DualnonMarkov}. The plausibility of this approach is guaranteed by Corollary \ref{corollary:strongduality}. All experiments have been performed in Python by means of the popular machine learning library TensorFlow (see Appendix \ref{section:Appendix} for the Python code used in Example \ref{example:highdimlogutconstr}). Moreover, we choose $b_{size}=64$ in the following. For both SMP-based algorithms, we select $N_{MC}=10^{5}$ and calculate the bounds every 200 training steps. Adam (see \cite{Adam}) will serve as our SGD algorithm. Like in~\cite{zheng1, jentzenbsde}, the neural networks for $t_{i}\neq 0$ consist of two hidden dense layers with 11 neurons each. Furthermore, we place a batch normalization (see \cite{batchnorm}) layer in front of every dense layer and ReLU serves as our activation function.
\subsection{Markovian Utility Maximization Problems}\label{subsection:markovcoeffexperiments}
At first, we study a problem where the processes $r$, $\mu$ and $\sigma$ are deterministic. Hence, the state processes are in particular of a controlled Markovian form. Therefore, all of the aforementioned algorithms can be applied without any restrictions.
\begin{remark}\label{remark:BNepsilon}
    During our numerical experiments, we observed that the batch normalization layers' parameter~$\varepsilon$ is an essential hyperparameter for the studied algorithms. As we shall see below, using the default value might lead to NaN-values, whereas choosing a larger value, e.g. $\varepsilon=1$, resolves this issue and allows the algorithm to converge astonishingly fast (see Example \ref{example:highdimlogutconstr}). Moreover, small values for $\varepsilon$ may also cause that the bounds of both SMP-based algorithms explode at the beginning of the training procedure (see Example~\ref{example:randomcoeffunconstrained} for details). Hence, tuning $\varepsilon$ for each problem is of paramount importance for the training success.
\end{remark}
\noindent In the following example, we consider a high-dimensional Markovian problem with logarithmic utility function. By introducing constraints, we increase the complexity of the problem even further.
\begin{example}\label{example:highdimlogutconstr}
    We consider a $30$-dimensional problem with $K=[-\kappa,\infty)^{30}$, where $\kappa:=1/m=1/30$. Hence, short selling is permitted, but only to a certain extent for each stock. This choice of $K$ implies in particular that the overall short selling volume of stocks must not surpass the portfolio value of an investor. One can easily see that $\widetilde{K}=(\mathbb{R}_{0}^{+})^{30}$ and $\delta_{K}(v)=\kappa\sum_{i=1}^{30} v_{i}$, $v\in\widetilde{K}$, hold. Moreover, the natural logarithm serves as our utility function. We recall from Example \ref{example:utilitydual} that $\widetilde{U}=-\log-1$ holds. Furthermore, we choose $x_{0}=10$ and $T=0.5$. For every $t\in [0,0.5]$ and $i\in\{1,\dots,30\}$, $r(t)$ and $\mu_{i}(t)$ are given by $0.06\hspace{0.02cm}\exp(t/2)$ and $0.07+0.02\hspace{0.02cm}\sin (4\pi t + \pi i/15)$, respectively. Furthermore, the diagonal elements $\sigma_{i,i}(t)$ are given by~$0.3\hspace{0.02cm}\big(1+\sqrt{t}\big)$, whereas the remaining entries are chosen to be $0.1$. As the logarithm satisfies some desirable properties, we can even find the exact dual value quite easily (cf. \cite{zheng2}). As the dual state process is a stochastic exponential and $\widetilde{U}=-\log-1$ holds, we obtain $y^{*}=x_{0}^{-1}=0.1$. Hence, we can conclude for $D_{2}:=\{v\in\mathcal{A}_{prog}\hspace{0.1cm} | \hspace{0.1cm} (1,v)\in\mathcal{D}\}$ due to the integrability assumptions in the definition of $\mathcal{D}$:
    \begin{equation}\label{eq:highdimlogutunconstr2}
        \widetilde{V}=\log(x_{0}) + \inf_{v\in\mathcal{D}_{2}}\mathbb{E}\bigg[\int_{0}^{0.5} \big(r(t)+\delta_{K}(v(t))+ \frac{1}{2} \big|\theta(t)+\sigma^{-1}(t)v(t)\big|^{2}\big) \hspace{0.05cm} dt\bigg].
    \end{equation}
    The infimum is attained by the deterministic process $v^{*}$ which is, for every $t\in [0,0.5]$, defined via
    \begin{equation}\label{eq:highdimlogutunconstr3}
        v^{*}(t):=\argmin_{v\in\widetilde{K}} \bigg\{\delta_{K}(v)+ \frac{1}{2} \big|\theta(t)+\sigma^{-1}(t)v\big|^{2} \bigg\}.
    \end{equation}
    Obviously, this argument is also applicable to every other closed, convex set $K$ which contains $0$. For the corresponding unconstrained problem, we obtain, as expected, $v^{*}\equiv 0$. We approximate the integral in \eqref{eq:highdimlogutunconstr2} using an equidistant time discretization consisting of $1000$ points. For each of these points, we solved the deterministic convex optimization problem \eqref{eq:highdimlogutunconstr3} by means of the CVXPY package in Python for our particular choice of $K$. After only a few seconds, we obtained $2.34335$ as our benchmark.\newline
    For both algorithms which tackle the primal problem directly, we implemented hard constraints by applying $x\mapsto (x^{2}-\kappa)$ componentwise to the outputs of $\mathcal{N}_{\theta_{i,\pi}}$ for every $i\in\{1,\dots,N-1\}$. Our numerical experiments have shown that it is favorable to use $x\mapsto \max\{-\kappa, x\}$ for $\pi_{0}$ instead. For the dual control process, we project onto $\widetilde{K}$ in a similar manner by means of the function $x\mapsto x^{2}$. Moreover, $h_{K}$ in the deep SMP algorithm is defined as the function which applies $x\mapsto \max\{-\kappa, x\}$ to every component. Our numerical experiments have shown that also $x\mapsto (x^{2}-\kappa)$ and $x\mapsto (|x|-\kappa)$ yield similar results. In the following, we select $N=10$ and apply the learning rate schedules as given in Table \ref{tab:tablehighdimlogutunconstrLR} in Appendix \ref{section:AppendixLRtables}. We quickly reduce the learning rate for the parameter $y$ since it converges at an extremely high speed to $y^{*}$ and its value significantly influences the implied value approximation. Potential issues with NaN-values were resolved by choosing $\varepsilon$ sufficiently large for all batch normalization layers. The Python code for the deep primal SMP algorithm adapted to this specific problem is provided in Appendix \ref{section:Appendix} below. \newline
    We ran each algorithm for 10000 training steps. The training progress is depicted in Figure \ref{fig:highdimlogutconstr} below. We observe fast convergence towards the theoretical benchmark, except for the upper bound $\widetilde{V}_{u}$ determined by the deep SMP algorithm. The reason for this behavior is that the obtained dual control was very close to the zero process. This illustrates a weakness of the deep SMP algorithm which has also been observed in Remark~4.3 of the first preprint version of \cite{zheng1} (see arXiv:2008.11757): Since $q_{i}$ is modeled without taking into account the potential dependency on $v_{i}$ (see \cite[Theorem 3.2]{zheng1} or \cite[Theorem 3.23]{KW2021deep}, which, in particular, uncover the structure of the desired integrand in the adjoint BSDE for Markovian problems), the zero control always minimizes the loss function for the dual control, which is motivated by \eqref{eq:DualnonMarkovcondition3}. Hence, the algorithm tends to mistake the zero control for the optimal dual control, which leads to an upper bound as in the unconstrained case, where the theoretical benchmark is given by $2.35058$. This might be the reason, why in \cite{zheng1} the deep SMP algorithm was applied exclusively to unconstrained problems or problems with positive market price of risk and $K=(\mathbb{R}_{0}^{+})^{m}$. The lower bound is not exposed to this issue as it does not depend explicitly on the dual control (see Equations (25), where \eqref{eq:DualnonMarkovcondition3} is applied to a discretized version of \eqref{eq:SDEsystemdualnonMarkov}, and (26) of \cite{zheng1}). Here, the constraints are incorporated by the function $h_{K}$. Hence, we can still use $\widetilde{V}_{l}$ as a reliable estimate. In this example, it is even more accurate than the value determined by the dual version of the deep controlled 2BSDE algorithm, as Figure \ref{fig:highdimlogutconstr} and \eqref{eq:highdimlogutconstr1} show. 
    \begin{figure}[ht]
        \centering
        \includegraphics[width=15cm]{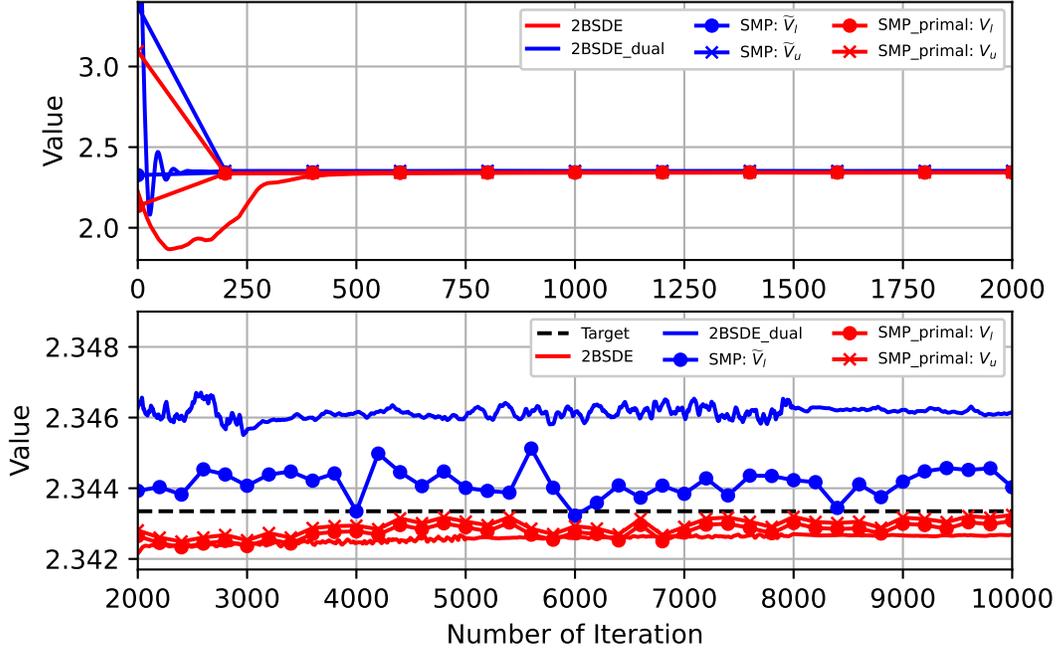}
        \caption{Value approximations for the constrained Markovian problem with logarithmic utility function studied in Example \ref{example:highdimlogutconstr} in the course of 10000 training steps.}
        \label{fig:highdimlogutconstr}
    \end{figure}
    \noindent At the end of the training procedure, we obtain
    \begin{equation}\label{eq:highdimlogutconstr1}
    \begin{aligned}
        &V = 2.34268,\hspace{0.3cm}\widetilde{V}= 2.34614,\hspace{0.3cm} \widetilde{V}_{l}= 2.34403,\hspace{0.3cm} \widetilde{V}_{u}= 2.35192, \\
        &V_{l}= 2.34308\hspace{0.3cm}\mbox{and}\hspace{0.3cm} V_{u}= 2.34324.
    \end{aligned}
    \end{equation}
    In conclusion, we gather from the above observations that both bounds implied by the deep primal SMP algorithm remain highly accurate for constrained problems, whereas the upper bound determined by the deep SMP algorithm tends to attain the value of the associated unconstrained problem. The deep primal SMP algorithm outperforms even both versions of the deep controlled 2BSDE algorithm quite significantly, even though the latter algorithm is specifically designed for Markovian problems.
\end{example}
\subsection{Non-Markovian Utility Maximization Problems: Path Dependent Coefficients}\label{subsection:trulynonmarkovcoeff}
In this subsection, we study a utility maximization problem with a coefficient process $\mu$ which depends on the paths of the stocks. Hence, we cannot apply the deep controlled 2BSDE algorithm. However, we can still use both SMP-based algorithms and compare their outcomes.
\begin{example}\label{example:randomcoeffunconstrained}
    We consider an unconstrained problem with five stocks, i.e. $K=\mathbb{R}^{5}$. Moreover, we select $U=2\sqrt{\cdot}$, $x_{0}=1$, $T=0.5$, $r\equiv 0.1$ and $\sigma$ as a constant matrix with diagonal elements equal to $0.2$ and the remaining entries are set to $0.05$. For every $i\in\{1,\dots, 5\}$ and $t\in (0,0.5]$, we define
    \begin{equation}\label{eq:murandomcoeff}
        \mu_{i}(t) := \begin{cases}
        \mu_{high} &\text{for $S_{i}(t)\ge \frac{1}{t} \int_{0}^{t} S_{i}(s)\hspace{0.05cm}ds$,}\\
        \mu_{low} &\text{else}.
        \end{cases}
    \end{equation}
    Note that the expression on the right-hand side is well-defined due to the continuity of $S_{i}$. Its limit for~$t\searrow 0$ exists almost surely and is given by $S_{i}(0)$, which follows again from the fact that $S_{i}$ has continuous paths. Hence, we can define $\mu_{i}(0):=\mu_{high}$. For our numerical experiments, we choose $\mu_{low}=0.08$ and $\mu_{high}=0.12$. As a motivation for \eqref{eq:murandomcoeff}, we assume that a significant share of investors tends to make trade decisions based on trends, e.g. if the number of retail investors in the market is exceptionally high. If a chart looks ``good'', i.e. the stock is rising, but not straight out of a historic low, then the stock is considered a buy due to its momentum. Hence, the demand rises which strengthens the upward trend. We assume that the opposite happens, if the stock price is lower than its historic mean. Moreover, we consider the historic mean over $[0,t]$ instead of a moving average for a fixed time length as $T=0.5$ is rather small. \newline
    For our numerical implementation, we take $N=10$ while applying the learning rate schedules given by Table \ref{tab:tablerandomcoeff} in Appendix \ref{section:AppendixLRtables}. Figure \ref{fig:randomcoeffunconstrained} shows the results of running both algorithms for 20000 steps, where we chose $\varepsilon=100$. This was necessary in order to prevent the lower bound provided by the deep primal SMP algorithm from becoming very large during the first half of the training procedure. 
    \begin{figure}[ht]
        \centering
        \includegraphics[width=15cm]{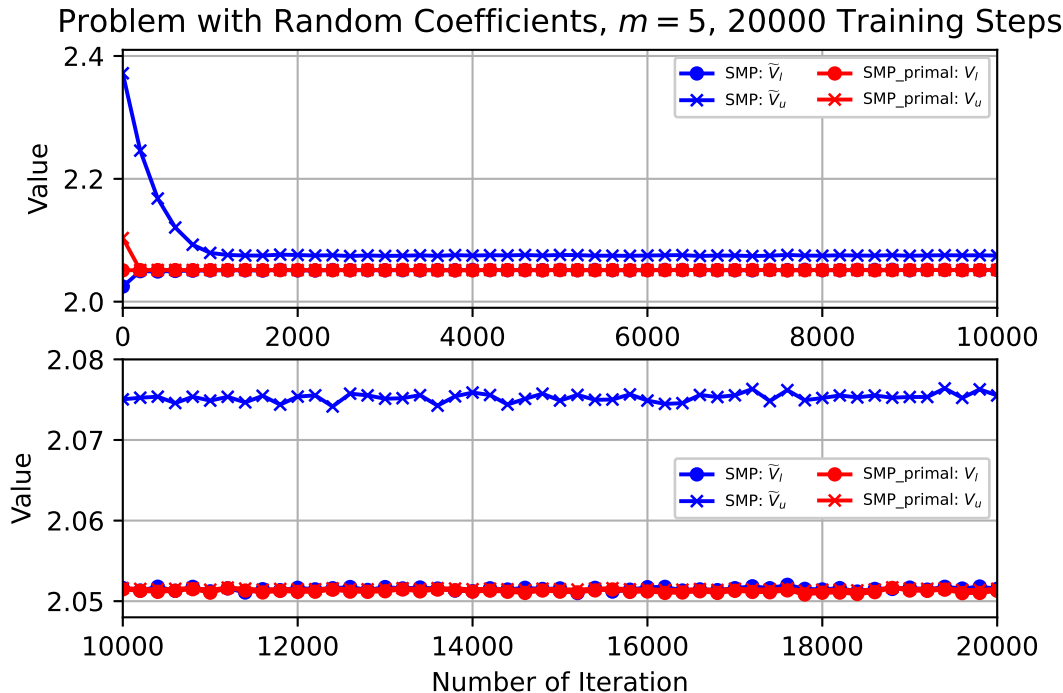}
        \caption{Value approximations for both SMP-based algorithms in the control problem with path dependent coefficient $\mu$ studied in Example \ref{example:randomcoeffunconstrained} in the course of 20000 training steps.}
        \label{fig:randomcoeffunconstrained}
    \end{figure}
    \noindent Since the non-Markovian structure of the problem complicates finding an explicit solution significantly, we content ourselves with comparing the bounds implied by the algorithms. As in the example above, we observe that most bounds only need a relatively low amount of training steps for arriving at reasonable estimates. This is in particular the case for both bounds implied by the deep primal SMP algorithm. After just $400$ training steps, we obtain $V_{l}=2.0513$ and $V_{u}=2.0516$, which almost agrees with the final result given in \eqref{eq:estimatesrandomcoeff} below. At the end of the training procedure, our four estimates are given by
    \begin{equation}\label{eq:estimatesrandomcoeff}
        \widetilde{V}_{l}= 2.0515,\hspace{0.3cm} \widetilde{V}_{u}= 2.0755,\hspace{0.3cm} V_{l}= 2.0513 \hspace{0.3cm}\mbox{and}\hspace{0.3cm} V_{u}= 2.0515.
    \end{equation}
     It is remarkable that there is a steady duality gap between $\widetilde{V}_{l}$ and $\widetilde{V}_{u}$. We investigated this result further by dividing the learning rates by 10 and running the deep SMP algorithm for 30000 additional steps. We obtained $\widetilde{V}_{l}=2.0514$ and $\widetilde{V}_{u}=2.0749$, i.e. approximately the same values as in \eqref{eq:estimatesrandomcoeff}, which supports the above observation. Moreover, we repeated the above experiment for $N=50$ and 20000 training steps, which resulted in $\widetilde{V}_{l}=2.0505$ and $\widetilde{V}_{u}=2.0758$, respectively, i.e. leading to the same result. Furthermore, working with $\mu_{high}=0.14$ instead increases the duality gap determined by the deep SMP algorithm to~$0.0532$, where the obtained bounds are given by $\widetilde{V}_{l}=2.0566$ and $\widetilde{V}_{u}=2.1098$. In contrast to this, the deep primal SMP algorithm yields $V_{l}=2.0565$ and $V_{u}=2.0568$, which agrees again with $\widetilde{V}_{l}$. Moreover, if we set $\mu_{low}=\mu_{high}$, the duality gap vanishes. Hence, the duality gap produced by the deep SMP algorithm might have a non-trivial connection with $\mu_{high}-\mu_{low}$. However, as the duality gap is negligible in the deep primal SMP algorithm, this behavior should not result from a potential violation of the strong duality property.
     In conclusion, we gather from \eqref{eq:estimatesrandomcoeff} and the above considerations that the value of our control problem should be approximately given by $2.0515$, as all bounds except $\widetilde{V}_{u}$ suggest. 
\end{example}
\subsection{Non-Markovian Utility Maximization Problems: Coefficients Satisfying Their Own SDEs}\label{subsection:nonmarkovcoeffwithSDE}
In this subsection, we consider problems with coefficients which satisfy their own SDEs. As in Subsection~\ref{subsection:trulynonmarkovcoeff}, this implies that the wealth process is not of a one-dimensional Markovian form. However, as we shall see below, we can still apply the deep controlled 2BSDE algorithm. The problem can be converted into a Markovian problem by considering higher-dimensional state processes, where the additional components are exactly the random coefficients. \vspace{0.3cm}\newline
At first, we consider Heston's stochastic volatility model with one traded, risky asset. Here the dynamics of the stock price is given by
\begin{equation}\label{eq:HestonstockSDE}
    dS(t)=S(t)\hspace{0.05cm}(r+A\hspace{0.02cm}\nu(t))\hspace{0.05cm}dt + S(t)\sqrt{\nu(t)}\hspace{0.05cm}dB^{1}(t), \hspace{0.3cm}t\in [0,T],
\end{equation}
where the parameters $r,A\in\mathbb{R}^{+}$ can be interpreted as the risk-free interest rate, which is assumed to be constant, and the market price of risk, respectively. Note that we have $\theta=A\sqrt{\nu}$ and $\sigma=\sqrt{\nu}$ following the notation from Section \ref{section:UMaxandDual}. The process $\nu$ satisfies
\begin{equation}\label{eq:HestonvolSDE}
    d\nu(t)=\kappa(\theta_{\nu}-\nu(t))\hspace{0.05cm}dt + \xi\sqrt{\nu(t)}\hspace{0.05cm}dB^{\nu}(t), \hspace{0.3cm}t\in [0,T],
\end{equation}
with initial condition $\nu(0)=\nu_{0}\in\mathbb{R}^{+}$. The Brownian motions $B^{\nu}$ and $B^{1}$ have correlation parameter $\rho\in [-1,1]$. Hence, we can write $B^{\nu}$ as $\rho \hspace{0.03cm}B^{1} + \sqrt{1-\rho^{2}}\hspace{0.03cm}B^{2}$, where $B^{2}$ is a standard Brownian motion which is independent from $B^{1}$. Therefore, $\big(B^{1},B^{2}\big)^{\intercal}$ can be viewed as a standard two-dimensional Brownian motion. We assume that the parameters $\kappa,\theta_{\nu},\xi\in\mathbb{R}^{+}$ satisfy the so-called Feller condition $2\kappa\theta_{\nu}>\xi^{2}$, which ensures that $\nu$ is strictly positive. Since $\nu$ is not necessarily progressively measurable with respect to the filtration generated by $B^{1}$, this market is not yet of the form discussed in Subsection \ref{subsection:marketmodel}. Hence, we introduce an artificial stock, whose local martingale part is driven by $B^{2}$:
\begin{equation}\label{eq:artificialstockheston}
    dS^{2}(t)=S^{2}(t)\hspace{0.05cm}r\hspace{0.05cm}dt + S^{2}(t)\hspace{0.05cm}dB^{2}(t), \hspace{0.3cm}t\in [0,T],
\end{equation}
which is excluded from trading by requiring that the admissible portfolio processes have to take values in~$K\times\{0\}$. Moreover, we have $\theta=(A\sqrt{\nu},0)^{\intercal}$ and $\sigma=\big(\begin{smallmatrix}
  \sqrt{\nu} & 0\\
  0 & 1
\end{smallmatrix}\big)$ in this two-dimensional setup. Since the second component of admissible controls necessarily has to be $0$, we obtain that $B^{2}$ influences the wealth process only through $\nu$. For notational convenience, we identify every control process with its first component. Hence, for each $\pi\in\mathcal{A}$, the wealth process $X^{\pi}$ satisfies 
\begin{equation}\label{eq:HestonwealthSDE}
    dX^{\pi}(t)=X^{\pi}(t)\hspace{0.05cm}(r+\pi(t)A\hspace{0.02cm}\nu(t))\hspace{0.05cm}dt + X^{\pi}(t)\hspace{0.05cm}\pi(t)\sqrt{\nu(t)}\hspace{0.05cm}dB^{1}(t), \hspace{0.3cm}t\in [0,T],
\end{equation}
where the initial wealth is given by a fixed number $x_{0}\in\mathbb{R}^{+}$. Note that the unboundedness of $\nu$ is not an issue here because its paths are continuous and its invertibility is ensured by the Feller condition. Hence, the deep primal SMP algorithm, i.e. Theorem \ref{theorem:PrimalnonMarkov} as its theoretical foundation, is applicable to this problem. \vspace{0.3cm}\newline
From the derivation of \eqref{eq:dualsde} and the above considerations, we conclude that for every $(y,v_{1},v_{2})\in\mathcal{D}$, $(v_{1},v_{2}):=v$, the associated dual state process satisfies $Y^{(y,v_{1},v_{2})}(0)=y$ and
\begin{equation}\label{eq:dualSDEHeston}
\begin{aligned}
    dY^{(y,v_{1},v_{2})}(t)=-Y^{(y,v_{1},v_{2})}(t)\big[&\big(r+\delta_{K}(v_{1}(t))\big)\hspace{0.05cm}dt + \big(A\sqrt{\nu(t)} + v_{1}(t)/\sqrt{\nu(t)}\big)\hspace{0.05cm}dB^{1}(t)\\
    &+v_{2}(t)\hspace{0.05cm}dB^{2}(t)\big], \hspace{0.3cm}t\in [0,T].
\end{aligned}
\end{equation}
In \cite{zheng1}, it is shown that also the deep SMP algorithm is applicable to the above problem, where essentially only a one-dimensional dual control process has to be learned. As already mentioned above, also the deep controlled 2BSDE algorithm can be applied here by increasing the dimension of the state processes. To be more precise, we consider the processes $\big(X^{\pi},\nu\big)^{\intercal}$ and $\big(Y^{(y,v_{1},v_{2})},\nu\big)^{\intercal}$ instead of just $X^{\pi}$ and $Y^{(y,v_{1},v_{2})}$, respectively. We refer to \cite{zheng1} for details. \vspace{0.3cm}\newline
In the following example, we study an unconstrained problem with power utility function for various terminal times $T$. This choice is insofar appealing, as explicit solutions are known in this case, which allows us to assess the quality of the obtained estimates.
\begin{example}\label{example:Hestonunconstr}
    We consider Heston's stochastic volatility model, as introduced above. Moreover, we assume that trading is not restricted, i.e. $K=\mathbb{R}$. Motivated by Example 5.1 from \cite{zheng3}, we choose $r=0.05$, $A=0.5$, $\kappa=10$, $\theta_{\nu}=0.05$, $\xi=0.5$, $\rho=-0.5$, $x_{0}=1$, $v_{0}=0.5$ and the power utility function with parameter $p=0.5$. Clearly, the Feller condition is satisfied. However, as the process can still become negative due to the SDE discretization error, we truncate $(\nu_{i})_{i\in\{0,\dots,N\}}$ at $\delta:=10^{-5}$. We consider three different values for $T$, namely $0.2$, $0.5$ and $1$. As indicated above, the value of an unconstrained power utility maximization problem in a Heston stochastic volatility setting can be found explicitly (see \cite{hestonkraft}). We refer also to \cite{zheng3} for details, where the HJB equation of the dual problem is solved by means of an appropriate ansatz which yields a system of two Riccati equations. Moreover, it is proved that the duality gap is indeed zero. We refer to Table \ref{tab:tablehestonunconstrresults} for the exact benchmark values for all three parameter configurations. \newline
    For our numerical experiments, we choose $N$ such that $T/N$ is constant for all three configurations (cf. Table~\ref{tab:tablehestonunconstrresults}). The learning rate schedules are given in Table \ref{tab:tablehestonunconstrLR} below. Note that we chose rather conservative learning rate schedules for both algorithms which tackle the primal problem directly, as they converged exceptionally fast. This holds except for the deep controlled 2BSDE algorithm and $T=1$, where too high learning rates even led to the divergence of the algorithm. Choosing values for the hyperparameter $\varepsilon$ which are larger than its default value was again highly favorable (see Remark \ref{remark:BNepsilon}). For both algorithms tackling the primal problem we selected $\varepsilon=1$ and for the remaining two methods $\varepsilon=100$. We ran each algorithm for $10000$ training steps. The final results are provided in the following table.
    \begin{table}[ht]
    \centering
    \begin{tabular}{|c||c||c||c|c||c|c||c|}
        \hline
         $(T,N)$ & $V$ & $\widetilde{V}$ & $\widetilde{V}_{l}$ & $\widetilde{V}_{u}$ & $V_{l}$ & $V_{u}$ & Benchmark \Tstrut\Bstrut\\
         \hline \hline
          $(0.2,6)$ & 2.02269 & 2.02278 & 2.02246 & 2.02285 & 2.02228 & 2.02296 & 2.02225 \\
          $(0.5,15)$ & 2.04237 & 2.04284 & 2.04249 & 2.04301 & 2.04283 & 2.04298 & 2.04268\\
          $(1,30)$ & 2.06609 & 2.07904 & 2.07431 & 2.07525 & 2.07474 & 2.07705 & 2.07484\\
          \hline
    \end{tabular}
    \caption{Value approximations for the studied unconstrained problem with stochastic volatility at the end of the training procedure for various pairs $(T,N)$ in comparison with the corresponding theoretical benchmarks.}
    \label{tab:tablehestonunconstrresults}
    \end{table}
    \noindent We observe that all value approximations, except~$V$ for $(T,N)=(1,30)$, are highly accurate. This outlier is most likely not a consequence of unfortunate initial guesses as repeating the experiment several times yielded approximately the same value. Interestingly enough, reducing $N$ to $5$ resulted in $V=2.07498$ which would correspond to the second best approximation in the last row of Table \ref{tab:tablehestonunconstrresults}. Hence, this result could be used for verifying the results of the other algorithms, if no theoretical benchmark was known. A trader can still use the control process determined by the deep primal SMP algorithm as it is a function of the wealth process, as opposed to the optimal primal control implied by the remaining two algorithms via \eqref{eq:fromdualtoprimal1}, which is a function of the non-observable dual state process. This illustrates the luxury of having now four independent methods for solving utility maximization problems, two of which even yielding practicable strategies. Moreover, we conclude from Table \ref{tab:tablehestonunconstrresults}, in particular from the last row, that increasing $T$ while keeping $T/N$ constant usually increases the approximation error. In \cite{zheng1}, a similar observation is made for the deep controlled 2BSDE algorithm and the classical unconstrained Merton problem. Motivated by Remark \ref{remark:newqSMP} below, we refined both SMP-based algorithms by also using the current volatility as an input for the neural networks $\mathcal{N}_{\theta_{i,q}}$, $i\in\{1,\dots,N-1\}$. Table \ref{tab:tablehestonunconstrresults} suggests that this is a worthwhile refinement, as $V_{l}$ corresponds to the best value approximation for all of the considered pairs $(T,N)$. Figure \ref{fig:hestonunconstr} depicts the evolution of the value approximations during the training procedure for $(T,N)=(0.5,15)$.
    \begin{figure}[ht]
        \centering
        \includegraphics[width=15cm]{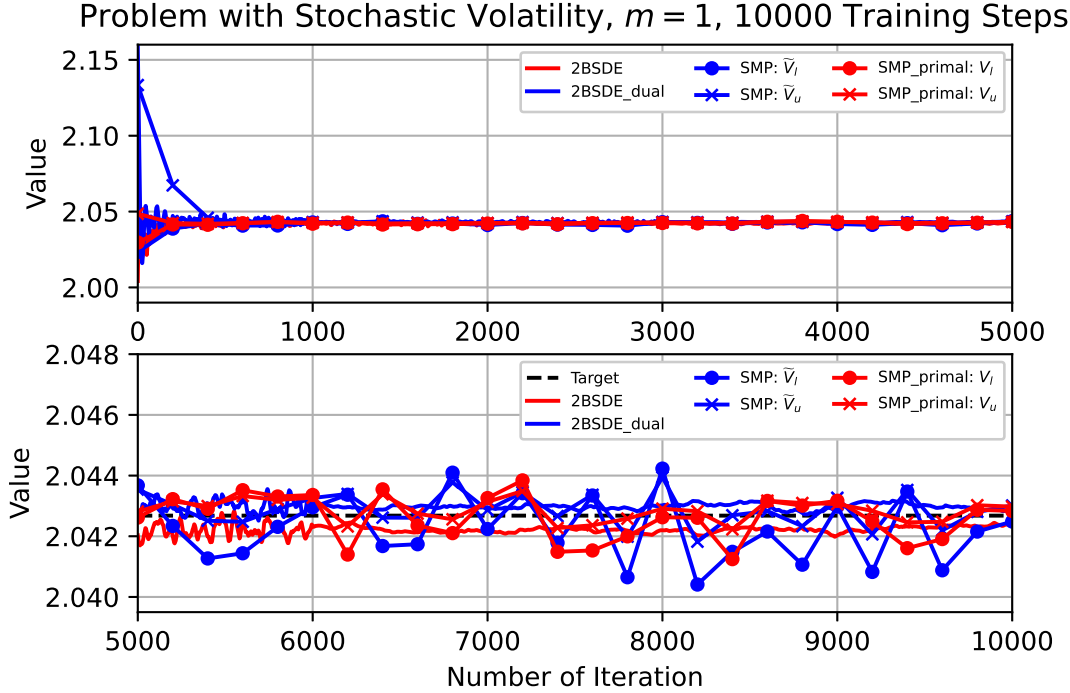}
        \caption{Value approximations for the unconstrained problem with stochastic volatility studied in Example \ref{example:Hestonunconstr} in the course of 10000 training steps for $(T,N)=(0.5,15)$.}
        \label{fig:hestonunconstr}
    \end{figure}
    \noindent As in both of the previous examples, we observe that it only takes $1000$ training steps for arriving at reasonable estimates. Finally, we intend to assess the control approximation quality for both algorithms tackling the primal problem directly. We choose again $(T,N)=(0.5,15)$ as the deep controlled 2BSDE algorithm did not even yield a satisfying value approximation for $(T,N)=(1,30)$. In contrast to this, we observed that the control process determined by the deep primal SMP algorithm boasts a similar accuracy as the one obtained for the problem with $(T,N)=(0.5,15)$.\newline
    Figure \ref{fig:hestonunconstrcontrol} illustrates the control processes determined by the deep controlled 2BSDE algorithm and the deep primal SMP algorithm, respectively. For this purpose, the neural networks $\mathcal{N}_{\theta_{i,\pi}}$ are depicted on $[0.7,1.4]\times [0,0.25]$ for several indices $i$. We recall that the neural networks $\mathcal{N}_{\theta_{i,\pi}}$ take only a one-dimensional input in the deep primal SMP algorithm. Hence, the corresponding surfaces in Figure \ref{fig:hestonunconstrcontrol} are constant in $\nu$. For each subplot and corresponding $t_{i}$, the domain of the color map is centered around the deterministic random variable $\pi^{*}(t_{i})$. We refer to \cite{hestonkraft, zheng3} for a derivation of the optimal control process $\pi^{*}$, which proves in particular that $\pi^{*}(t_{i})$ is indeed constant. For our parameter choice, $\big\{\pi^{*}(t_{i})\hspace{0.05cm}|\hspace{0.05cm}i\in\{0,\dots,14\}\big\}$ lies in~$[0.9938,0.9983]$. Hence, all the surfaces should lie in a similar region, as we shall see below. 
    \begin{figure}[ht]
        \centering
        \includegraphics[width=15cm]{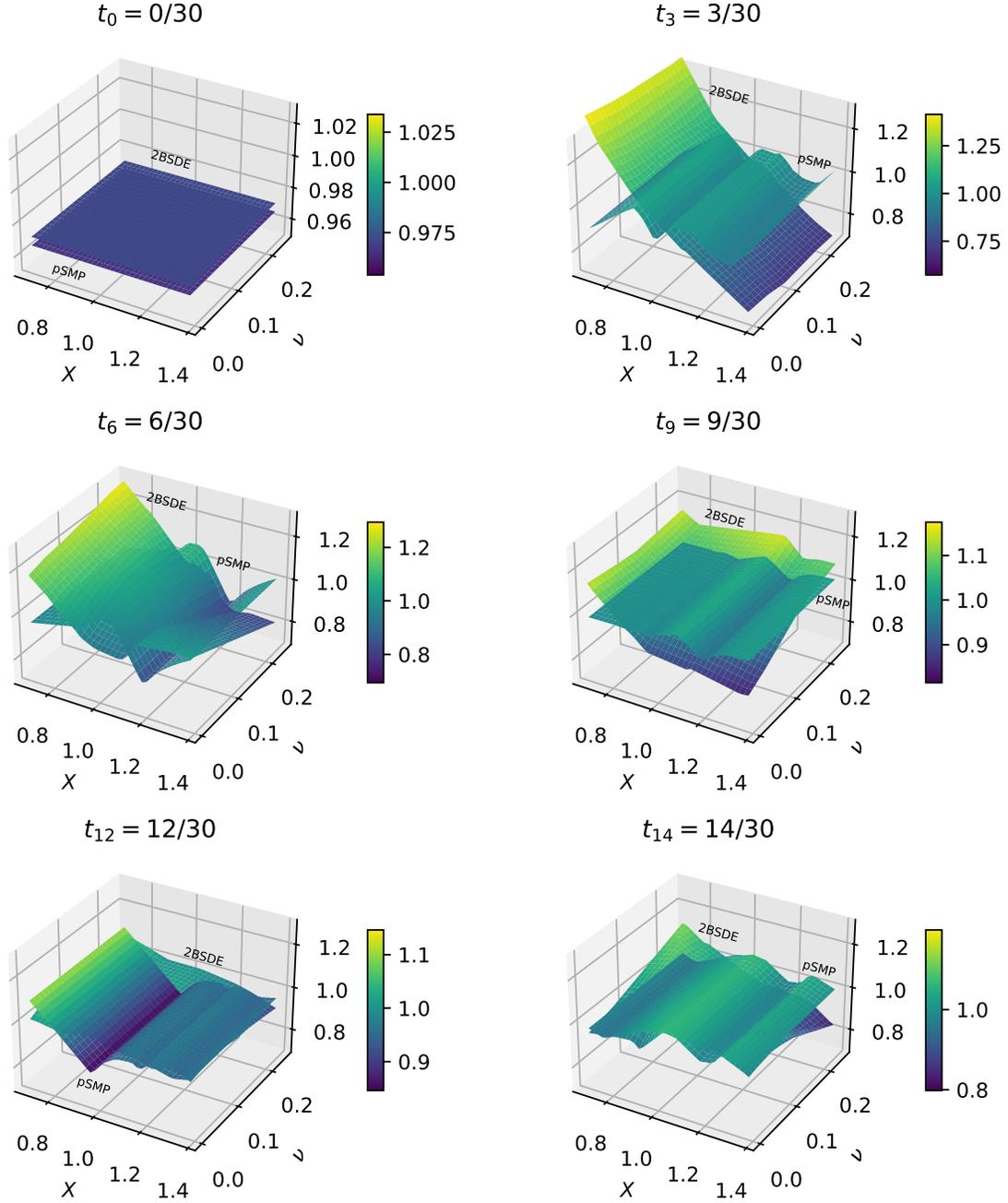}
        \caption{Neural networks $\mathcal{N}_{\theta_{i,\pi}}$ for the unconstrained problem with stochastic volatility studied in Example~\ref{example:Hestonunconstr} with $(T,N)=(0.5,15)$ for several indices $i$ and both algorithms tackling the primal problem directly.}
        \label{fig:hestonunconstrcontrol}
    \end{figure}
    \noindent We observe that the approximation quality is good for both algorithms. This holds in particular for large indices~$i$. The relatively large approximation errors for small indices and pairs whose first component is farther away from $1$ have to be assessed considering the fact that the wealth at time $t_{i}$ is close to $x_{0}=1$ with high probability. Note, however, that the deep primal SMP algorithm achieves a significantly better approximation for these indices. Finally, in Table \ref{tab:tablehestonunconstrcontrol} we compare the maximal absolute errors with respect to the exact solution on~$[0.7,1.4]\times [0,0.25]$.
    \begin{table}[ht]
    \centering
    \addtolength{\leftskip}{-2cm}
    \addtolength{\rightskip}{-2cm}
    \begin{tabular}{|c||c|c|c|c|c|c|}
        \hline
         Algorithm & $i=0$ & $i=3$ & $i=6$ & $i=9$ & $i=12$ & $i=14$ \Tstrut\Bstrut\\
         \hline \hline
          2BSDE & 0.02232 & 0.41483 & 0.28734 & 0.15769 & 0.08373 & 0.16496  \\
          SMP{\_}primal & 0.02771 & 0.09208 & 0.14719 & 0.03757 & 0.14283 & 0.06164 \\
          \hline
    \end{tabular}
    \caption{Maximal deviations of the neural networks $\mathcal{N}_{\theta_{i,\pi}}$ from the deterministic optimal control process at $t_{i}$ on $[0.7,1.4]\times [0,0.25]$, i.e. $\| \pi^{*}(t_{i})-\mathcal{N}_{\theta_{i,\pi}}\|_{\infty}$, for the same indices $i$ as in Figure \ref{fig:hestonunconstrcontrol}.}
    \label{tab:tablehestonunconstrcontrol}
    \end{table}
    \noindent As in Figure \ref{fig:hestonunconstrcontrol}, we observe that the deep primal SMP algorithm yields a more accurate control process than the deep controlled 2BSDE algorithm.
\end{example}
\begin{remark}\label{remark:newqSMP}
    Extending the input vectors of the neural networks $\mathcal{N}_{\theta_{i,q}}$ in both SMP-based algorithms by~$\sqrt{\nu_{i}}$ is motivated by \cite[Theorem 3.2]{zheng1} and \cite[Theorem 3.23]{KW2021deep}, which uncover in particular the structure of the desired integrand in the adjoint BSDE for this two-dimensional Markovian problem. This refinement improved the results in all of our numerical experiments. Moreover, this observation can be seen as a motivation for adapting the algorithms. Potential refinements of the deep primal SMP algorithm are discussed in Section \ref{section:refining}. 
\end{remark}
\noindent Finally, we consider our general utility maximization setup, where $\mu$ and $\sigma$ are deterministic and the short rate $r$ is, as in the Vasicek model, given by an Ornstein-Uhlenbeck process. Hence, the dynamics of $r$ is given~by
\begin{equation}\label{eq:VasicekrSDE}
    dr(t)=\alpha(\beta-r(t))\hspace{0.05cm}dt + \gamma\hspace{0.05cm}dB^{m+1}(t), \hspace{0.3cm}t\in [0,T],
\end{equation}
with initial condition $r(0)=r_{0}\in\mathbb{R}$ and (positive) real-valued parameters $\alpha$, $\beta$ and $\gamma$. Moreover, we assume in Example \ref{example:vasicekconstrhighdim} that the Brownian motions $B^{m+1}$ and $B$ are independent. Hence, $\overline{B}:=(B,B^{m+1})^{\intercal}$ can be viewed as a standard $(m+1)$-dimensional Brownian motion. Since $r$ is not progressively measurable with respect to the filtration generated by $B$, we introduce an additional stock (cf. \eqref{eq:artificialstockheston} and the market completion procedure for the Heston model), whose local martingale part is driven by $B^{m+1}$:
\begin{equation}\label{eq:artificialstockvasicek}
    dS^{m+1}(t)=S^{m+1}(t)\hspace{0.05cm}r(t)\hspace{0.05cm}dt + S^{m+1}(t)\hspace{0.05cm}dB^{m+1}(t), \hspace{0.3cm}t\in [0,T].
\end{equation}
As in our considerations for the Heston model, this stock cannot be traded by the investor, which ensures that the value of the problem remains unchanged under this generalization. Hence, we have a market of the form as introduced in Section \ref{section:UMaxandDual} which is driven by $\overline{B}$. Therefore, both SMP-based algorithms are applicable, where also the dual control process is again essentially $m$-dimensional. For notational convenience we, therefore, identify the processes $\pi$ and $v$ with their first $m$ components in the following. Like in the stochastic volatility setting, also the deep controlled 2BSDE algorithm can be applied here by increasing the dimension of the state processes, i.e. we consider $\big(X^{\pi},r\big)^{\intercal}$ and $\big(Y^{(y,v)},r\big)^{\intercal}$ instead of $X^{\pi}$ and $Y^{(y,v)}$, respectively.
\begin{example}\label{example:vasicekconstrhighdim}
    We consider a $30$-dimensional problem, where short selling is not permitted, i.e. $K=(\mathbb{R}_{0}^{+})^{30}$. Moreover, we choose $U=\log$, $x_{0}=10$ and $T=0.5$. The parameters for the short rate process are selected as $r_{0}=0.05$, $\alpha=5$, $\beta=0.05$ and $\gamma=0.05$. The processes $\mu$ and $\sigma$ are assumed to be deterministic. For every~$t\in [0,0.5]$ and $i,j\in\{1,\dots,30\}$, we choose $\mu_{i}(t)=0.06+0.01\hspace{0.01cm}\sin(4\pi t + \pi i/15)$, $\sigma_{i,i}(t)=0.3/(1+t)$ and $\sigma_{i,j}(t)=0.05$, if $j\neq i$ holds. For our numerical experiments, we select $N=20$. We ensure that the control processes map to $K$ by applying the function $x\mapsto x^{2}$ componentwise to the outputs of the final dense layers of the corresponding neural networks. Moreover, the same function serves as projection $h_{K}$ in the deep SMP algorithm. Since $\widetilde{K}=K$ holds for this particular problem, we choose the same projection function for the dual control processes. The learning rate schedules are given in Table~\ref{tab:tablevasicekconstrLR} below. As in many of our previous examples, choosing larger values for $\varepsilon$, e.g. $\varepsilon=1$ or $\varepsilon=100$, is highly favorable with regards to the convergence of the studied algorithms. We ran each algorithm for 10000 training steps. The evolution of the implied value approximations in the course of this procedure is depicted in Figure~\ref{fig:vasicekconstrvalueapprox} below.
    \begin{figure}[ht]
        \centering
        \includegraphics[width=15cm]{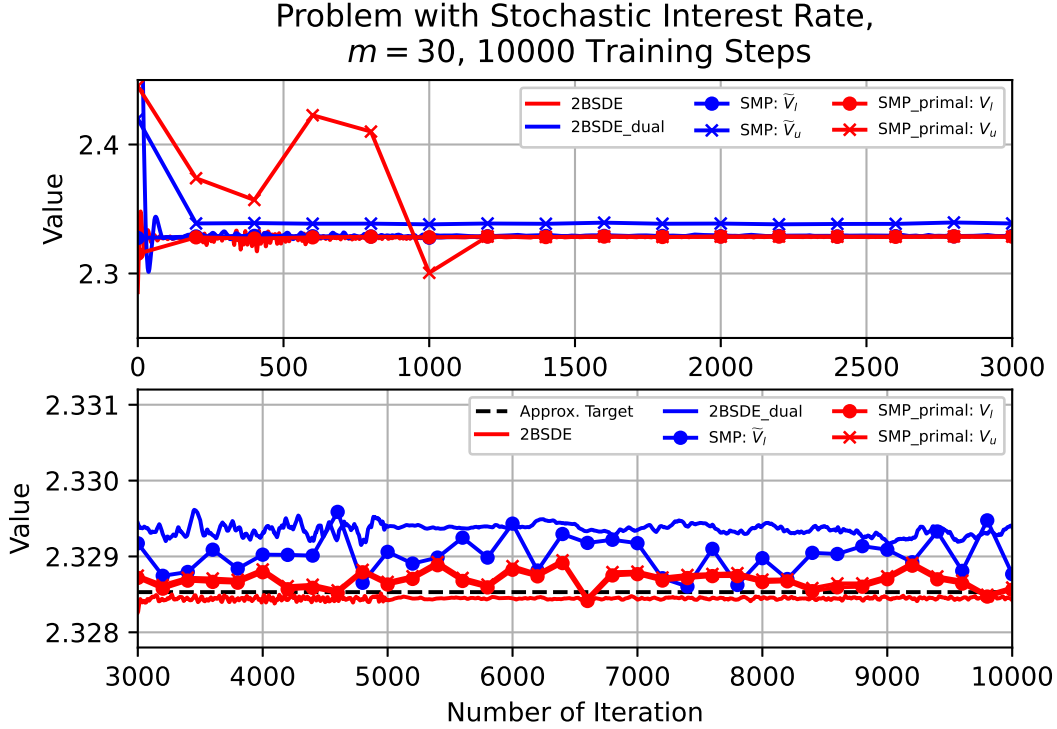}
        \caption{Value approximations for the high-dimensional, constrained problem with stochastic interest rate studied in Example \ref{example:vasicekconstrhighdim} in the course of 10000 training steps.}
        \label{fig:vasicekconstrvalueapprox}
    \end{figure}
    \noindent We observe that most bounds converge astonishingly fast. The initial variability of $V_{u}$ can be explained by the fact that $-p$ is used as an estimator for the parameter $y$ from the dual problem (cf.~\eqref{eq:boundsprimaldeepSMP}). Since $p$ is optimized with respect to $\mathcal{L}_{BSDE}$, its initial learning rate is $10^{-2}$ according to Table \ref{tab:tablevasicekconstrLR}, whereas the learning rate schedule for $y$ starts at $10^{-3}$ for both algorithms tackling the dual problem. Once the learning rate is reduced at training step $1000$, $V_{u}$ closes the duality gap with respect to $V_{l}$. In contrast to this, the deep SMP algorithm yields a steady duality gap which does not decrease as the training procedure progresses. While $\widetilde{V}_{l}$ remains close to the other value approximations, $\widetilde{V}_{u}$ yields a significantly higher value. Interestingly enough, applying the other algorithms to the corresponding unconstrained problem yields values which are very close to $\widetilde{V}_{u}$. Hence, this is another example, where the deep SMP algorithm mistakes $v\equiv 0$ for the optimal dual control. At the end of the training procedure, we obtain the following estimates for our constrained problem:
    \begin{equation}\label{eq:vasicekconstr1}
    \begin{aligned}
        &V = 2.32842,\hspace{0.3cm}\widetilde{V}= 2.32939,\hspace{0.3cm} \widetilde{V}_{l}= 2.32877,\hspace{0.3cm} \widetilde{V}_{u}= 2.33868, \\
        &V_{l}= 2.32856\hspace{0.3cm}\mbox{and}\hspace{0.3cm} V_{u}= 2.32859.
    \end{aligned}
    \end{equation}
    As all values except $\widetilde{V}_{u}$ lie within an interval of length $10^{-3}$, we are highly confident that also the true value of the problem lies within this region. Moreover, we solved the corresponding problem with deterministic short rate $r\equiv r_{0}$ by means of the machinery described in Example \ref{example:highdimlogutconstr}, which led to $2.32853$. While this is certainly not an exact benchmark, it emphasizes that the values in \eqref{eq:vasicekconstr1} are plausible.
\end{example}

\section{Two Possible Ways of Improving the Deep Primal SMP Algorithm}
\label{section:refining}
Modeling each $q_{1}(t_{i})$ as a function of the current wealth (cf. \eqref{eq:networkprimalSMP}) is motivated by \cite[Theorem 3.2]{zheng1} and \cite[Theorem 3.23]{KW2021deep}, which are formulated for Markovian problems and uncover, in particular, the structure of the solution to the adjoint BSDE. The time dependence is insofar incorporated as we consider a different neural network for each $t_{i}$. Moreover, using a similar ansatz for the control process leading to a Markovian control is inspired by \cite[Theorem 11.2.3]{oksendal}, which states (under additional assumptions) that Markovian controls cannot be outperformed by more general progressively measurable processes. However, since the problems solved by the deep primal SMP algorithm are not necessarily Markovian, there is still room for improvements as far as the network architectures are concerned. For example, we have already witnessed in Example \ref{example:Hestonunconstr} that adding $\sqrt{\nu_{i}}$ as an input for the neural network $\mathcal{N}_{\theta_{i,q}}$ for each $i\in\{1,\dots,N-1\}$ improves the obtained value approximations. Clearly, this phenomenon is not surprising since the Heston model has a two-dimensional Markovian structure (see also Remark \ref{remark:newqSMP}). \vspace{0.3cm}\newline
We start this section by first discussing an alternative approach for generating and using the training data. Motivated by the above considerations, we will then also discuss more general network architectures by introducing a semi-recurrent structure connecting each neural network to the one associated with the previous point in time. We assess the success of these changes by reconsidering Examples \ref{example:highdimlogutconstr} and \ref{example:Hestonunconstr}, for which exact theoretical benchmarks are known. 
\subsection{Introducing a Learning Procedure Based on Epochs}
\label{subsection:epochenverallg}
Since the algorithms defined in \cite{zheng1, jentzenbsde} served as inspiration for the formulation of the deep primal SMP algorithm, we inherited their approach for generating and managing the training data. This essentially means that in each training step a mini-batch is sampled, which will then be discarded. Hence, the algorithm works with each Brownian sample path almost surely at most once. However, it is a commonly used concept in deep learning that the algorithm learns an a priori fixed training data set several times in the course of the training procedure. The amount of iterations of the entire dataset corresponds to the so-called number of epochs. In addition to that, \eqref{eq:PrimalnonMarkovcondition1} is a pointwise maximization condition, which favors the second approach as each path is considered multiple times during the procedure.  Hence, this subsection is devoted to answering, whether it is more favorable to study as many paths as possible or certain paths more intensively. \vspace{0.3cm}\newline
For each problem, we apply 10000 training steps, which means that $640000$ paths can be considered as we have chosen $b_{size}=64$. In contrast to the previous sections, we sample the entire training data, i.e. the Brownian sample paths, at the beginning of the procedure. For example, if we wish that each path is learned 10 times, then our dataset has to consist of 64000 paths. Moreover, we ensure that the sample paths' order is the same for each epoch. Hence, as 10000 is a multiple of the number of epochs, also the batch composition is epoch-invariant. This is highly beneficial as it mitigates the simplification of converting the pathwise condition \eqref{eq:PrimalnonMarkovcondition1} into the sample-mean-based loss function \eqref{eq:primalSMPcontrolloss}. In order to assess the effect of the above innovation, we reconsider the problems discussed in Examples \ref{example:highdimlogutconstr} and \ref{example:Hestonunconstr} while varying the number of epochs. We apply again the learning rate schedules given in Tables \ref{tab:tablehighdimlogutunconstrLR} and~\ref{tab:tablehestonunconstrLR}. The results are summarized in Tables \ref{tab:tablehighdimlogutepochs} and \ref{tab:tableHestonepochs} below.
\begin{table}[ht]
    \centering
    \begin{tabular}{|c||c|c||c|c||c|}
        \hline
         No. Epochs & $V_{l}$ & $|V_{l}-V|$ & $V_{u}$ & $|V_{u}-V|$ & $\max\{|V_{u}-V|, |V_{l}-V|\}/V$ \Tstrut\Bstrut\\
         \hline \hline
          $1$ & 2.34308 & 0.00027 & 2.34324 & 0.00011 & 0.00012\\
          $5$ & 2.34309 & 0.00026 & 2.34326 & 0.00009 & 0.00011\\
          $10$ & 2.34317 & 0.00018 & 2.34333 & 0.00002 & 0.00008\\
          $50$ & 2.34329 & 0.00006 & 2.34349 & 0.00014 & 0.00006\\
          $100$ & 2.34312 & 0.00023 & 2.34332 & 0.00003 & 0.00010\\
          $500$ & 2.34321 & 0.00014 & 2.34338 & 0.00003 & 0.00006\\
          $1000$ & 2.34326 & 0.00009 & 2.34345 & 0.00010 & 0.00004\\
          \hline
    \end{tabular}
    \caption{Value approximations, absolute and relative errors at the end of the training procedure for the problem studied in Example \ref{example:highdimlogutconstr} in comparison with the theoretical benchmark $V=2.34335$ for various choices of the number of epochs.}
    \label{tab:tablehighdimlogutepochs}
    \end{table}
    \begin{table}[ht]
    \centering
    \begin{tabular}{|c||c|c||c|c||c|}
        \hline
         No. Epochs & $V_{l}$ & $|V_{l}-V|$ & $V_{u}$ & $|V_{u}-V|$ & $\max\{|V_{u}-V|, |V_{l}-V|\}/V$ \Tstrut\Bstrut\\
         \hline \hline
          $1$ & 2.04283 & 0.00015 & 2.04298 & 0.00030 & 0.00015\\
          $5$ & 2.04258 & 0.00010 & 2.04286 & 0.00018 & 0.00009\\
          $10$ & 2.04238 & 0.00030 & 2.04283 & 0.00015 & 0.00015\\
          $50$ & 2.04263 & 0.00005 & 2.04295 & 0.00027 & 0.00013\\
          $100$ & 2.04253 & 0.00015 & 2.04287 & 0.00019 & 0.00009\\
          $500$ & 2.04264 & 0.00004 & 2.04298 & 0.00030 & 0.00015\\
          $1000$ & 2.04248 & 0.00020 & 2.04309 & 0.00041 & 0.00020\\
          \hline
    \end{tabular}
    \caption{Value approximations, absolute and relative errors at the end of the training procedure for the problem studied in Example \ref{example:Hestonunconstr} with $(T, N)=(0.5, 15)$ in comparison with the theoretical benchmark $V=2.04268$ for various choices of the number of epochs.}
    \label{tab:tableHestonepochs}
    \end{table}
    \noindent The number of training steps per epoch can be easily obtained by dividing 10000 by the number of epochs. The quality of the results is measured by the maximum of the relative errors of both bounds as it would be unclear which bound is the most accurate, if no exact benchmark was known. Both tables show that learning the same paths multiple times can improve the approximation quality even further. However, in both cases we observed that the error is not inversely proportional to the number of epochs meaning that considering the same mini-batch for each training step even leads to the divergence of the algorithm. Hence, learning only from 64 different paths (i.e. No. Epochs = 10000) during the entire training procedure is certainly too extreme, whereas studying 640 paths might be worthwhile. In Table \ref{tab:tablehighdimlogutepochs}, this configuration even yields the smallest error. On the other hand, this choice produces the worst bounds in Table \ref{tab:tableHestonepochs}. Hence, the optimal choice of the number of epochs depends certainly on the specific problem, which is why we suggest treating the number of epochs as a hyperparameter, starting with a medium value and trying out several choices. As we have seen above, this approach turns out to be a powerful generalization of the classical version of the algorithm.
    
\subsection{Modeling the Control Process and the BSDE Integrand via a Semi-Recurrent Network Architecture}
\label{subsection:rekurrentverallg}
In the original formulation of the deep primal SMP algorithm, we chose to model the time-discretizations of the processes $\pi$ and $q_{1}$ by means of $N$ independent neural networks each. In the following, we aim at interconnecting each neural network with the corresponding network of the previous point in time. This semi-recurrent network architecture is motivated by \cite{deephedging}, where this structure yields promising results in the context of hedging a portfolio of derivatives. We are going to test such an architecture for each of the processes $\pi$ and $q_{1}$. Since the quotient $T/N$ is usually small, whence $|X_{i}-X_{i-1}|$ is small with high probability, and neural networks are continuous functions, the most important data points provided by the neural networks $\mathcal{N}_{\theta_{i-1,\pi}}$ and $\mathcal{N}_{\theta_{i-1,q}}$ for the corresponding networks at $t_{i}$ are the vectors $\pi_{i-1}$ and $q_{i-1}$, respectively. We prefer this approach over passing significantly more points of the network graphs to the next level as this would, especially for large $m$, significantly increase the dimension of the input spaces of the neural networks. In the context of the semi-recurrent architecture for the control process, our choice of examples is particularly interesting as $\pi^{*}$ is ``almost time-homogeneous'' in Example \ref{example:Hestonunconstr}, whereas the optimal control in Example \ref{example:highdimlogutconstr} is highly time-variant. \vspace{0.3cm}\newline
Clearly, the construction of $\pi_{0}$ and $q_{0}$ at $t_{0}=0$ remains the same as in the classical formulation. For each remaining $t_{i}$, we substitute the construction of $\pi_{i}$ and $q_{i}$ according to \eqref{eq:networkprimalSMP} with
\begin{equation}\label{eq:networkprimalSMPrecurrent}
    \pi_{i}:= \mathcal{N}_{\theta_{i,\pi}}(X_{i}, \pi_{i-1}) \hspace{0.3cm}\mbox{and}\hspace{0.3cm} q_{i}:= \mathcal{N}_{\theta_{i,q}}(X_{i}, q_{i-1}).
\end{equation}
In the following, we will also consider both variants where only one of the two processes is modeled via the semi-recurrent architecture, whereas the other one is built by means of the classical scheme. For our numerical experiments, we apply again the learning rate schedules given in Tables \ref{tab:tablehighdimlogutunconstrLR} and \ref{tab:tablehestonunconstrLR}. The results are provided in Tables \ref{tab:tablehighdimlogutrecurrent} and \ref{tab:tableHestonrecurrent}.
\begin{table}[ht]
    \centering
    \begin{tabular}{|c||c|c||c|c||c|}
        \hline
         $(\pi\mbox{-Arch.},q\mbox{-Arch.})$ & $V_{l}$ & $|V_{l}-V|$ & $V_{u}$ & $|V_{u}-V|$ & $\max\{|V_{u}-V|, |V_{l}-V|\}/V$ \Tstrut\Bstrut\\
         \hline \hline
          $(c,c)$ & 2.34308 & 0.00027 & 2.34324 & 0.00011 & 0.00012\\
          $(r,c)$ & 2.34320 & 0.00015 & 2.34337 & 0.00002 & 0.00006\\
          $(c,r)$ & 2.34284 & 0.00051 & 2.34301 & 0.00034 & 0.00022\\
          $(r,r)$ & 2.34295 & 0.00040 & 2.34312 & 0.00023 & 0.00017\\
          \hline
    \end{tabular}
    \caption{Value approximations, absolute and relative errors at the end of the training procedure for the problem studied in Example \ref{example:highdimlogutconstr} in comparison with the theoretical benchmark $V=2.34335$ for various combinations of the classical $(c)$ and the semi-recurrent $(r)$ network architectures.}
    \label{tab:tablehighdimlogutrecurrent}
    \end{table}
    \begin{table}[ht]
    \centering
    \begin{tabular}{|c||c|c||c|c||c|}
        \hline
         $(\pi\mbox{-Arch.},q\mbox{-Arch.})$ & $V_{l}$ & $|V_{l}-V|$ & $V_{u}$ & $|V_{u}-V|$ & $\max\{|V_{u}-V|, |V_{l}-V|\}/V$ \Tstrut\Bstrut\\
         \hline \hline
          $(c,c)$ & 2.04283 & 0.00015 & 2.04298 & 0.00030 & 0.00015\\
          $(r,c)$ & 2.04243 & 0.00025 & 2.04278 & 0.00010 & 0.00012\\
          $(c,r)$ & 2.04273 & 0.00005 & 2.04295 & 0.00027 & 0.00013\\
          $(r,r)$ & 2.04262 & 0.00006 & 2.04296 & 0.00028 & 0.00014\\
          \hline
    \end{tabular}
    \caption{Value approximations, absolute and relative errors at the end of the training procedure for the problem studied in Example \ref{example:Hestonunconstr} with $(T, N)=(0.5, 15)$ in comparison with the theoretical benchmark $V=2.04268$ for various  combinations of the classical $(c)$ and the semi-recurrent $(r)$ network architectures.}
    \label{tab:tableHestonrecurrent}
    \end{table}
    \noindent In contrast to the consequences of introducing an epoch-based learning procedure, the effect of semi-recurrent architectures is rather inhomogeneous depending on the configuration. In Table \ref{tab:tablehighdimlogutrecurrent}, the configurations $(c,r)$ and $(r,r)$ even yield noticeably larger errors. Moreover, the improvements for our stochastic volatility problem are not significant. However, using a semi-recurrent architecture for $\pi$ and the classical one for the BSDE integrand yields for both problems the best value approximation, whence this choice can certainly be regarded as a valuable refinement of the classical version of the algorithm. As we have only considered two problems where already the classical version performs exceptionally well, also both choices $(c,r)$ and $(r,r)$ might be valuable adjustments for other problems. The outstanding performance of the semi-recurrent architecture for $\pi$ while modeling the BSDE integrand classically is in two different ways remarkable: On the one hand, since the optimal control is highly time-variant in Example \ref{example:highdimlogutconstr}, whereas it is ``almost time-homogeneous'' in Example~\ref{example:Hestonunconstr} and on the other hand, as $\pi^{*}(t)$ is a constant random variable for every $t\in [0,T]$ and both problems. As the dimension of the input space of each $\mathcal{N}_{\theta_{i,\pi}}$ is increased (for our high-dimensional example even by a factor of $30$), it should, in theory, become more difficult to learn the constants $\pi^{*}(t_{i})$, which, however, seems to cause no problems for the algorithm since the approximation quality became even better.

\section{Conclusion}\label{section:conclusion}

In the course of this paper, we formulated a deep learning based algorithmic solver for tackling the utility maximization problem with convex constraints in its full generality. For this purpose, we derived a stochastic maximum principle which is also applicable to power utility functions and even more general ones, if certain families of random variables either satisfy a uniform integrability property or there exists a lower bound with integrable negative part. Like the deep SMP algorithm and in contrast to the deep controlled 2BSDE algorithm, our novel algorithm can also handle path dependent random coefficients. Moreover, we conclude from the results of the studied problems with non-trivial constraints that also the upper bound implied by the deep primal SMP algorithm serves as a reasonable value approximation. This is an essential advantage over the deep SMP algorithm as the latter method tends to produce the value of the corresponding unconstrained problem as the upper bound (see Example \ref{example:highdimlogutconstr} for a rationale). Combining this with the fact that the deep primal SMP algorithm outperformed its dual counterpart in Subsection \ref{subsection:trulynonmarkovcoeff} and that it produced more accurate results than both versions of the deep controlled 2BSDE algorithm in our high-dimensional example with deterministic coefficients (see Example \ref{example:highdimlogutconstr}) illustrates and underscores the power of our novel algorithm. Its performance can be improved even further by introducing an epoch-based learning procedure as this setup complies more with the pathwise nature of the maximization condition \eqref{eq:PrimalnonMarkovcondition1}. Moreover, choosing a semi-recurrent network architecture for the control process as well as the BSDE integrand can also be a valuable extension of the algorithm's original formulation. Our experiments suggest that modeling the control process this way while choosing the classical architecture for the BSDE integrand is the most powerful configuration. \vspace{0.3cm}\newline
If one is interested in the control process determined by the algorithm, e.g. in order to apply the strategy in practice, one simply has to consider the neural networks $\mathcal{N}_{\theta_{i,\pi}}$, $i\in\{0,\dots,N-1\}$. This is a decisive advantage over the deep SMP algorithm and the dual version of the deep controlled 2BSDE algorithm from~\cite{zheng1} as we cannot use the primal control process implied by a solution to the dual problem (see \eqref{eq:fromdualtoprimal1}) for this purpose. The reason for this is that, in contrast to the wealth process, the dual state process, i.e. $-1$ times the primal adjoint process according to Theorem \ref{theorem:fromdualtoprimal}, is not directly observable in the market.

\section*{Acknowledgements}
The author thanks Josef Teichmann for his valuable remarks and fruitful discussions in the course of the author's stay at ETH Zurich and the associated master's thesis project.

\begin{appendices}
\section{Python Implementation of the Deep Primal SMP Algorithm as Used in Example \ref{example:highdimlogutconstr}}
\label{section:Appendix}

In the following, we provide our Python code for the deep primal SMP algorithm adapted to the specific problem discussed in Example \ref{example:highdimlogutconstr}. The source code can be quite easily adjusted to random coefficient problems by adding the dynamics, which are required for determining the coefficients, to the loops used for the forward simulation of the processes. For example, we included the stock prices in Example \ref{example:randomcoeffunconstrained}, whereas we added the instantaneous variance process $\nu$ in Example \ref{example:Hestonunconstr}. Hence, these processes are simulated by means of the same Euler-Maruyama scheme. Moreover, adapting Code \ref{code:CodeprimalSMP} to the refinements proposed in Section \ref{section:refining} is also fairly straightforward. %\vspace{0.3cm}\newline
At the end of training, it suffices in most cases to save just the neural networks $\mathcal{N}_{\theta_{i,\pi}}, i\in\{0,\dots,N-1\}$, as these are the most crucial results for a trader wishing to apply the obtained strategy in practice. This can be done by means of the inherited save-method of the PartNetwork class.\vspace{0.3cm}

\definecolor{mygreen}{rgb}{0,0.6,0}
\definecolor{mygray}{rgb}{0.5,0.5,0.5}
\definecolor{mymauve}{rgb}{0.58,0,0.82}
\definecolor{myred}{rgb}{0.6,0,0}
\definecolor{myorange}{rgb}{0.93, 0.53, 0.18}
\definecolor{darkcoral}{rgb}{0.8, 0.36, 0.27}
\lstset{ 
  backgroundcolor=\color{white},   % choose the background color; you must add \usepackage{color} or \usepackage{xcolor}; should come as last argument
  basicstyle=\footnotesize,        % the size of the fonts that are used for the code
  breakatwhitespace=false,         % sets if automatic breaks should only happen at whitespace
  breaklines=true,                 % sets automatic line breaking
  captionpos=b,                    % sets the caption-position to bottom
  commentstyle=\color{mygreen},    % comment style 
  escapeinside={<*}{*>},
  extendedchars=true,              % lets you use non-ASCII characters; for 8-bits encodings only, does not work with UTF-8
  firstnumber=1,                % start line enumeration with line 1000
  frame=single,	                   % adds a frame around the code
  keepspaces=true,                 % keeps spaces in text, useful for keeping indentation of code (possibly needs columns=flexible)
  keywordstyle=\color{blue},       % keyword style
  language=Python,                 % the language of the code   
  morekeywords={*, with, as},
  numbers=left,                    % where to put the line-numbers; possible values are (none, left, right)
  columns=flexible,
  classoffset=1,% starting a new class
  morekeywords={True, False, None},
  deletekeywords={pow},
  keywordstyle=\color{myorange},
  numbersep=8pt,                   % how far the line-numbers are from the code
  numberstyle=\tiny\color{mygray}, % the style that is used for the line-numbers
  rulecolor=\color{black},         % if not set, the frame-color may be changed on line-breaks within not-black text (e.g. comments (green here))
  stepnumber=1,                    % the step between two line-numbers. If it's 1, each line will be numbered
  stringstyle=\color{darkcoral}, % string literal style
  showstringspaces=false,
  showtabs=false,
  tabsize=2,	                   % sets default tabsize to 2 spaces
  % show the filename of files included with \lstinputlisting; also try caption
}

\lstset{caption={Python code for the deep primal SMP algorithm in the setting of Example \ref{example:highdimlogutconstr}.}}
\begin{lstlisting}[label=code:CodeprimalSMP]
import tensorflow as tf
from tensorflow import keras
import numpy as np
import time
from <*scipy.stats*> import multivariate_normal

class DeepSMP_primal(keras.Model):
    
    def __init__(<*self,*> **kwargs):
        super(DeepSMP_primal, self).__init__(**kwargs)
        <*self.*>m = 30
        <*self.*>T = 0.5
        <*self.*>N = 10
        <*self.*>dt = <*self.*>T/<*self.*>N
        <*self.*>layers_num = 4 # number of hidden + 2
        <*self.*>nodes = [11, 11]  # nodes in hidden layers
        <*self.*>batch_size = 64
        <*self.*>x0 = 10
        <*self.*>schedule1 = <*keras.optimizers.schedules.*>PiecewiseConstantDecay([1000, 3000, 8000], 
                                                 [1e-2, 1e-3, 1e-4, 1e-5])
        <*self.*>schedule2 = [<*keras.optimizers.schedules.*>PiecewiseConstantDecay([1000, 3000, 8000], 
                                                 [1e-3, 1e-4, 1e-5, 1e-6]) for _ in range(<*self.*>N)]
        <*self.*>optimizer1 = <*keras.optimizers.*>Adam(learning_rate = <*self.*>schedule1)
        <*self.*>optimizer2 = [<*keras.optimizers.*>Adam(learning_rate = <*self.*>schedule2[i]) 
                          for i in range(<*self.*>N)]  
        <*self.*>p0 = <*tf.Variable*>(np.random.uniform(low=-0.4, high=-0.2), trainable=True)
        <*self.*>pi0 = <*tf.Variable*>(np.random.uniform(size=(1, <*self.*>m), low=0, high=0.2), 
                            trainable=True, dtype=<*tf.*>float32, 
                            constraint=lambda x: <*tf.*>where(x<-1/<*self.*>m, -1/<*self.*>m, x))
        <*self.*>Q0 = <*tf.Variable*>(np.random.uniform(size=(1, <*self.*>m), low=-0.1, high=0.1), 
                           trainable=True, dtype=<*tf.*>float32)
        <*self.*>ModelQ = [<*self.*>Q0] + [PartNetwork(<*self.*>m, <*self.*>layers_num, <*self.*>nodes, isQ=True) 
                                 for _ in range(<*self.*>N-1)]
        <*self.varforloss1*> = [<*self.*>p0] + <*self.*>ModelQ
        <*self.*>Modelpi = [<*self.*>pi0] + [PartNetwork(<*self.*>m, <*self.*>layers_num, <*self.*>nodes, isQ=False) 
                                  for _ in range(<*self.*>N-1)]
        <*self.*>history_time = []
        <*self.*>history_p0 = []
        <*self.*>mc_size = 100000
        <*self.*>history_bound_u = []
        <*self.*>history_bound_l = []
    
    def build(self):
        for i in range(<*self.*>N-1):
            <*self.*>Modelpi[i+1](<*tf.*>zeros(shape=(1, 1)), training = False)
            <*self.*>ModelQ[i+1](<*tf.*>zeros(shape=(1, 1)), training = False)
        return
    
    def sigma(<*self,*> t):
        sigma = (0.3*(1+<*tf.*>sqrt(t))-0.1)*<*tf.*>eye(<*self.*>m) + 0.1*<*tf.*>ones(shape=(<*self.*>m, <*self.*>m))
        return sigma
    
    def sigma_inv(<*self,*> t):
        sigma_inv = <*tf.linalg.inv*>(<*self.*>sigma(t))
        return sigma_inv
    
    def mu(<*self, t, size*>):
        helper = <*tf.*>expand_dims(<*\textcolor{black}{tf.range}*>(1, limit=<*self.*>m+1, dtype = <*tf.*>float32), axis=0)
        mu = 0.07 + 0.02*<*tf.*>sin(4*np.pi*t+2*np.pi*helper/<*self.*>m)
        return mu*tf.ones(shape=(size, <*self.*>m))
    
    def r(<*self, t, size*>):
        r = 0.06*<*tf.*>exp(0.5*t)
        return r*<*tf.*>ones(shape=(size, 1))
    
    def theta(<*self, t, size*>):
        theta = <*tf.*>transpose(<*tf.*>matmul(<*self.*>sigma_inv(t), <*self.*>mu(t, size)-<*self.*>r(t, size), 
                                     transpose_b = True))
        return theta
    
    def U_transform(<*self,*> x):
        x = <*tf.*>expand_dims(x, axis = -1)
        return <*tf.*>where(x>0, -1-<*tf.*>math.log(<*tf.*>where(x>0, x, 1)), 0)
    
    def g(<*self,*> x):
        x = <*tf.*>expand_dims(x, axis = -1)
        return <*tf.*>where(x>0, <*tf.*>math.log(<*tf.*>where(x>0, x, 1)), 0)
    
    def gx(<*self,*> x):
        x = <*tf.*>expand_dims(x, axis = -1)
        return <*tf.*>where(x>0, <*tf.*>pow(<*tf.*>where(x>0, x, 1), -1), 0)
    
    def loss1(<*self,*> x, p):
        return <*tf.*>reduce_mean(<*tf.*>square(<*tf.*>expand_dims(p, axis = -1) + <*self.*>gx(x)))
    
    def loss2(<*self,*> t, p, pi, q, size):
        return (<*tf.*>reduce_mean(<*tf.*>reduce_sum(pi * <*tf.*>transpose(<*tf.*>matmul(<*self.*>sigma(t), 
                <*tf.*>expand_dims(p, axis = -1) * <*self.*>theta(t, size) + q, transpose_b = True)), 
                axis = 1, keepdims = True)))
    
    def bounds(self):
        dW = <*tf.constant(multivariate\_normal.*>rvs(size=[<*self.*>mc_size, <*self.*>m, <*self.*>N]), 
                        dtype=<*tf.*>float32)
        if <*self.*>m == 1:
            dW = <*tf.*>expand_dims(dW, axis = 1)
        X, P = <*self.*>simulate1(dW, size = <*self.*>mc_size, training = False)
        
        bound_l = <*tf.*>reduce_mean(<*self.*>g(X))
        bound_u = <*tf.*>reduce_mean(<*self.*>U_transform(-P)) - <*self.*>p0*<*self.*>x0
            
        return [bound_l, bound_u]
    
    <*\textcolor{red}{$@$ tf.autograph.experimental.do$\_$not$\_$convert}*>
    def simulate1(<*self,*> dW, size, training):
        
        X = <*self.*>x0 * <*tf.*>ones(size)
        P = <*self.*>p0 * <*tf.*>ones(size)
        
        pi = <*self.*>pi0 * <*tf.*>ones(shape=(size, <*self.*>m))
        Q = <*self.*>Q0 * <*tf.*>ones(shape=(size, <*self.*>m))
        
        P = P - <*tf.*>squeeze(((<*self.r*>(0.0, size) + <*tf.*>reduce_sum(<*tf.*>matmul(pi, <*self.*>sigma(0.0))
                * <*self.theta*>(0.0, size), axis = 1, keepdims = True)) * <*tf.*>expand_dims(P, axis = -1)
                + <*tf.*>reduce_sum(Q * <*tf.*>matmul(pi, <*self.*>sigma(0.0)), axis = 1, keepdims = True)) 
                * <*self.*>dt, axis = 1) \
                + <*tf.*>reduce_sum(dW[:, :, 0] * Q * <*tf.*>sqrt(<*self.*>dt), axis = 1)
        X = X + <*tf.*>squeeze(<*tf.*>expand_dims(X, axis = -1) * (<*self.r*>(0.0, size) + <*tf.*>reduce_sum(
                <*tf.*>matmul(pi, <*self.*>sigma(0.0)) * <*self.theta*>(0.0, size), axis = 1, keepdims = True)) 
                * <*self.*>dt, axis = 1) \
                + <*tf.*>reduce_sum(dW[:, :, 0] * <*tf.*>expand_dims(X, axis = -1) 
                * <*tf.*>matmul(pi, <*self.*>sigma(0.0)) * <*tf.*>sqrt(<*self.*>dt), axis = 1)
        
        for i in range(<*self.*>N-1):
            
            pi = <*self.*>Modelpi[i+1](<*tf.*>expand_dims(X, axis = -1), <*training)*>
            Q = <*self.*>ModelQ[i+1](<*tf.*>expand_dims(X, axis = -1), <*training)*>
            
            P = P - <*tf.*>squeeze(((<*self.*>r(<*self.*>dt*(i+1), size) + <*tf.*>reduce_sum(<*tf.*>matmul(pi, 
                    <*self.*>sigma(<*self.*>dt*(i+1))) * <*self.*>theta(<*self.*>dt*(i+1), size), axis = 1, 
                    keepdims = True)) * <*tf.*>expand_dims(P, axis = -1)
                    + <*tf.*>reduce_sum(Q * <*tf.*>matmul(pi, <*self.*>sigma(<*self.*>dt*(i+1))), axis = 1, 
                    keepdims = True)) * <*self.*>dt, axis = 1) \
                    + <*tf.*>reduce_sum(dW[:, :, i+1] * Q * <*tf.*>sqrt(<*self.*>dt), axis = 1)
            X = X + <*tf.*>squeeze(<*tf.*>expand_dims(X, axis = -1) * (<*self.*>r(<*self.*>dt*(i+1), size) 
                    + <*tf.*>reduce_sum(<*tf.*>matmul(pi, <*self.*>sigma(<*self.*>dt*(i+1))) * <*self.*>theta(
                    <*self.*>dt*(i+1), size), axis = 1, keepdims = True)) * <*self.*>dt, axis = 1) \
                    + <*tf.*>reduce_sum(dW[:, :, i+1] * <*tf.*>expand_dims(X, axis = -1) 
                    * <*tf.*>matmul(pi, <*self.*>sigma(<*self.*>dt*(i+1))) * <*tf.*>sqrt(<*self.*>dt), axis = 1)
        
        return X, P
    
    <*\textcolor{red}{$@$ tf.function}*>
    def optimize1(<*self,*> dW):
        with <*tf.*>GradientTape(watch_accessed_variables = False) as tape:
            <*tape.*>watch(<*self.varforloss1.trainable$\_$variables*>)
            
            X, P = <*self.*>simulate1(dW, size = <*self.*>batch_size, training = True)
            loss1 = <*self.*>loss1(X, P)
            
        grad1 = <*tape.*>gradient(loss1, <*self.varforloss1.trainable$\_$variables*>)
        <*self.optimizer1.*>apply_gradients(zip(grad1, <*self.varforloss1.trainable$\_$variables*>))
        return
    
    <*\textcolor{red}{$@$ tf.function}*>
    def optimize2_0(<*self,*> optimizer):
        with <*tf.*>GradientTape(watch_accessed_variables = False) as tape:
            <*tape.*>watch(<*self.*>pi0)
            
            # X = <*\textcolor{mygreen}{self.}*>x0 * <*\textcolor{mygreen}{tf.}*>ones(<*\textcolor{mygreen}{self.}*>batch_size)
            P = <*self.*>p0 * <*tf.*>ones(<*self.*>batch_size)
            
            pi = <*self.*>pi0 * <*tf.*>ones(shape=(<*self.*>batch_size, <*self.*>m))
            Q = <*self.*>Q0 * <*tf.*>ones(shape=(<*self.*>batch_size, <*self.*>m))
            
            loss2 = <*self.*>loss2(0.0, P, pi, Q, size = <*self.*>batch_size)
            
        grad = <*tape.*>gradient(loss2, <*self.*>pi0)
        <*optimizer.*>apply_gradients([(grad, <*self.*>pi0)])
        return
    
    <*\textcolor{red}{$@$ tf.autograph.experimental.do$\_$not$\_$convert}*>
    def simulate2(<*self,*> dW, N, X, P, pi, Q):
        
        P = P - <*tf.*>squeeze(((<*self.*>r(<*self.*>dt*(N-1), <*self.*>batch_size) + <*tf.*>reduce_sum(<*tf.*>matmul(pi, 
                <*self.*>sigma(<*self.*>dt*(N-1))) * <*self.*>theta(<*self.*>dt*(N-1), <*self.*>batch_size), axis = 1, 
                keepdims = True)) * <*tf.*>expand_dims(P, axis = -1)
                + <*tf.*>reduce_sum(Q * <*tf.*>matmul(pi, <*self.*>sigma(<*self.*>dt*(N-1))), axis = 1, 
                keepdims = True)) * <*self.*>dt, axis = 1) \
                + <*tf.*>reduce_sum(dW[:, :, N-1] * Q * <*tf.*>sqrt(<*self.*>dt), axis = 1)
        X = X + <*tf.*>squeeze(<*tf.*>expand_dims(X, axis = -1) * (<*self.*>r(<*self.*>dt*(N-1), <*self.*>batch_size) 
                + <*tf.*>reduce_sum(<*tf.*>matmul(pi, <*self.*>sigma(<*self.*>dt*(N-1))) * <*self.*>theta(
                <*self.*>dt*(N-1), <*self.*>batch_size), axis = 1, keepdims = True)) * <*self.*>dt, axis = 1) \
                + <*tf.*>reduce_sum(dW[:, :, N-1] * <*tf.*>expand_dims(X, axis = -1)
                * <*tf.*>matmul(pi, <*self.*>sigma(<*self.*>dt*(N-1))) * <*tf.*>sqrt(<*self.*>dt), axis = 1)
            
        pi = <*self.*>Modelpi[N](<*tf.*>expand_dims(X, axis = -1), training = True)
        Q = <*self.*>ModelQ[N](<*tf.*>expand_dims(X, axis = -1), training = True)
        
        return X, P, pi, Q
        
    def optimize2(<*self,*> dW, N, X, P, pi, Q, optimizer, Model):
        with <*tf.*>GradientTape(watch_accessed_variables = False) as tape:
            <*tape.*>watch(<*Model.trainable\_variables*>)
            
            X, P, pi, Q = <*self.*>simulate2(dW, N, X, P, pi, Q)
            loss = <*self.*>loss2(N*<*self.*>dt, P, pi, Q, size = <*self.*>batch_size)
            
        grad = <*tape.*>gradient(loss, <*Model.trainable\_variables*>)
        <*optimizer.*>apply_gradients(zip(grad, <*Model.trainable\_variables*>))
        return X, P, pi, Q
    
    def optimize2prepare(self):
        <*self.optimizecontrol*> = [<*tf.*>function(<*self.*>optimize2<*).get\_concrete\_function(*>
            <*tf.*>TensorSpec(shape=[<*self.*>batch_size, <*self.*>m, <*self.*>N], dtype=<*tf.*>float32), i+1, 
            <*tf.*>TensorSpec(shape=[<*self.batch\_size],*> dtype=<*tf.*>float32), 
            <*tf.*>TensorSpec(shape=[<*self.batch\_size],*> dtype=<*tf.*>float32), 
            <*tf.*>TensorSpec(shape=[<*self.*>batch_size, <*self.*>m], dtype=<*tf.*>float32), 
            <*tf.*>TensorSpec(shape=[<*self.*>batch_size, <*self.*>m], dtype=<*tf.*>float32), 
            <*self.*>optimizer2[i+1], <*self.*>Modelpi[i+1]) for i in range(<*self.*>N-1)]
        return
    
    def train(<*self,*> steps):
        time_start = time.time()
        for k in range(steps):   
            # dW does not have the desired shape, if BBdim==1
            dW = <*tf.*>constant(<*multivariate\_normal.rvs*>(size=[<*self.*>batch_size, <*self.*>m, <*self.*>N]), 
                            dtype=<*tf.*>float32)
            if <*self.*>m == 1:
                dW = <*tf.*>expand_dims(dW, axis = 1)
            
            <*self.*>optimize1(dW)
                
            optimizer = <*self.*>optimizer2[0]
            <*self.*>optimize2_0(optimizer)
            
            X = <*self.*>x0 * <*tf.*>ones(<*self.*>batch_size)
            P = <*self.*>p0 * <*tf.*>ones(<*self.*>batch_size)
            
            pi = <*self.*>pi0 * <*tf.*>ones(shape=(<*self.*>batch_size, <*self.*>m))
            Q = <*self.*>Q0 * <*tf.*>ones(shape=(<*self.*>batch_size, <*self.*>m))
                
            for i in range(<*self.*>N-1):
                X, P, pi, Q = <*self.optimizecontrol*>[i](dW, X, P, pi, Q)
            
            <*self.history\_time.*>append(time.time()-time_start)
            <*self.history\_p0.*>append(<*self.*>p0.numpy())
            
            if k%50 == 0:
                  print("Step: %d, Time: %.2f, p_0: %.4f"
                        % (k, <*self.*>history_time[-1], <*self.*>history_p0[-1]))

            if k%200 == 0:
                helper = <*self.*>bounds()
                <*self.history\_bound\_l.*>append(helper[0].numpy())
                <*self.history\_bound\_u.*>append(helper[1].numpy())
                print("Step: %d, Bound_l: %.4f, Bound_u: %.4f"
                       % (k, <*self.*>history_bound_l[-1], <*self.*>history_bound_u[-1]))
        return 

        
class PartNetwork(keras.Model):
    
    def __init__(<*self,*> m, layers_num, nodes, isQ, **kwargs):
        super(PartNetwork, self<*).*>__init__(**kwargs)
        <*self.*>m = m
        <*self.*>layers_num = layers_num
        <*self.*>nodes = nodes
        <*self.*>isQ = isQ
        <*self.*>outdim = <*self.*>m

        <*self.*>bnorm_layers = [<*keras.layers.*>BatchNormalization(epsilon=100) 
                            for _ in range(<*self.*>layers_num-1)]
        <*self.*>dense_layers = [<*keras.layers.*>Dense(nodes[i], use_bias=False, activation=None) 
                            for i in range(<*self.*>layers_num-2)]
        <*self.dense\_layers.append*>(<*keras.layers.*>Dense(<*self.*>outdim, activation=None))
        
    def call(<*self,*> x, <*training):*>
        x = <*self.*>bnorm_layers[0](x, <*training)*>
        for i in range(<*self.*>layers_num-2):
            x = <*self.*>dense_layers[i](x)
            x = <*self.*>bnorm_layers[i+1](x, <*training)*>
            x = <*tf.nn.relu*>(x)
        x = <*self.*>dense_layers[<*self.*>layers_num-2](x)
        if <*self.*>isQ == False:
            x = x*x-1/<*self.*>m
        return x
    
    
Model = DeepSMP_primal()
Model.build()
Model.optimize2prepare()
Model.train(10000)
\end{lstlisting}

\section{Learning Rate Schedules as Applied in Our Numerical Experiments}
\label{section:AppendixLRtables}
In the final part of our appendices, we gather the learning rate schedules which were applied in the course of the numerical experiments from the previous sections. As these schedules can be viewed as ``hyperparameters'', adapting them depending on the specific problem and algorithm is highly recommendable. The notation $x\stackrel{y}{\rightarrow}$ means that the learning rate level $x$ will be applied throughout the following $y$ steps. If there is only one schedule given in the final column for the rows labeled as 2BSDE{\_}dual and SMP and the associated problem is constrained, then the learning rate schedules for the dual control and the dual initial value agree. Finally, we introduce two symbols due to the space constraints in Table \ref{tab:tablehighdimlogutunconstrLR}. $\overrightarrow{\alpha}$ stands for the schedule of the corresponding (2)BSDE part, whereas in the case of $\overrightarrow{\beta}$ it is divided by 10.
\begin{table}[H]
\centering
\begin{tabular}{|c||c|c|}
    \hline
     Algorithm & LR $\mathcal{L}_{BSDE}$  & LR $\mathcal{L}_{control}^{i}$ \big/ LR $\mathcal{L}_{dual}$\Tstrut\Bstrut\\
     \hline \hline
      SMP & $10^{-2}\stackrel{300}{\rightarrow} 10^{-3}\stackrel{1700}{\rightarrow}10^{-4} $ & $10^{-2}\stackrel{300}{\rightarrow} 10^{-3}\stackrel{1700}{\rightarrow}10^{-4}$\Tstrut \\
      SMP{\_}primal & $10^{-2}\stackrel{300}{\rightarrow} 10^{-3}\stackrel{1700}{\rightarrow}10^{-4} $ & $10^{-3}\stackrel{300}{\rightarrow} 10^{-4}\stackrel{1700}{\rightarrow}10^{-5} $ \\
      \hline
\end{tabular}
\caption{Piecewise constant learning rate schedules applied in Example \ref{example:randomcoeffunconstrained}.}
\label{tab:tablerandomcoeff}
\end{table}
\begin{table}[H]
    \centering
    \begin{tabular}{|c||c|c|c|}
        \hline
         Algorithm & LR $\mathcal{L}_{(2)BSDE}$  & LR $\mathcal{L}_{control}^{i}$ \big/ LR $\mathcal{L}_{dual}$\Tstrut\Bstrut\\
         \hline \hline
          2BSDE & $10^{-2}\stackrel{2000}{\rightarrow} 10^{-3}\stackrel{3000}{\rightarrow}10^{-4} \stackrel{3000}{\rightarrow}10^{-5}$ & $10^{-3}\stackrel{2000}{\rightarrow} 10^{-4}\stackrel{3000}{\rightarrow}10^{-5} \stackrel{3000}{\rightarrow}10^{-6}$ \Tstrut\\
          2BSDE{\_}dual & $10^{-2}\stackrel{1000}{\rightarrow} 10^{-3}\stackrel{2000}{\rightarrow}10^{-4} \stackrel{3000}{\rightarrow}10^{-5}$ & $\overrightarrow{\beta}$ \big/ $10^{-2}\stackrel{200}{\rightarrow} 10^{-4}\stackrel{800}{\rightarrow}10^{-5} \stackrel{7000}{\rightarrow}10^{-6}$ \\
          SMP & $10^{-2}\stackrel{2000}{\rightarrow} 10^{-3}\stackrel{3000}{\rightarrow}10^{-4} \stackrel{3000}{\rightarrow}10^{-5}$ & $\overrightarrow{\alpha}$ \big/ $10^{-2}\stackrel{200}{\rightarrow} 10^{-4}\stackrel{800}{\rightarrow}10^{-5} \stackrel{7000}{\rightarrow}10^{-6}$ \\
          SMP{\_}primal & $10^{-2}\stackrel{1000}{\rightarrow} 10^{-3}\stackrel{2000}{\rightarrow}10^{-4} \stackrel{5000}{\rightarrow}10^{-5}$ & $10^{-3}\stackrel{1000}{\rightarrow} 10^{-4}\stackrel{2000}{\rightarrow}10^{-5} \stackrel{5000}{\rightarrow}10^{-6}$ \\
          \hline
    \end{tabular}
    \caption{Piecewise constant learning rate schedules applied in Example \ref{example:highdimlogutconstr}.}
    \label{tab:tablehighdimlogutunconstrLR}
\end{table}
\begin{table}[H]
    \centering
    \begin{tabular}{|c||c|c|c|}
        \hline
         Algorithm & LR $\mathcal{L}_{(2)BSDE}$  & LR $\mathcal{L}_{control}^{i}$ \big/ LR $\mathcal{L}_{dual}$\Tstrut\Bstrut\\
         \hline \hline
          2BSDE & $10^{-2}\stackrel{500}{\rightarrow} 10^{-3}\stackrel{1500}{\rightarrow}10^{-4} \stackrel{4000}{\rightarrow}10^{-5}$ & $10^{-3}\stackrel{500}{\rightarrow} 10^{-4}\stackrel{1500}{\rightarrow}10^{-5} \stackrel{4000}{\rightarrow}10^{-6}$ \Tstrut\\
          2BSDE{\_}dual & $10^{-2}\stackrel{1000}{\rightarrow} 10^{-3}\stackrel{2000}{\rightarrow}10^{-4} \stackrel{3000}{\rightarrow}10^{-5}$ & $10^{-3}\stackrel{1000}{\rightarrow} 10^{-4}\stackrel{2000}{\rightarrow}10^{-5} \stackrel{3000}{\rightarrow}10^{-6}$ \\
          SMP & $10^{-2}\stackrel{1000}{\rightarrow} 10^{-3}\stackrel{2000}{\rightarrow}10^{-4} \stackrel{3000}{\rightarrow}10^{-5}$ & $10^{-3}\stackrel{1000}{\rightarrow} 10^{-4}\stackrel{2000}{\rightarrow}10^{-5} \stackrel{3000}{\rightarrow}10^{-6}$ \\
          SMP{\_}primal & $10^{-2}\stackrel{200}{\rightarrow} 10^{-3}\stackrel{800}{\rightarrow}10^{-4} \stackrel{4000}{\rightarrow}10^{-5}$ & $10^{-3}\stackrel{200}{\rightarrow} 10^{-4}\stackrel{800}{\rightarrow}10^{-5} \stackrel{4000}{\rightarrow}10^{-6}$ \\
          \hline
    \end{tabular}
    \caption{Piecewise constant learning rate schedules applied in Example \ref{example:Hestonunconstr}.}
    \label{tab:tablehestonunconstrLR}
\end{table}
\begin{table}[H]
    \centering
    \begin{tabular}{|c||c|c|c|}
        \hline
         Algorithm & LR $\mathcal{L}_{(2)BSDE}$  & LR $\mathcal{L}_{control}^{i}$ \big/ LR $\mathcal{L}_{dual}$\Tstrut\Bstrut\\
         \hline \hline
          2BSDE & $10^{-2}\stackrel{1000}{\rightarrow} 10^{-3}\stackrel{2000}{\rightarrow}10^{-4} \stackrel{2000}{\rightarrow}10^{-5}$ & $10^{-3}\stackrel{1000}{\rightarrow} 10^{-4}\stackrel{2000}{\rightarrow}10^{-5} \stackrel{2000}{\rightarrow}10^{-6}$ \Tstrut\\
          2BSDE{\_}dual & $10^{-2}\stackrel{1000}{\rightarrow} 10^{-3}\stackrel{2000}{\rightarrow}10^{-4} \stackrel{2000}{\rightarrow}10^{-5}$ & $10^{-3}\stackrel{1000}{\rightarrow} 10^{-4}\stackrel{2000}{\rightarrow}10^{-5} \stackrel{2000}{\rightarrow}10^{-6}$ \\
          SMP & $10^{-2}\stackrel{1000}{\rightarrow} 10^{-3}\stackrel{2000}{\rightarrow}10^{-4} \stackrel{2000}{\rightarrow}10^{-5}$ & $10^{-2}\stackrel{1000}{\rightarrow} 10^{-3}\stackrel{2000}{\rightarrow}10^{-4} \stackrel{2000}{\rightarrow}10^{-5}$ \\
          SMP{\_}primal & $10^{-2}\stackrel{1000}{\rightarrow} 10^{-3}\stackrel{2000}{\rightarrow}10^{-4} \stackrel{2000}{\rightarrow}10^{-5}$ & $10^{-3}\stackrel{1000}{\rightarrow} 10^{-4}\stackrel{2000}{\rightarrow}10^{-5} \stackrel{2000}{\rightarrow}10^{-6}$ \\
          \hline
    \end{tabular}
    \caption{Piecewise constant learning rate schedules applied in Example \ref{example:vasicekconstrhighdim}.}
    \label{tab:tablevasicekconstrLR}
\end{table}
    
\end{appendices}

%\nocite{*}
%\bibliographystyle{unsrt}
\bibliographystyle{plain}
\bibliography{references}  

\end{document}